\definecolor{linkcolor}{HTML}{0D5661} 
\definecolor{white}{RGB}{255,255,255}
\definecolor{crimson}{RGB}{220,20,60}
\definecolor{blue}{RGB}{0,0,205}
\definecolor{myblue}{RGB}{80,80,160}
\definecolor{mygreen}{RGB}{80,160,80}
\colorlet{tcrimson}{white!40!crimson}
\colorlet{tblue}{white!40!blue}
\definecolor{porange}{HTML}{EE7F2D}
\definecolor{blue-plot}{HTML}{3A8FB7}
\def\*#1{\mathbf{#1}}
\newcommand{\argmax}{\mathop{\rm arg~max}\limits}
\newcommand{\argmin}{\mathop{\rm arg~min}\limits}
\newcommand{\E}{\mathbb{E}}
\newcommand{\R}{\mathbb{R}}
\newcommand{\hP}{\mathbb{P}}
\newcommand{\indep}{\!\perp\!\!\!\perp}
\theoremstyle{definition}
\newtheorem{proposition}{Proposition}
\newtheorem{lemma}{Lemma}
\newtheorem{assumption}{Assumption}
\numberwithin{equation}{section}
\begin{document}
\title{{\Large\textbf{
  Difference-in-Differences for Ordinal Outcomes:\\
Application to the Effect of Mass Shootings on Attitudes toward Gun Control}}\footnote{I
 am grateful to Matt Blackwell, Gary King, Kosuke Imai, Molly Offer-Westort, Ikuma Ogura, Shun Yamaya, members of Imai research group at Harvard
 (Soubhik Barari, Jake Brown, Naoki Egami, Shusei Eshima, Max Goplerud, June Hwang, Connor Jerzak,
  Shiro Kuriwaki, Santiago Olivella, Sun Young Park, Casey Petroff, Avery Schmidt, Sooahn Shin,
	Tyler Simko and Diana M. Stanescu) and participants of G3 Mini-Conference
 for comments and suggestions.
 The \textsf{R} package \texttt{orddid} is available for implementing the proposed methodology at \url{https://github.com/soichiroy/orddid}.}
}
\author{{\large Soichiro Yamauchi}\footnote{Ph.D. Candidate, Department of Government and Institute for Quantitative Social Science, Harvard University. Email: \href{syamauchi@g.harvard.edu}{syamauchi@g.harvard.edu}. URL: \url{https://soichiroy.github.io/}.}}
\date{{\normalsize This version: \today}\\
      {\normalsize First draft: October 29, 2019}}
\maketitle

\setstretch{1.2}
\begin{abstract}
  The difference-in-differences (DID) design is  widely used in observational studies to estimate the causal effect of a treatment when repeated observations over time are available.
  Yet, almost all existing methods assume linearity in the potential outcome (\emph{parallel trends} assumption) and target the additive effect.
  In social science research, however, many outcomes of interest are measured on an ordinal scale.
  This makes the linearity assumption inappropriate because the difference between two ordinal potential outcomes is not well defined.
  In this paper, I propose a method to draw causal inferences for ordinal outcomes under the DID design.
  Unlike existing methods,
  the proposed method utilizes the latent variable framework to handle the non-numeric nature of the outcome,
  enabling identification and estimation of causal effects
  based on the assumption on the quantile of the latent continuous variable.
  The paper also proposes an equivalence-based test to assess the plausibility of the key identification assumption when additional pre-treatment periods are available.
  The proposed method is applied to a study estimating the causal effect
  of mass shootings on the public's support for gun control.
  I find little evidence for a uniform shift toward pro-gun control policies as found in the previous study, but find that the effect is concentrated on left-leaning respondents who experienced the shooting for the first time in more than a decade. \\~\\
  \noindent \textbf{Keywords}: Difference-in-differences, gun control, ordinal outcome, panel data
\end{abstract}

\clearpage
\setstretch{1.35}

\section{Introduction}\label{sec:introduction}


The difference-in-differences (DID) design is widely used in observational studies with repeated observations over time \citep{card1994minimum,angrist2008mostly,lechner2011estimation}.
It allows scholars to identify the causal effect accounting for time-invariant unobserved confounders.
Although significant progress has been made to improve the original DID design in recent years \citep[e.g.,][]{abadie2005semiparametric,athey2006identification,qin2008empirical,lee2016generalized,
arkhangelsky2018synthetic,callaway2018difference,
li2019double,lu2019robust},
most of the existing methods attempt to identify and estimate the treatment effect under the linearity assumption \citep{abadie2005semiparametric}.
This parallel-trends assumption imposes a restriction on the potential outcomes
such that
the mean of the treatment and the control group has identical trends in the absence
of the treatment.
Therefore, the assumption is meaningful only when the difference between two potential outcomes is well defined (e.g., continous outcomes).

In social science research, however, many outcomes of interest are measured on an ordinal scale.
For example, in political science, scholars measure voter's ideology on a scale from ``very liberal'' to ``very conservative''
\citep[e.g.,][]{gay2002spirals, jessee2016can, mason2015disrespectfully}
or ask an attitude toward a policy item from ``strongly disagree'' to ``strongly agree''
\citep[e.g.,][]{grose2015explaining, frymer2020labor, likert1932technique}.
In fact, due to the limitation of space and other administrative reasons,
most of the questions asked in major social science surveys are ordinal.
When the outcome is measured on such a scale,
it is difficult to define the ``mean'' of non-numeric variables and further impose a restriction on their ``differences.''
In addition,
the usual definition of the treatment effect as the difference between two potential outcomes is not well defined \citep[e.g.,][]{volfovsky2015causal,lu2018treatment}, unless strong assumptions about the scale are imposed.
This implies that the standard DID cannot be  directly used for ordinal outcomes.

With a dearth of methods tailored for analyzing ordinal outcomes in the DID setting, scholars often treat them as continuous, dichotomize them with some threshold,
or employ the ordered logistic (probit) regression.
Each of the three approaches has its own shortcomings.
By treating the ordinal outcome as continuous, scholars implicitly assume that
categories are equally spaced. This assumption is not testable nor appropriate in many applications.
Although dichotomizing the outcome appears to enable scholars to adopt the standard DID method to estimate casual effects,
this strategy is not robust to different transformations (i.e., different choices of the dichotomization threshold).
Specifically, due to the non-linear nature of the ordinal outcome,
the validity of the parallel trends assumption under one transformation
does not guarantee the validity of the assumption under another transformation.
This is problematic because oftentimes scholars do not have substantive justification on which transformation should be employed.

In  this paper, I develop a methodology for estimating causal effects for ordinal outcomes with repeated observations.
Instead of assuming linearity on the actual outcome, I utilize the latent variable formation often used in the discrete choice models.
Because the assumptions are imposed on the entire distribution of the latent variables,
the proposed method does not require a transformation of the outcome variable.
Furthermore, the method enables researchers to estimate interpretable causal effects,
defined as a difference between two probabilities,
under a single set of assumptions.
I also propose a diagnostic tool when scholars have data from more than one pre-treatment period.
As in the standard DID for continuous outcomes, where scholars can check if the pre-treatment trends are parallel,
this diagnostic test allows researchers to formally confirm whether the assumption holds at least during the pre-treatment periods.

The method of this paper is closely connected to the literature on non-linear DID
\citep[e.g.,][]{athey2006identification, sofer2016negative, callaway2018quantile, glynn2019generalized}.
In particular, \cite{athey2006identification} consider an extension of their method
to binary and count outcomes, but they do not consider the case of ordinal outcomes.
Most importantly, because they impose minimal restrictions on the potential outcome, their method does not provide point identification even for an additive effect and the bound can be uninformative.
Instead, I impose a stronger assumption for the sake of identifying the non-additive causal effect,
which enables researches to estimate informative causal effects.



The proposed methodology is used to revisit a recent debate on the effect of mass shootings on attitudes towards gun control regulations \citep{barney2019reexamining,hartman2019accounting,newman2019mass}.
In their original and follow-up studies, Hartman and Newman
argue that a mass shooting increases people's support for stricter gun control policies
regardless of their partisanship.
They also argue that the effect is conditional on geographical context, such as
the safety of their neighborhood.
On the other hand, Barney and Schaffner argue that
there is no strong evidence to support the claim of Hartman and Newman.
They also find a polarizing effect of mass shootings where Democrats become more supportive of gun control
while Republicans become less supportive of gun control.

In Section~\ref{sec:application}, I re-analyze the data from the motivating empirical study using the proposed method.
Using two-wave panel data,
I find that mass shootings have an effect on those who experience mass shootings for the first time in a decade: they form a stronger opinion that supports gun control regulations,
while the result suggests little evidence for a uniform shift toward pro-gun control policies.
I also find that the effect is concentrated among Democrats including those who weakly identify themselves
as Democrat.
However, the effects among Republicans are not statistically distinguishable from zero. Thus, I find little evidence to support the polarizing effect of mass shootings.
Using three-wave panel data, which provides an additional pre-treatment time period, I assess the plausibility of the identification assumption.
The proposed testing procedure finds a supportive evidence for the validity of the assumption. 
Reanalysis of the three-wave panel, however, finds little evidence to support the claim
that mass shootings have any effect on the support for gun control.

The rest of the paper is organized as follows.
Section~\ref{sec:application-introduction} introduces the motivating application of the method.
In Section~\ref{sec:methodology}, the methodology is introduced
where I discuss identification assumptions and estimation strategy.
In Section~\ref{sec:application}, I apply the proposed method to the data described in Section~\ref{sec:application-introduction}.
Finally, I offer concluding remarks in Section~\ref{sec:conclusion}.

\section{The Effect of Mass Shootings on Public Support for Gun Control}\label{sec:application-introduction}


\subsection{The debate on the effect of mass shootings}
This section describes the design of observational studies that
investigate the effects of an event on an ordinal outcome measured over time.
\cite{newman2019mass} and the follow-up studies \citep{barney2019reexamining,hartman2019accounting} study the relationship between experiencing mass shootings and the attitude to gun control.
These studies use survey data with a two-wave and three-wave panel
to investigate whether living in close proximity to mass public shootings has a causal impact on people's attitude to gun control.
Respondents to the survey are considered as ``treated''
if at least one mass shooting occurs within 100 miles from their residential zip code.
To measure the attitude toward gun control, the authors used a response to the following survey question in the Cooperative Congressional Election Study (CCES) \citep{DVN/II2DB6_2018, DVN/TOE8I1_2015}:
\begin{quote}
  \texttt{
  In general, do you feel that laws covering the sale of firearms should be made more strict, less strict, or kept as they are?
  }

  \texttt{
  (0) Less Strict;
  (1) Kept As They Are;
  (2) More Strict.
  }
\end{quote}

The original studies utilize variations in treatment assignment over time to isolate the effect of mass shootings from the time trends and location effects.
Based on the analysis, Hartman and Newman find that living in near proximity to mass public shootings moves people to support stricter regulations on gun sales \citep{newman2019mass, hartman2019accounting}.
They also report that the effect does not vary by respondents' party affiliation.
In a follow-up study, \cite{barney2019reexamining} correct data and conduct additional analyses.
They conclude that the effect varies by which party people affiliate with
and that there is a polarizing effect of mass shootings.
Democrats become \emph{more} supportive of stricter gun control,
while Republican become \emph{less} supportive of regulations.

Throughout the debate, the authors utilize a variety of methodologies, such as
a ordered probit model with random effects and a linear fixed effect model,
to estimate the impact of mass shootings on the attitude (see Table~\ref{tab:summary-method} in Appendix~\ref{sec:detail-application}).
Although ``difference-in-differences'' is  mentioned in these studies,
discussions about the quantities of interest and assumptions required for the identification of those quantities are missing from the debate.
Without assessing the assumptions explicitly, it is challenging to conclude whether mass shootings have any effect on people's attitude.
This paper fills this gap by proposing a methodology
that enables researches to assess assumptions and reliably estimate causal effects.

\subsection{The methodological challenge}
For a causal effect to be reliably estimated,
assumptions must be imposed and evaluated.
In the following, I demonstrate that even if researchers are aware of identification assumptions,
the current practice is not well suited for the ordinal outcome.
Suppose that, following common practice, researchers dichotomize an ordinal outcome into a binary outcome by specifying some threshold.
The standard DID analysis is then applied on this transformed outcome.
In our running example, there are two possible ways to transform
the original outcome into a binary variable.
One way is to code \texttt{more-strict} category as one and \texttt{kept-as-they-are} and \texttt{less-strict} as zero; the other way is to treat \texttt{more-strict} and \texttt{kept-as-they-are}
as one and \texttt{less-strict} as zero.

\begin{figure}[htb]
  \centerline{\includegraphics[scale=0.8]{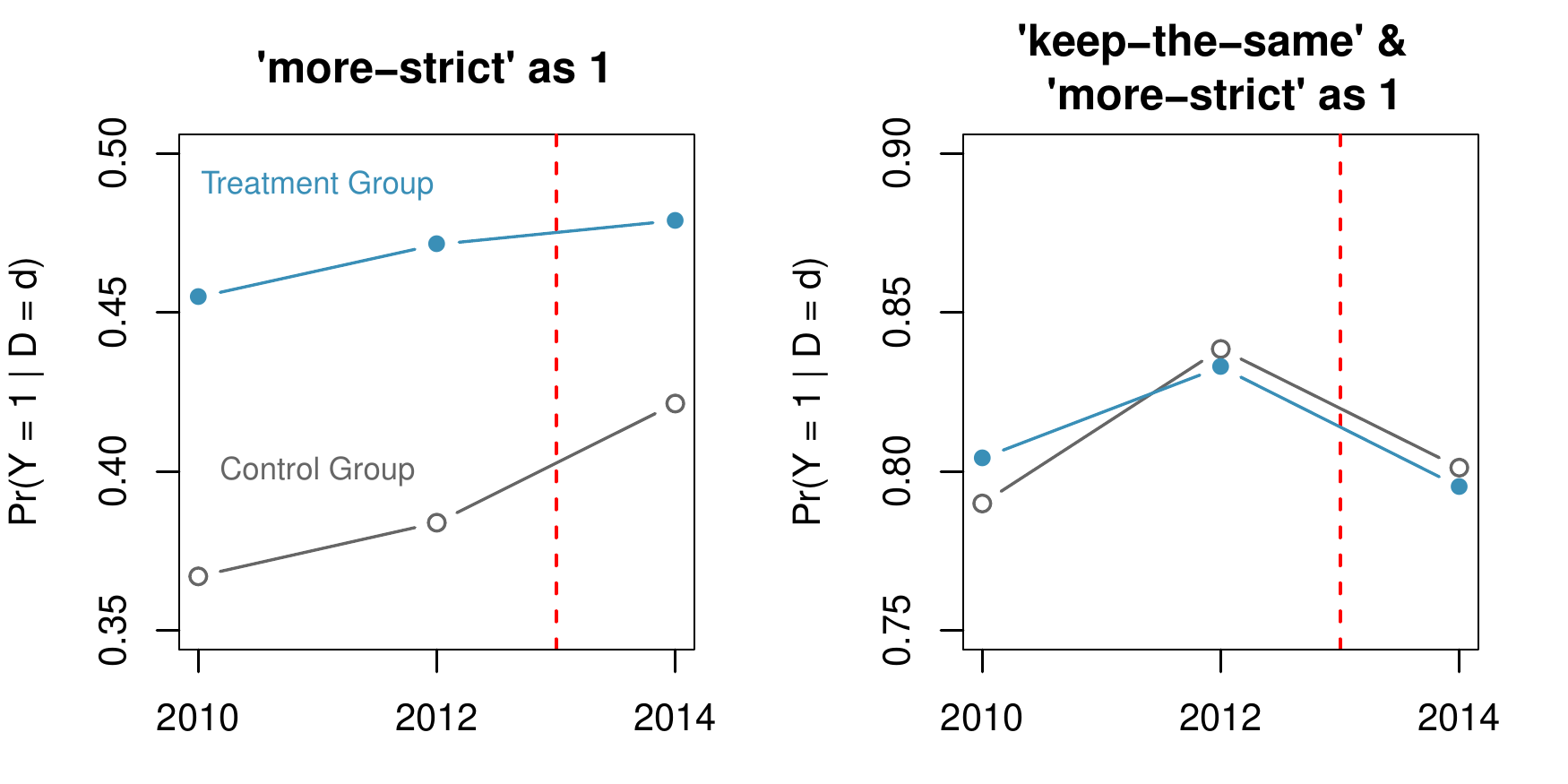}}
  \caption{Visual assessment of parallel trends assumption for the three-wave panel from the survey data.
  	Respondents who are not treated until 2012 are used for generating this plot.
    The lines with solid circles (\textcolor{blue-plot}{blue}) show trends for the treated group and the lines with hollow circles (\textcolor{gray}{gray}) show trends for the control group.
    Vertical dashed lines (\textcolor{red}{red}) show the timing of the treatment.
    Although the left panel appears to show that pre-treatment trends are parallel between the treatment and the control group, the right panel suggests that pre-treatment trends are not parallel.
    }
  \label{fig:pt-assess-gun}
\end{figure}

After transforming the outcome,
scholars can check if the pre-treatment trends are parallel for this new binary variable.
Inspecting pre-treatment trends is a routine often used in empirical studies to
justify the use of the DID design \citep{angrist2008mostly}.
Figure~\ref{fig:pt-assess-gun} shows trends for each transformed outcome using the three-wave panel of the survey
where I subset respondents who are not treated until 2012.
The panel on the left shows the first type of transformation
where only $\texttt{more-strict}$ category is coded as $1$ (denoted by $Y$ on the \textit{y}-axis).
We can see that the pre-treatment trends between the treatment group (\textcolor{blue-plot}{blue}) and the control group (\textcolor{gray}{gray}) appears to be parallel (denoted by $D$ on the \textit{y}-axis.).
Thus if a researcher transforms the original variable in this way,
she might conclude that the DID design is suitable for analyzing the data.
The panel on the right shows the pre-treatment trends
for the second type of transformation.
In this case, however,
the pre-treatment trends do not seem to be parallel:
the trends cross during the pre-treatment period.

Note that this observation is not specific to this application.
In Appendix~\ref{appendix:binarize-outcome}, I demonstrate that it is trivial to construct an example that satisfying parallel trends in one transformation does not imply the parallel trends in another transformation.

It is often unclear \textit{ex ante} which threshold should be chosen from a substantive point of view.
Therefore, it is unfortunate that the validity of the design appearently depends on how the variable is transformed.
Although the running example only has three categories, the problem exacerbates when scholars need to analyze an outcome that has a larger number of categories.

\section{The Proposed Methodology}\label{sec:methodology}


\subsection{The setup}
Let $Y_{it} \in \{0, \ldots, J-1\} \equiv \mathcal{J}$ denote the observed outcome measured on an ordinal scale with $J$ categories ($J \geq 3$)
for unit $i \in \{1, \ldots, n\}$ and time $t \in \{0, 1\}$.
The binary treatment, denoted by $D_{i} \in \{0,1\}$, is assigned after $Y_{i0}$ is observed but before time $t = 1$.
We use the potential outcome notation to denote the counterfactual outcome, $Y_{it}(d)$ for $d \in \{0, 1\}$.
For example, $Y_{i1}(0)$ is an attitude toward gun control regulations that would realize in the post-period if a respondent did not experience a mass shooting (i.e., the control condition).

In many applications, scholars are interested in estimating the distributional treatment effect.
In this paper, I focus on the treatment effect on the treated.
Specifically, the effect $\zeta_{j}$ is defined as the difference in probabilities of choosing category $j$ under two conditions,
\begin{equation}\label{eq:dist-effect}
\zeta_{j} = \hP(Y_{i1}(1) = j \mid D_{i} = 1) - \hP(Y_{i1}(0) = j \mid D_{i} = 1).
\end{equation}
for $j \in \mathcal{J}$.
In our application, $\zeta_{2}$ is the difference in probabilities that those treated prefer more strict gun control between the treated and the control conditions.
Thus, observing $\zeta_{2} > 0$ implies that the mass shootings make people prefer stricter policies on gun control.
Similarly, $\zeta_{0}$ is the effect of the treatment on \texttt{less-strict} category
and $\zeta_{0} > 0$ implies that incidents turn people to prefer less strict regulations.

When the number of categories is large, it is sometimes useful to estimate the cumulative effect $\Delta_{j}$, which is defined as a difference in probabilities of choosing $j$ or larger
categories under the two conditions,
\begin{equation}\label{eq:dist-cum-effect}
  \Delta_{j} = \hP(Y_{i1}(1) \geq j \mid D_{i} = 1) - \hP(Y_{i1}(0) \geq j \mid D_{i} = 1)
\end{equation}
for $j \in \mathcal{J}\backslash \{0\}$.
Note that $\Delta_{j} = \sum^{J-1}_{\ell = j}\zeta_{\ell}$ by construction,
and thus it is sufficient to consider the identification of $\zeta_{j}$.

The cumulative effect is also useful to connect the approach that dichotomizes ordinal outcomes
to the proposed method.
From the above definition, we can see that the standard DID based on the dichotomized outcome at threshold $j$ identifies $\Delta_{j}$.
This is because $\Pr(Y_{i1}(d) \geq j \mid D_{i} = 1) = \E[\mathbf{1}\{Y_{i1}(d) \geq j\} \mid D_{i} = 1] \equiv \E[\widetilde{Y}_{i1}(d) \mid D_{i} = 1]$
where $\widetilde{Y}_{i1}(d)$ is the dichotomized potential outcome with threshold $j$.
This means that the standard DID applied to the dichotomized outcome can estimate only one of $J-1$ possible quantities of interest.
Furthermore, if one wishes to estimate all possible $\Delta_{j}$'s by changing the threshold, it requires $J-1$ distinct identification assumptions.
As we saw in our motivating example, however, satisfying the parallel trends assumption for $\Delta_{j}$ does not necessarily imply that the assumption for $\Delta_{j'}$ is satisfied.


\subsection{Identification}
Typically, we do not have a good sense of which estimand is best suited for answering the substantive question.
Therefore, it is natural that we attempt to identify and estimate $\zeta_{j}$ for all $j \in \mathcal{J}$
from the observed data.
The goal of this section is to establish the identification of $\bm{\zeta} = (\zeta_{0}, \ldots, \zeta_{J-1})^{\top}$ with a single set of assumptions.

To compute the quantity defined in Equation~\eqref{eq:dist-effect},
we need the marginal distribution of $Y_{i1}(1)$ and $Y_{i1}(0)$ for the treated.
While we observe $Y_{i1}(1)$ for $D_{i} = 1$ because $Y_{i1} = D_{i}Y_{i1}(1) + (1 - D_{i})Y_{i1}(0)$,
we need to impose additional assumptions to identify the distribution of $Y_{i1}(0)$ for $D_{i} = 1$.
Following \cite{athey2006identification}, I omit the subscript $i$ for units and denote
$Y_{dt} \sim Y_{it}(0)\mid D_{i}= d$
where $A \sim B$ indicates $A$ and $B$ are equivalent in distribution.
$Y_{dt}$ denotes the potential outcome under the control condition
at time $t$ for group defined by $D_{i} = d$.
While we observe $Y_{00}$, $Y_{01}$ and $Y_{10}$,
the counterfactual outcome $Y_{11} \sim Y_{i1}(0) \mid D_{i} = 1$ is  what we do not observe in the data.
In our example, $Y_{11}$ is the potential attitude to gun control that we would have observed if
those respondents who have experienced mass shootings would have not been exposed to the event.

I first impose a structure on the potential outcome.
Specifically, I assume that the observed categorical outcome follows the index model,
which means that there is a latent variable behind $Y_{dt}$
and that the categorical outcome is defined by a simple thresholding rule on the latent variable.
\begin{assumption}[Index model]\label{assumption:model}
Assume that the potential outcomes follow the index model
such that
there exists a latent variable $Y^{*}_{dt} \in \R$ and
\begin{equation}
  Y_{dt}  =
  \begin{cases}
  0      &\text{if}\quad \kappa_{0} \leq Y^{*}_{dt} < \kappa_{1}   \\
  j      &\text{if}\quad \kappa_{j} \leq Y^{*}_{dt} < \kappa_{j+1} \\
  J-1    &\text{if}\quad \kappa_{J-1} \leq Y^{*}_{dt} \leq \kappa_{J}\\
  \end{cases}
\end{equation}
where $\{\kappa_{j}\}^{J}_{j=0}$ are a set of cutoffs
with $\kappa_{0} = -\infty$ and $\kappa_{J} = \infty$.
\end{assumption}

Assumption~\ref{assumption:model} says that the potential outcome
defined on an ordinal scale $Y_{dt}$ is a function of another potential outcome
defined on a continuous space $Y^{*}_{dt}$.
In the application, $Y^{*}_{dt}$ can be considered as the underlying intensity of one's attitude toward gun control policies
where larger value of $Y^{*}_{dt}$ corresponds to a support for stricter gun control.
The assumption allows us to handle the outcome on a continuous space through $Y^{*}_{dt}$
instead of directly working on a discrete space.
Note that $\kappa_{j}$'s are constants assumed to be fixed and they do not depend on group ($d$) nor time ($t$).

Different from the additive effect, the distributional treatment effect $\zeta_{j}$ requires that the entire  marginal distribution of the potential outcome is identified.
For that,
I further impose a distributional assumption on $Y^{*}_{dt}$ in Assumption~\ref{assumption:distribution}.
\begin{assumption}[Location-scale family assumption]\label{assumption:distribution}
Let $U$ denote a continuously distributed random variable with mean $0$ and variance $1$
that belongs to a parametric family.
We assume that $Y^{*}_{dt}$ belongs to the location-scale family,
that is, it can  be written as
\begin{equation}
Y^{*}_{dt} \sim \mu_{dt} + \sigma_{dt}U
\end{equation}
where $\mu_{dt}$ is the location and $\sigma_{dt}$ is the scale parameter.
\end{assumption}

Assumption~\ref{assumption:distribution} specifies the distribution of the latent utilities.
It assumes that each marginal distribution belongs to the location-scale family distribution with time and group specific location and scale parameter.
This implies that the distribution of the potential outcomes are different up to mean and the scale.
Note that the joint distribution of the latent utilities are left unspecified,
so units can have correlated latent utilities over time.
Although this is a parametric assumption (i.e., the distribution of $U$ should be known),
the location-scale family encompasses a large class of parametric distributions (e.g., the normal distribution, the logistic distribution or the \textit{t}-distribution, etc).

Finally, I impose a structure on the relationship between latent variables $Y^{*}_{dt}$.
This allows us to map what we observe in the control group over time
to what would have happened to the treated group if it was not treated.
I first start with a restrictive assumption that is similar to the standard DID design.
It is possible to assume that the parallel trends hold on the latent outcome, that is,
\begin{equation}
\E[Y^{*}_{i1} \mid D_{i} = 1]  - \E[Y^{*}_{i0}\mid D_{i} = 1]
=
\E[Y^{*}_{i1} \mid D_{i} = 0]  - \E[Y^{*}_{i0}\mid D_{i} = 0].
\end{equation}
Then, the mean of the counterfactual latent outcome $Y^{*}_{11}$ is uniquely identified as
\begin{equation*}
\mu_{11} = \mu_{10} + \mu_{01} - \mu_{00}.
\end{equation*}

However, this approach is restrictive, because it requires an additional assumption
that the variance is constant across time and groups, that is, $\sigma_{dt} = \sigma$ for all $d$ and $t$; otherwise we cannot identify the entire distribution of the latent outcome $Y^{*}_{11}$.
This constant variance assumption is strong because it only allows the unidirectional change of choice probabilities.

Therefore, I impose a different assumption from the standard parallel trends assumption.
Instead assuming the mean shift, the assumption is imposed on the entire distributions,
which is originally introduced by \cite{athey2006identification} \citep[also see][]{sofer2016negative}.
Specifically, I assume that the shift in the distribution across time are constant between the treatment and the control groups.
Figure~\ref{fig:CiC-assumption} graphically illustrates the assumption.
The key part of this assumption is that
the vertical arrows in the two graphs should be the same length.
In other words, $q_{d}(v) - v$ captures the trend in the distribution (i.e., how much $Y^{*}_{dt}$ ``shifts'' between $t = 0$ and $t = 1$) and the assumption says that the ``shift'' is identical across two groups.
This means that for each choice of $v$, the corresponding value of $q_{d}(v)$ (``shift'') should be the same for $d = 0, 1$.
\begin{figure}[htb]
  \centerline{\includegraphics[width=\textwidth]{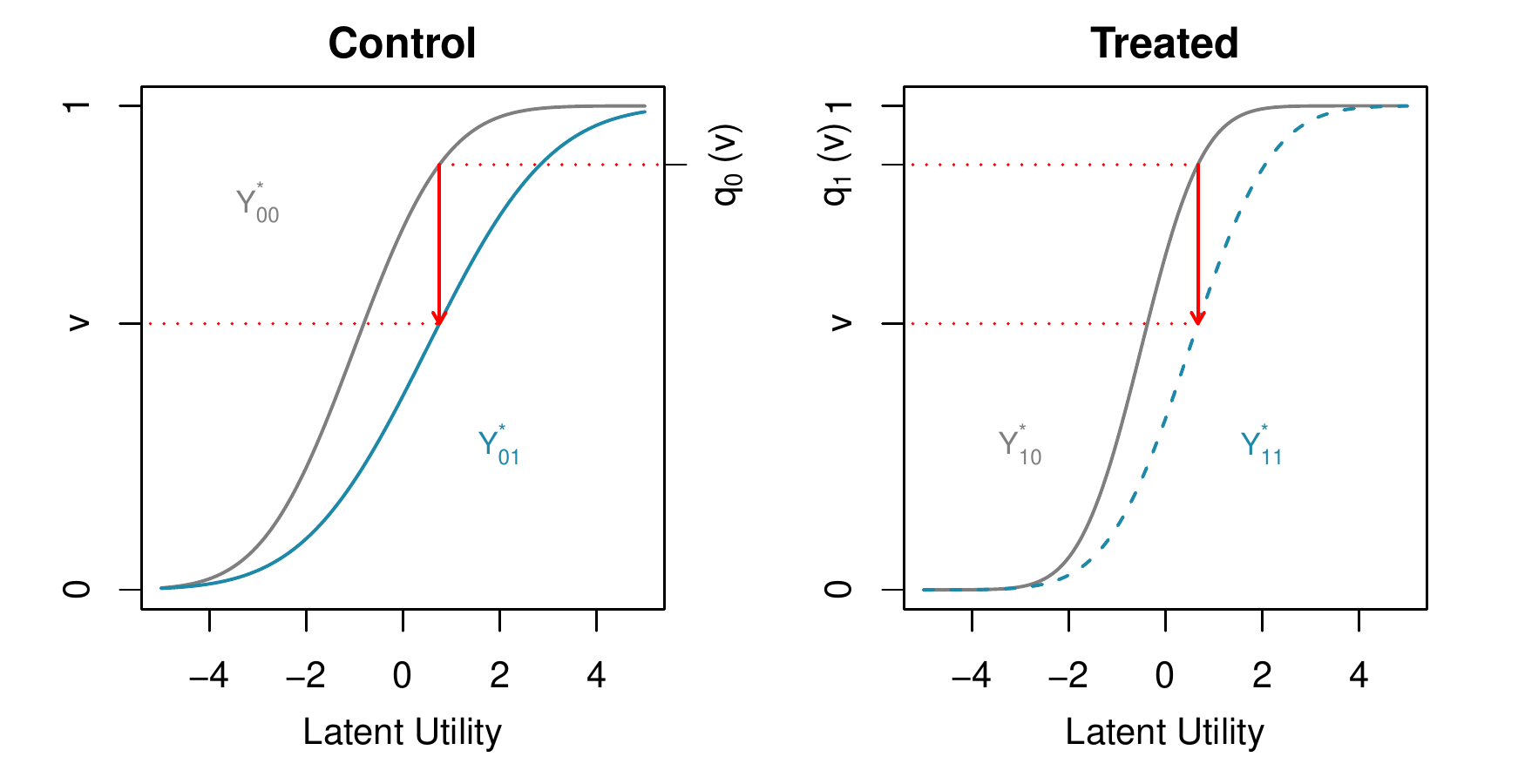}}
  \caption{Graphical illustration of Assumption~\ref{assumption:quantile}. Left (right): cumulative distribution functions of the latent utilities $Y^{*}_{dt}$ under the control (treatment) condition.  Blue (gray) lines indicate the distribution for time $t = 1$ ($t = 0$). Dashed line on the right panel is the distribution of counterfactual outcome $Y^{*}_{11}\sim Y^{*}_{i1}(0)|D_{i} = 1$.
  The key assumption is that the length of the vertical arrow (\textcolor{red}{red}) is the same between the two panels for all range of $v$. This allows us to recover the shape of the dashed line based on latent utility distributions for the observed outcomes (i.e., solid lines).}
  \label{fig:CiC-assumption}
\end{figure}

Assumption~\ref{assumption:quantile} formally states the assumption.
\begin{assumption}[Distributional parallel trends \citep{athey2006identification}]\label{assumption:quantile}
Let $F_{Y^{*}_{dt}}(y) = \hP(Y^{*}_{dt} \leq y)$ be the cumulative distribution function (CDF) of $Y^{*}_{dt}$
and define $q_{d}(v) = F_{Y^{*}_{d0}} \circ F^{-1}_{Y^{*}_{d1}}(v)$.
Then, we assume that for all $v \in [0, 1]$,
\begin{equation}
q_{1}(v) = q_{0}(v)
\end{equation}
\end{assumption}
Assumption~\ref{assumption:quantile} imposes a restriction on the relationship between
the pre-treatment latent outcome $Y^{*}_{10}$ and the counterfactual latent outcome $Y^{*}_{11}$,
based on the relationship between two latent variables in the control group.
Note that by construction $q_{1}(v) - q_{0}(v) = 0$ for $v = 0,1$
because CDFs should agree at the end of the support, $\lim_{y\to\pm\infty}F_{Y^{*}_{d0}}(y) = \lim_{y\to\pm\infty}F_{Y^{*}_{d1}}(y)$.

Assumption~\ref{assumption:model}, \ref{assumption:distribution}
and \ref{assumption:quantile}
identify the distribution of the counterfactual outcome.
Proposition~\ref{proposition:identification} presents the formal result.

\begin{proposition}[Identification of the Counterfactual Distribution]\label{proposition:identification}
Under Assumption~\ref{assumption:model}, \ref{assumption:distribution},
and  \ref{assumption:quantile},
the distribution of the counterfactual latent utility $Y^{*}_{11}$ is identified as
\begin{equation}
  Y^{*}_{11} \sim \mu_{11} + \sigma_{11}U
\end{equation}
where
\begin{align*}
\mu_{11} = \mu_{10} + \frac{\mu_{01} - \mu_{00}}{\sigma_{00}/\sigma_{10}}
\quad\text{and}\quad
\sigma_{11} = \frac{\sigma_{10}\sigma_{01}}{\sigma_{00}}.
\end{align*}
And thus, the distribution of the potential outcome is identified as
\begin{equation*}
\hP(Y_{i1}(0) = j\mid D_{i} = 1) =
F_{U}\bigg(\frac{\kappa_{j+1} - \mu_{11}}{\sigma_{11}}\bigg)
-
F_{U}\bigg(\frac{\kappa_{j} - \mu_{11}}{\sigma_{11}}\bigg)
\end{equation*}
for $j = 0, \ldots, J - 1$,
where $F_{U}(u) = \hP(U \leq u)$ is the CDF of $U$.
\end{proposition}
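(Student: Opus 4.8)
The plan is to collapse the distributional parallel-trends condition in Assumption~\ref{assumption:quantile} into an identity between two affine functions, and then read off $\mu_{11}$ and $\sigma_{11}$ by matching coefficients. First I would use the location-scale structure of Assumption~\ref{assumption:distribution} to write every CDF in closed form: since $Y^{*}_{dt} \sim \mu_{dt} + \sigma_{dt}U$, we have $F_{Y^{*}_{dt}}(y) = F_{U}\!\left((y - \mu_{dt})/\sigma_{dt}\right)$, and because $F_{U}$ is strictly increasing on its support the quantile function is $F^{-1}_{Y^{*}_{d1}}(v) = \mu_{d1} + \sigma_{d1}F^{-1}_{U}(v)$.

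Next I would substitute these expressions into $q_{d}(v) = F_{Y^{*}_{d0}}\circ F^{-1}_{Y^{*}_{d1}}(v)$ and change variables to $w = F^{-1}_{U}(v)$, which gives
\[
q_{d}(v) = F_{U}\!\left(\frac{\mu_{d1} - \mu_{d0}}{\sigma_{d0}} + \frac{\sigma_{d1}}{\sigma_{d0}}\,w\right).
\]
Thus each $q_{d}$ is $F_{U}$ composed with an affine map of $w$ whose slope is $\sigma_{d1}/\sigma_{d0}$ and whose intercept is $(\mu_{d1}-\mu_{d0})/\sigma_{d0}$.

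Then I would impose $q_{1}(v) = q_{0}(v)$ from Assumption~\ref{assumption:quantile} and apply $F^{-1}_{U}$ to both sides, which reduces the equality to an identity between the two affine functions of $w$. As $v$ ranges over $(0,1)$, $w = F^{-1}_{U}(v)$ sweeps out the whole (nondegenerate) support of $U$, so matching slopes forces $\sigma_{11}/\sigma_{10} = \sigma_{01}/\sigma_{00}$, i.e.\ $\sigma_{11} = \sigma_{10}\sigma_{01}/\sigma_{00}$, and matching intercepts forces $(\mu_{11}-\mu_{10})/\sigma_{10} = (\mu_{01}-\mu_{00})/\sigma_{00}$, i.e.\ $\mu_{11} = \mu_{10} + (\mu_{01}-\mu_{00})/(\sigma_{00}/\sigma_{10})$. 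This identifies $Y^{*}_{11}\sim \mu_{11} + \sigma_{11}U$. The statement for the observed categories then follows immediately by feeding this distribution into the thresholding rule of Assumption~\ref{assumption:model}: for each $j$,
\[
\hP(Y_{i1}(0) = j \mid D_{i} = 1) = \hP\!\left(\kappa_{j} \leq \mu_{11} + \sigma_{11}U < \kappa_{j+1}\right) = F_{U}\!\left(\frac{\kappa_{j+1} - \mu_{11}}{\sigma_{11}}\right) - F_{U}\!\left(\frac{\kappa_{j} - \mu_{11}}{\sigma_{11}}\right),
\]
after dividing through by $\sigma_{11} > 0$.

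The step that needs the most care is the passage from $q_{1} = q_{0}$ to the affine identity, and from there to coefficient matching. This requires $F_{U}$ to be strictly increasing (so that it is injective and its inverse is well defined, which is already implicit in Assumption~\ref{assumption:quantile} through the use of $F^{-1}_{Y^{*}_{d1}}$) and its support to be a nondegenerate interval (so that an affine function is pinned down by its restriction there). I would state these regularity properties explicitly as part of the location-scale family assumption; everything else reduces to routine substitution and algebra.
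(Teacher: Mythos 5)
Your proposal is correct and takes essentially the same route as the paper's proof: both use the location-scale structure to write $q_{d}(v) = F_{U}\bigl((\mu_{d1}-\mu_{d0})/\sigma_{d0} + (\sigma_{d1}/\sigma_{d0})F^{-1}_{U}(v)\bigr)$ and then exploit $q_{1}=q_{0}$ together with the invertibility of $F_{U}$. The only cosmetic difference is the direction of the algebra: the paper composes $F^{-1}_{Y^{*}_{10}}\circ q_{0}$ to obtain $F^{-1}_{Y^{*}_{11}}(v) = \mu_{11} + \sigma_{11}F^{-1}_{U}(v)$ directly, so the location-scale form of $Y^{*}_{11}$ emerges as a conclusion, whereas you invoke Assumption~\ref{assumption:distribution} at $(d,t)=(1,1)$ and match affine coefficients --- equally valid, and the strict-monotonicity and nondegenerate-support caveats you flag for the coefficient-matching step are exactly the right ones.
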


A proof is in Appendix~\ref{appendix:proofs}.
Proposition~\ref{proposition:identification} says that
the location and the scale of the counterfactual latent outcome $Y^{*}_{11}$
are uniquely determined by parameters of observed outcomes.
This implies that we can recover the distribution of the counterfactual outcome $Y_{11} \sim Y_{i1}(0) | D_{i} = 1$ (i.e., the potential outcome under the control condition for the treated unit at time $t = 1$) using
parameters estimated from the observed data, $Y_{00}$, $Y_{01}$ and $Y_{10}$.
For example, if we assume that $U$ follows the standard normal distribution,
we have that $Y^{*}_{11}$ follows the normal distribution with mean $\mu_{11}$ and
variance $\sigma^{2}_{11}$.




\subsection{Estimation}\label{subsec:estimation}
The identification result in the previous section provides
a guidance on how we can estimate the causal effect from the observed data.
Let $\bm{\theta}_{dt} = (\mu_{dt}, \sigma_{dt})^{\top}$ denote a vector of parameters that characterize
the distribution of the latent utility $Y^{*}_{dt}$.
I take a two-step approach to estimate causal quantity $\zeta_{j}$ defined in Equation~\ref{eq:dist-effect}
for all $j \in \mathcal{J}$.
In the first step, I estimate parameters for observed outcomes,
that is, $\bm{\theta}_{00}$, $\bm{\theta}_{01}$ and $\bm{\theta}_{10}$.
Based on the estimate of these parameters, causal effects are estimated in the second step.

In this and the following section, I focus on a case where $Y^{*}_{dt}$ follows the normal distribution,
that is $U \sim \mathcal{N}(0, 1)$.
Then, by Assumption~\ref{assumption:distribution},
the observed outcomes $Y_{00}$, $Y_{01}$ and $Y_{10}$
follow the ordered probit model.
Thus, parameters $\bm{\theta} = (\bm{\theta}^{\top}_{00}, \bm{\theta}^{\top}_{01}, \bm{\theta}^{\top}_{10}, \bm{\kappa}^{\top})^{\top}$ can be estimated via the maximum likelihood.
  \begin{equation*}
  \widehat{\bm{\theta}} =
  \argmin_{\bm{\mu}, \bm{\sigma}, \bm{\kappa}} \sum^{n}_{i=1}\sum_{t\in \{0, 1\}}\sum_{j \in \mathcal{J}}\mathbf{1}\{Y_{it} = j, t D_{i} = 0\} \log \Big\{
  \Phi[(\kappa_{j+1} - \mu_{D_{i}, t})/\sigma_{D_{i}, t}]
  - \Phi[(\kappa_{j} - \mu_{D_{i}, t})/\sigma_{D_{i}, t}]
  \Big\}
  \end{equation*}
where $\mathbf{1}\{\cdot\}$ is an indicator function that takes $1$ if the argument is true and takes $0$ otherwise,
and $\Phi(\cdot)$ is the cumulative distribution function of the standard normal distribution.
Different from the standard ordered probit specification,
I fix two cutoffs $\kappa_{1}$ and $\kappa_{2}$ (recall that $\kappa_{0} = -\infty$ and $\kappa_{J} = \infty$).
This allows us to estimate the variance component in addition to means \cite[Lemma~\ref{lemma:latent-identification} in Appendix~\ref{appendix:proofs}; also see for example][Chapter 8]{jackman2009bayesian}.
Note that the choice of $\kappa$ is not consequential in that, the causal effect estimate $\widehat{\bm{\zeta}}$ is invariant to the choice of the cutoffs (Lemma~\ref{lemma:invariance-cutoff} in Appendix~\ref{appendix:proofs}).
This is because the identification assumption imposes a structure on the quantile scale,
which is invariant to the scale of the latent variables,
while different choices of cutoffs only affect the location and the scale (i.e., $\mu$ and $\sigma$) of the latent variables.

We then estimate the parameter for the counterfactual distribution $\bm{\theta}_{11} = (\mu_{11}, \sigma_{11})^{\top}$ by the plug-in estimator based on the first stage,
\begin{align}\label{eq:second-stage-theta}
\widehat{\mu}_{11} = \widehat{\mu}_{10} + (\widehat{\mu}_{01} - \widehat{\mu}_{00}) / (\widehat{\sigma}_{00}/\widehat{\sigma}_{10}),\quad
\text{and}\quad
\widehat{\sigma}_{11} = (\widehat{\sigma}_{10}\widehat{\sigma}_{01}) / \widehat{\sigma}_{00}.
\end{align}
Since the causal effect is a function of $\bm{\theta}_{11}$,
the estimator for the causal effect is therefore given by
\begin{equation}
\widehat{\zeta}_{j} = \frac{1}{n_{1}}\sum^{n}_{i=1}D_{i}\bm{1}\{Y_{i1} = j\}
- \Big\{
\Phi[(\kappa_{j+1} - \widehat{\mu}_{11}) / \widehat{\sigma}_{11}] -
\Phi[(\kappa_{j} - \widehat{\mu}_{11}) / \widehat{\sigma}_{11}]
\Big\}
\end{equation}
where $n_{1} = \sum^{n}_{i=1}D_{i}$,
and then $\widehat{\Delta}_{j} = \sum^{J-1}_{\ell = j}\widehat{\zeta}_{j}$.
Note that the first term of the right-hand side is a nonparametric estimator of
$\hP(Y_{i1}(1) = j \mid D_{i} = 1)$
 because this quantity is identified from the data without any assumptions.
The second term on the right-hand side is the counterfactual distribution identified by the assumptions (Proposition~\ref{proposition:identification}).

Lemma~\ref{lemma:asymptotic-normality} in Appendix~\ref{appendix:proofs}
establishes the $\sqrt{n}$ consistency of the estimator $\widehat{\bm{\zeta}}$,
whose sampling variance can be derived using the delta method under the independence
assumption.
In practice, however, the block bootstrap can be used to estimate the variance
when outcomes are correlated across time or due to clustering.


\subsection{Assessing the distributional parallel trends assumption}\label{sec:diagnostics}

In the standard DID design, additional pre-treatment periods provide an opportunity
to assess the parallel trends assumption by checking the pre-treatment trends \citep{angrist2008mostly,egami2019how}.
Although it is not a direct test of the assumption,
observing the parallel trends in the pre-treatment periods suggests that
the assumption is more likely to be plausible.
With a similar logic, we can assess the validity of distributional parallel trends assumption (Assumption~\ref{assumption:quantile}).
Specifically, we would expect that if the distributional parallel trends holds for the pre-treatment periods,
it is more reasonable to claim that the assumption holds in the post-period.
Thus, we assess the validity of Assumption~\ref{assumption:quantile} by testing if the similar condition holds for the pre-treatment periods.

\paragraph{The proposed testing procedure}
Suppose that we now observe the outcome for three time periods, $Y_{i0}$, $Y_{i1}$ and $Y_{i2}$
where $Y_{i2}$ is the post-treatment outcome and $Y_{i0}$ and $Y_{i1}$ are the pre-treatment outcomes.
The treatment is administered after time $t = 1$ in this setup,
and thus we have $Y_{it}(0) = Y^{\text{obs}}_{it}$ for $t = 0, 1$ regardless of the treatment status.
This means that observed outcome before the treatment assignment is the same as the potential outcome under the control condition for both treatment and control groups.

Let $\tilde{q}_{d}(v) = \Phi(\mu_{d0}, \sigma_{d0}) \circ \Phi^{-1}(v: \mu_{d1}, \sigma_{d1})$
denote the pre-treatment along of $q_{d}(v)$ defined in Assumption~\ref{assumption:quantile},
where I assume that $U \sim \mathcal{N}(0, 1)$.
Recall that $q_{d}(v)$ captures shift of distributions over time evaluated at quantile $v$. The assumption requires that  two functions are identical on the unit interval, that is, $q_{1}(v) = q_{0}(v)$ for all $v$.
Therefore, we wish to statistically test if $\tilde{q}_{1}(v) = \tilde{q}_{0}(v)$ holds for all $v \in [0, 1]$
using the data from the pre-treatment periods.


Intuitively, we can check the equivalence of two functions $\tilde{q}_{1}$ and $\tilde{q}_{0}$
by assessing the maximum deviation between two functions,
$t_{\max} = \max_{v \in [0,1]} | \tilde{q}_{1}(v) - \tilde{q}_{0}(v)|$.
If this metric is ``small'', we may conclude that $\tilde{q}_{1} = \tilde{q}_{0}$.
Formally, with some threshold $\delta > 0$, we wish to test the following hypotheses:
\begin{align*}
H_{0}\colon \max_{v \in [0,1]} | \tilde{q}_{1}(v) - \tilde{q}_{0}(v)| > \delta
\quad \text{and}\quad
H_{1}\colon  \max_{v \in [0,1]} | \tilde{q}_{1}(v) - \tilde{q}_{0}(v)| \leq \delta
\end{align*}
where $H_{0}$ says that two functions are not equivalent (i.e., large deviation).
Rejecting the null implies that the data supports $H_{1}$ of equivalence
which is what we want to demonstrate.
For now, I assume that researchers know how to choose an appropriate value of $\delta$ based on substantive knowledge.
I will discuss how to calibrate this equivalence threshold in the below.
We can see that the null hypothesis can be written as a union of two hypotheses without absolute values, $H_{0} = H^{+}_{0} \cup H^{-}_{0}$
where
\begin{align*}
H^{+}_{0}\colon \max_{v \in [0,1]} \{ \tilde{q}_{1}(v) - \tilde{q}_{0}(v) \}> \delta
\quad \text{and}\quad
H^{-}_{0}\colon \min_{v \in [0, 1]} \{ \tilde{q}_{1}(v) - \tilde{q}_{0}(v) \} < -\delta.
\end{align*}
This decomposition implies that we can conduct two one-sided tests to determine if we reject the original null $H_{0}$ or not.
In other words, we conclude that $H_{0}$ is false if we reject \textit{both} $H^{+}_{0}$ and $H^{-}_{0}$.

Now, suppose that we construct a $100(1 - \alpha)$\% point-wise confidence interval $[\widehat{L}_{1-\alpha}(v), \widehat{U}_{1-\alpha}(v)]$ for $t(v) \equiv \tilde{q}_{1}(v) - \tilde{q}_{0}(v)$
at each $v$.
The detail of how to construct the confidence interval is presented in Lemma~\ref{lemma:convergence-q1-q0} and \ref{lemma:validity-confidence-set} in Appendix~\ref{appendix:proofs}.
Then, by the one-to-one relationship between the test and the confidence set,
we reject $H^{+}_{0}$ if and only if the upper confidence interval is less than $\delta$, that is
\begin{equation*}
\text{reject } H^{+}_{0} \text{ at $\alpha$ level}\iff
\max_{v \in [0,1]} \widehat{U}_{1-\alpha}(v) < \delta.
\end{equation*}
By the similar argument, we reject $H^{-}_{0}$ at $\alpha$ level if and only if $\min_{v \in [0,1]} \widehat{L}_{1-\alpha}(v) > -\delta$.

%

Proposition~\ref{proposition:validity-test} shows that the proposed procedure is in fact asymptotically level $\alpha$ test, that is, it rejects the null of non-equivalence with probability less than $\alpha$ when the null is true.
\begin{proposition}[Validity of the Testing Procedure]\label{proposition:validity-test}
For a given choice of the equivalence threshold $\delta$ and the level of a test $\alpha$, the testing procedure asymptotically controls the type I error,
that is, under the null $H_{0}\colon t_{\max} \geq \delta$,
\begin{equation*}
\sup_{t\colon \delta \leq |t| < 1}\hP\Big(
\Big\{\max_{v \in [0,1]}\widehat{U}_{1-\alpha}(v) < \delta\Big\} \cap
\Big\{\min_{v \in [0,1]}\widehat{L}_{1-\alpha}(v) \geq - \delta \Big\}
\Big) \leq \alpha
\end{equation*}
as $n \to \infty$.
\end{proposition}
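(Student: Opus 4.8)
The plan is to prove the result via the intersection--union (IUT) principle, which is the natural tool for an equivalence null written as the union $H_0 = H_0^+ \cup H_0^-$. Write the two one-sided rejection events as
\[
R^+ = \Big\{\max_{v\in[0,1]} \widehat U_{1-\alpha}(v) < \delta\Big\}, \qquad R^- = \Big\{\min_{v\in[0,1]} \widehat L_{1-\alpha}(v) \geq -\delta\Big\},
\]
so that the combined procedure rejects $H_0$ exactly on $R = R^+ \cap R^-$, and the quantity to bound is $\sup_{H_0}\hP(R)$.

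First I would reduce the composite null to its two pieces. Under $H_0$ we have $t_{\max} = \max_v |t(v)| \geq \delta$, so there is a point $v^\star \in [0,1]$ with $|t(v^\star)| \geq \delta$; depending on the sign of $t(v^\star)$ this forces either $\max_v t(v) \geq \delta$ (so $H_0^+$ holds) or $\min_v t(v) \leq -\delta$ (so $H_0^-$ holds). Hence at least one of $H_0^+, H_0^-$ is true throughout the null. Since $R \subseteq R^+$ and $R \subseteq R^-$, we have $\hP(R) \leq \min\{\hP(R^+), \hP(R^-)\}$, so it suffices to show that $R^+$ is an asymptotically level-$\alpha$ rejection region whenever $H_0^+$ holds (and symmetrically for $R^-$): whichever sub-null is true delivers a bound of $\alpha$, and the IUT conclusion follows.

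Second I would bound $\hP(R^+)$ under $H_0^+$. Fix $v^\star$ with $t(v^\star) \geq \delta$. On $R^+$ we have $\widehat U_{1-\alpha}(v^\star) \leq \max_v \widehat U_{1-\alpha}(v) < \delta \leq t(v^\star)$, so $R^+ \subseteq \{\, t(v^\star) > \widehat U_{1-\alpha}(v^\star)\,\}$, i.e.\ the event that the upper confidence bound at $v^\star$ falls strictly below the true value. By the asymptotic normality of $\widehat t(v) = \widehat{\tilde q}_1(v) - \widehat{\tilde q}_0(v)$ in Lemma~\ref{lemma:convergence-q1-q0} and the coverage guarantee of Lemma~\ref{lemma:validity-confidence-set}, the endpoint $\widehat U_{1-\alpha}(v^\star)$ is an asymptotically valid one-sided upper bound, so $\limsup_n \hP\big(t(v^\star) > \widehat U_{1-\alpha}(v^\star)\big) \leq \alpha$, whence $\limsup_n \hP(R^+) \leq \alpha$. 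The mirror-image argument using $\widehat L_{1-\alpha}$ gives $\limsup_n \hP(R^-) \leq \alpha$ under $H_0^-$.

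The main obstacle is the uniformity encoded in $\sup_{t:\,\delta\leq|t|<1}$: the pointwise bound above must hold uniformly over all null configurations, not merely at a single fixed data-generating process. I would argue that the type I error of each one-sided test is monotone in the gap $t(v^\star)-\delta$, so the worst case over the null is attained in the limit at the boundary $t_{\max}=\delta$ (where $t(v^\star)=\delta$ exactly) and equals $\alpha$. Making this rigorous requires the convergence in Lemma~\ref{lemma:convergence-q1-q0} to be locally uniform over the parameter space, so that coverage holds uniformly at the configuration-dependent point $v^\star$, together with care that the maximizing $v^\star$ remains in the interior where $\widehat t$ is asymptotically Gaussian. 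This locally uniform limit theory is the delicate step; the IUT reduction and the set-containment arguments are otherwise routine.
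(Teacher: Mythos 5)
Your proof is correct and is essentially the paper's own argument: the paper's proof likewise splits the null by the sign of the violating deviation (its Case~1/Case~2 corresponds exactly to your $H_0^{+}$/$H_0^{-}$ branches), bounds the probability of the intersection rejection event by that of a single one-sided event via set containment, and applies the pointwise one-sided coverage guarantee of Lemma~\ref{lemma:validity-confidence-set} at a point $v^{\star}$ where $|t(v^{\star})| \geq \delta$. The only difference is that you explicitly flag the uniformity over null configurations implicit in the $\sup_{t\colon \delta \leq |t| < 1}$ (requiring locally uniform convergence in Lemma~\ref{lemma:convergence-q1-q0}), a point the paper's proof handles only pointwise in $t$ without comment.
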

The above proposition shows that
when the equivalence threshold is chosen such that the null is true (i.e., $t_{\max} \geq \delta$),
then the probably to falsely reject the null (type I error) is less than $\alpha$, for any value of $t$
that is consistent with the null.
In other word, the proposed testing procedure is statistically valid
for any choice of the equivalence threshold under the null.

The above result suggests that we can also compute the $p$-value
for this test by solving the rejection rule with respect to $\alpha$,
\begin{equation*}
\widehat{p} = \max\bigg\{\max_{v \in [0,1]}\widehat{p}_{1}(v), \max_{v \in [0,1]}\widehat{p}_{2}(v)\bigg\},
\end{equation*}
where
\begin{equation*}
\widehat{p}_{1}(v)
= 1 - \Phi\Bigg(\frac{\delta - \hat{t}(v)}{\sqrt{\text{Var}(\hat{t}(v)) / n}}\Bigg)
\quad \text{and}\quad
\widehat{p}_{2}(v)
= 1 - \Phi\Bigg(\frac{\delta + \hat{t}(v)}{\sqrt{\text{Var}(\hat{t}(v)) / n}}\Bigg).
\end{equation*}

Intuitively, the \textit{p}-value for the test is the maximum of all point-wise $p$-values
because we are testing the maximum deviation of $\tilde{q}_{1}(v) - \tilde{q}_{0}(v)$.

\paragraph{Choosing an equivalence threshold $\delta$}
So far, we have assumed that researchers have a clear idea what value should be used to assess the equivalence.
When researcher have substantive knowledge about the appropriate value of $\delta$ given an application,
it is  reasonable to choose $\delta$ according to the knowledge.
Oftentimes, this approach might not be feasible since it is not straightforward to form an idea of what value of $\delta$ should be deemed appropriate,
especially when the value of $\delta$ is not directly tied to interpretable quantities such as causal effects.
Although, any choice of $\delta$ is a valid choice because type I error is controlled for the corresponding null hypothesis,
it seems useful to suggest a reasonable default value of $\delta$ to facilitate the practical use of the  method.

I suggest the following value of $\delta$ as a reasonable starting point,
\begin{equation*}
\delta_{n} = \min\bigg\{
1.2\sqrt{\frac{n_{1} + n_{0}}{n_{1}n_{0}}},\ 1
\bigg\}
\end{equation*}
where $1.2 \approx \sqrt{-\log(\omega)/2}$ with $\omega = 0.05$
and $\sqrt{(n_{1} + n_{0}) / (n_{1}n_{0})} \sim n^{-1/2}$ when $n_{1} \sim n_{0}$.
This is a threshold used in the conventional KS test which is a nonparametric test
on the difference between two distribution functions.
The test is based on the maximum difference between two cumulative distribution functions.
In KS test, the value of $\omega$ by the level of a test,
but it is fixed here.
The key feature of this threshold is that
$\delta_{n}$ depends on the sample size.
The equivalence threshold that depends on the sample size is discussed in \cite{romano2005optimal}.
Intuitively, this selection of $\delta$ implies that we raise the standard of what the equivalence means
as the sample size increases.
Therefore, rejecting the null with larger $n$ will be a stronger evidence for the identification assumption.

\section{Empirical Findings}\label{sec:application}

In this section, I revisit the empirical application introduced in Section~\ref{sec:application-introduction}.
We first reanalyze the two-wave panel of CCES (2010--2012).
This two-wave panel is the data used for main analyses in the original studies.
We then analyze the three-wave panel  of CCES (2010--2012--2014) which allows us to assess the identification assumption using the pre-treatment periods.

In the following, we focus on estimating the following causal quantities:
\begin{align*}
\zeta_{j} & = \hP(Y_{i1}(1) = j \mid D_{i} = 1) -
               \hP(Y_{i1}(0) = j \mid D_{i} = 1)
\end{align*}
for $j = 0, 1, 2$.
Recall that \texttt{less-strict} is coded as \texttt{0},
and  \texttt{more-strict} category is coded as \texttt{2}.


\subsection{Result from the two-wave panel}
This section presents a result of the analysis on the two-wave sample from CCES ($n = 16620$).
The outcome is measured in 2010 and 2012 and I treat a response in 2012 as the post-treatment outcome.
Respondents living in a neighborhood where mass shootings happened within 100 miles between 2010 and 2012 are considered as treated $(n_{1} = 4893)$.
In total, there were 16 mass shooting incidents recorded in the dataset between the two waves of CCES \citep[][Appendix C]{newman2019mass}.
In addition to the analysis with the full sample,
I also investigate effect heterogeneity by pre-treatment covariates.
First, I investigate if the baseline safety of the neighborhood affects how people respond to mass shootings.
Respondents are classified into either ``prior exposure'' group or ``no prior exposure'' group.
A respondent is in the ``no prior exposure'' group if she is living in a neighborhood that did not have
mass shootings within 100 miles of the area for the last ten years (as of 2010).
We would expect that people react differently to mass shootings depending on how frequent these events are in their life.
Second, following the original papers, I investigate if effects vary across respondents' party affiliations.
Since issues related to gun control are debated along the party line in the US, we might expect that
people react differently depending on which party they affiliate with.

Figure~\ref{fig:estimated-results-twowaves} shows the results.
In the figure, circles represent point estimates for $\zeta_{0}$
which can be interpreted as the causal effect on preferring less strict gun regulations,
while triangles shows estimates for $\zeta_{2}$ which captures the effect on preferring more strict control of firearm sales; squares are estimate for the middle category ($\zeta_{1}$), which can be interpreted as a preference to the status quo.
\begin{figure}[htb]
  \centerline{\includegraphics[scale=0.8]{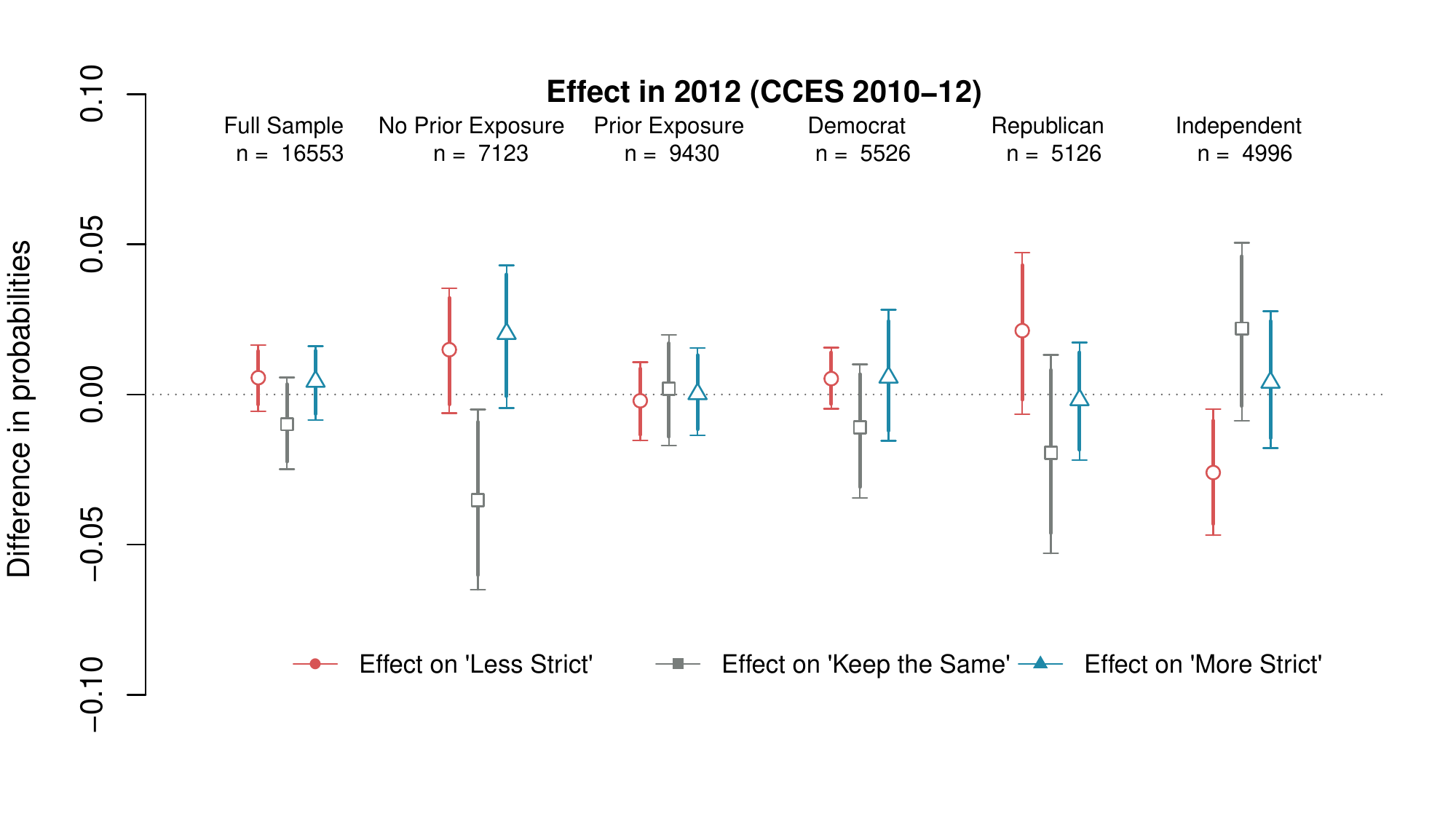}}
  \caption{Estimated treatment effects with 90\% (solid) and 95\% (thick) confidence intervals. Circles indicate effect for \texttt{less-strict} ($\zeta_{0}$),
  squares for \texttt{keep-the-same} ($\zeta_{1}$) and triangles for \texttt{more-strict} ($\zeta_{2}$). Labels above estimates indicate subsamples used for the analysis where $n$ indicates the size of the sample.}
  \label{fig:estimated-results-twowaves}
\end{figure}
Along with point estimates, I also show the uncertain estimates.
Thick (thin) vertical lines show 90\% (95\%) confidence intervals calculated via block bootstraps. To account for the fact that the treatment assignment is at the zip code level, the bootstrap is conducted blocking at the zip code level. There are $9042$ unique zip codes in the two-wave sample. I sample zip codes with replacement and create bootstrap samples. Confidence intervals are based on $2000$ bootstrap iterations.
Text labels shown above estimates indicate the subsamples  and their sample sizes used for the analysis.

We can see that causal effect estimates are not statistically significant at the 10\% level for all categories in the full sample.
Estimates are precisely estimated and they are all close to zero, indicating that there is little evidence to suggest that mass shootings have, on average, any effect on the attitude towards gun control regulations among those who live in their vicinity of public shootings.

Following the original authors, I conduct two sets of subgroup analysis by ``prior exposure'' status
and  by the partisanship.
The analysis reveals a similar pattern that most of the estimates are not statistically distinguishable from zero at the conventional level.
However, we can also see that heterogeneity exists:
the ``no prior exposure'' group has negative effect for the middle category ($\widehat{\zeta}_{1} = -0.035$, $\text{SE} = 0.016$) which is statistically significant at the 5\% level.
This result implies that those living in the safer neighborhood (i.e., ``no prior exposure'')
move away from the status quo.
Although not statistically significant at the 10\% level, we can also see that
the effect on the less strict category is positively estimated  for this ``no prior exposure'' group, indicating that the shift away from the status quo was probably not uni-directional.
We can also see the negative effect on \texttt{less-strict} category among the independents ($\widehat{\zeta}_{0} = -0.026$, $\text{SE} = 0.011$), which is statistically distinguishable from zero at the 5\% level.
In Appendix~\ref{appendix:subset}, I present a result using a more granular measure of partisanship used in the survey,
which asks respondents to categorize themselves on a 7-point scale from ``strong Democrat'' to ``Strong Republican.'' The result in Figure~\ref{fig:twowaves-full-pid7} shows that the effect is concentrated among lean Democrats, 91\% of them  categorized themselves as ``independent'' on the 3-point partisanship scale.

%
\begin{figure}[htb]
  \centerline{\includegraphics[scale=0.8]{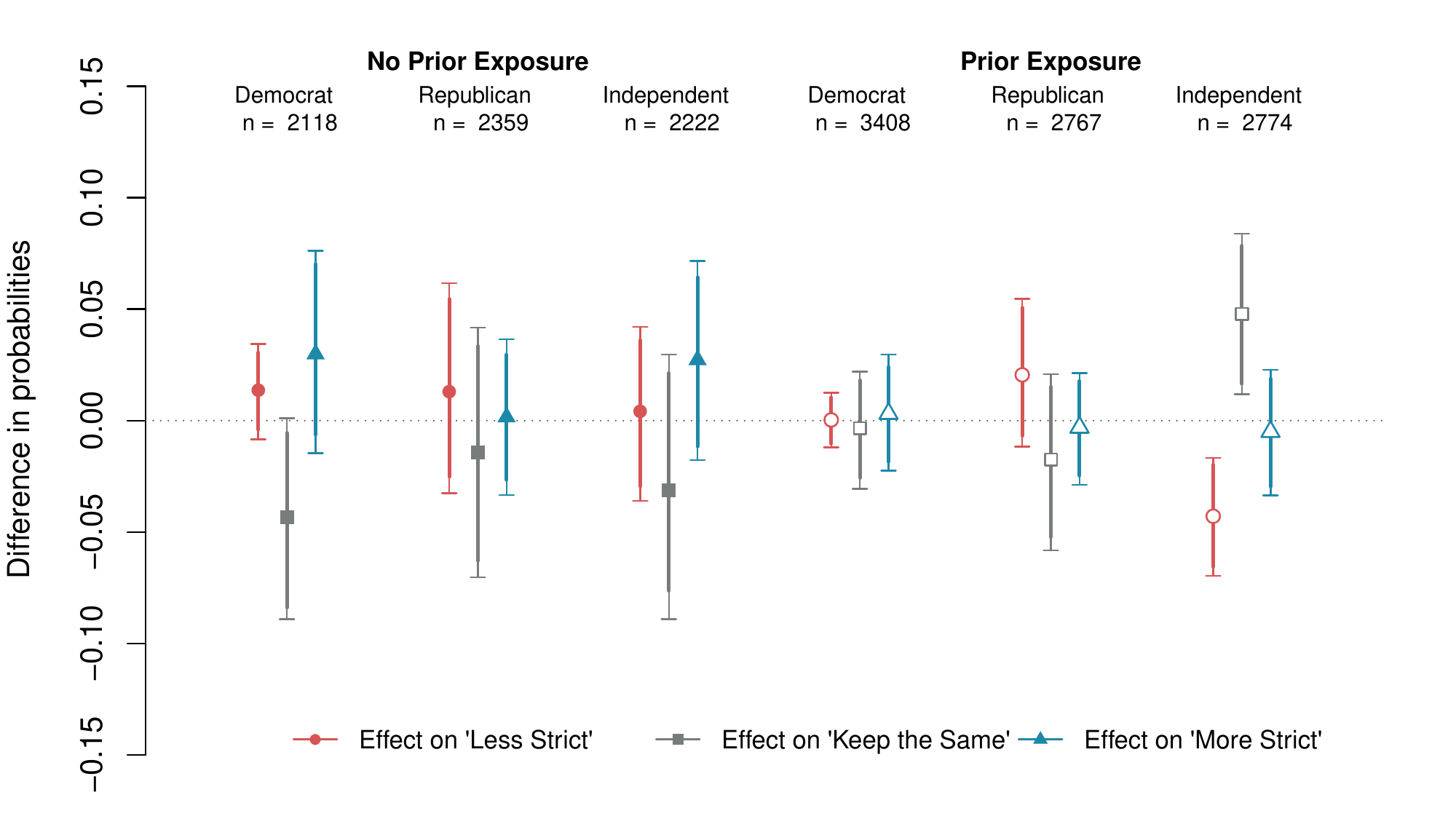}}
  \caption{Estimated treatment effects with 90\% (solid) and 95\% (thick) confidence intervals. Circles indicate effects for \texttt{less-strict} ($\zeta_{0}$), squares for \texttt{keep-as-they-are} ($\zeta_{1}$) and triangles  for \texttt{more-strict} ($\zeta_{2}$). Text labels above the estimates indicate subsamples used for the analysis.}
  \label{fig:estimated-results-twowaves-interactions}
\end{figure}

To further investigate the interactive effects between the the partisanship and prior exposure status,
I considered interactions between the two variables.
Figure~\ref{fig:estimated-results-twowaves-interactions} shows the results of the analysis.
As we can see the effect is concentrated among Democrats who are in the ``no prior exposure'' group,
while none of the effects are statistically significant for other partisans.
The figure also shows that partisanship does not play a role in the ``prior exposure'' group
where estimates are indistinguishable from zero.

Finally, \cite{barney2019reexamining}
consider different thresholds to determine who are ``exposed'' to the mass shootings.
In addition to the above 100 mile threshold, I also estimate effects for the 25 mile threshold.
The result is shown in Figure~\ref{fig:twowave-main-25mi} in Appendix~\ref{appendix:subset}.
We observe similar patterns with the previous results,
while there are two notable differences. First, $\zeta_{1}$ is now statistically significant at the 10\% level ($\zeta_{1} =  -0.021$, $\text{SE} =   0.012$).
Second,  effects are  clearer for Democrats without prior exposure: $\zeta_{1} < 0$ and $\zeta_{2} > 0$ and both of the estimates are statistically significant at the 5\% level.

\subsection{Diagnostics using three-wave panel}

Next, I analyze the three-wave panel from CCES (2010-12-14) to assess if the identification assumption
made in Assumption~\ref{assumption:quantile} is plausible or not.
I subset the dataset so that I include only two types of respondents: those who experienced the mass shootings
only after 2012 (treated group) and those who never experience the mass shootings throughout the sample periods (control group).
This allows us to treat 2010 and 2012 as the pre-treatment periods,
because no one in this subsample is affected by the treatment happened before 2012.
To avoid the possibility that the past exposure might affect the baseline attitudes,
I further condition on the prior-exposure variable, including only respondents who are in the ``no prior exposure'' group.
This subset consists of $2817$ respondents among which $667$ respondents are eventually treated between 2012 and 2014.
In total, there were 28 incidents of mass shootings recorded in the dataset that happened between 2012 and 2014 waves.

I apply the diagnostic test proposed in Section~\ref{sec:diagnostics}
to the pre-treatment outcome.
The goal here is to statistically test if the condition of the distributional parallel trend holds, namely,
$\tilde{q}_{1}(v) = \tilde{q}_{0}(v)$ where $\tilde{q}_{d}(v)$ is the pre-treatment analog of the quantile-quantile relationship defined on group $d$ (i.e., $q_{d}(v)$ in Assumption~\ref{assumption:quantile}).
Specifically, I test the null hypothesis of non-equivalence, $H_{0}\colon \tilde{q}_{1}(v) \neq \tilde{q}_{0}(v)$ for all $v$ against the equivalence.

I compute the test statistic $\hat{t}_{\max} = \max_{v} \hat{t}(v)$, where $\hat{t}(v) = \widehat{\tilde{q}}_{1}(v) - \widehat{\tilde{q}}_{0}(v)$, and corresponding confidence intervals at the 5\% level.
Each $\hat{t}(v)$ is computed by evaluating $\widehat{\tilde{q}}_{1}$ and $\widehat{\tilde{q}}_{0}$ on the finite number of grid points between $0.001$ and $0.999$  where the distance between points is set to $0.01$.
The equivalence threshold is chosen based on the heuristic criterion discussed in Section~\ref{sec:diagnostics},
$\delta_{n} = \sqrt{-\log(0.05)/2 \times n /(n_{1}n_{0})} \approx 0.054 $
where $n = 2817$ and $n_{1} = 667$.

\begin{figure}[htb]
  \centerline{\includegraphics[scale=0.65]{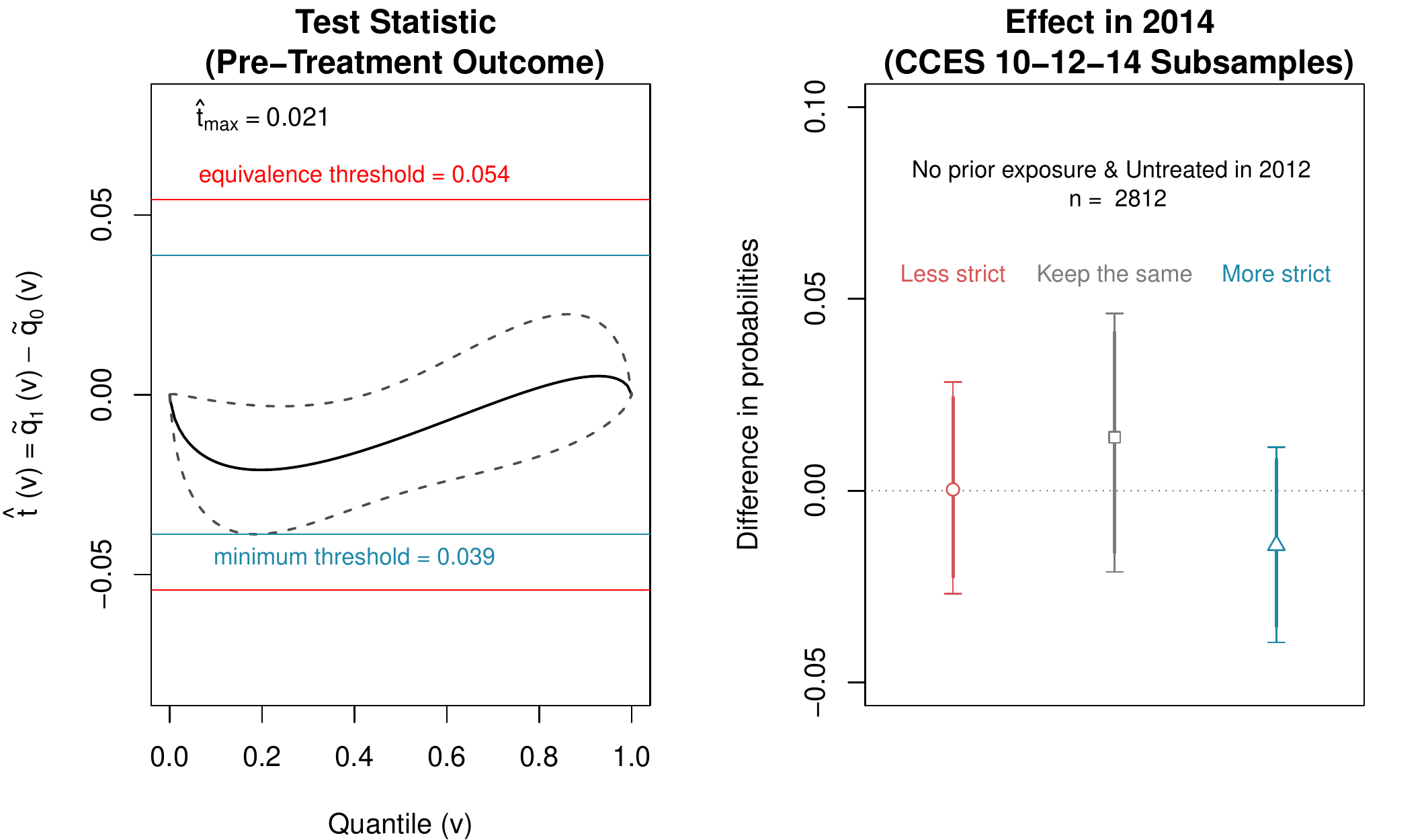}}
    \caption{\textbf{Left} -- Test statistics $\hat{t}(v)$ (solid line) with point-wise 95\% confidence intervals (dashed lines).
    \textcolor{red}{Red} lines show the equivalence range $[-\delta_{n}, \delta_{n}]$.
    The figure shows that the largest (smallest) point of the upper (lower) confidence intervals is strictly contained in the equivalence range. It suggests that the null is rejected at the 5\% level with $\delta_{n} = 0.054$.
    \textbf{Right} -- Estimated causal effects with 90\% (thick) and 95\% (thin) confidence intervals.
    Either effects are not statistically distinguishable from zero at the 10\% level.}
    \label{fig:estimated-results-threewaves}
\end{figure}

The left panel of Figure~\ref{fig:estimated-results-threewaves}
shows the difference between the two functions, $\hat{t}(v)$, evaluated at a value $v$ on the unit interval (solid line).
Dashed lines show the point-wise 95\% confidence intervals.
The test statistic (the estimated largest deviation) is $\hat{t}_{\max} = 0.021$
with the largest upper bound $\widehat{U}_{\max} = 0.022$ and the smallest lower bound $\widehat{L}_{\min} = -0.039$.
Since the confidence range $[\widehat{L}_{\min}, \widehat{U}_{\max}]$ is strictly contained in
the equivalence range $[-\delta_{n}, \delta_{n}]$, we can reject the null of non-equivalence at the 5\% level ($p = 0.001$).
In other words, for any choice of $\delta$ that is greater than $\max\{|\widehat{U}_{\max}|, |\widehat{L}_{\min}|\} = 0.039$, we reject the null  at the 5\% level.
The result suggests that during the pre-treatment periods the data supports the analogous condition of Assumption~\ref{assumption:quantile}.

After confirming the plausibility of the key identification assumption,
we now analyze the outcome measured in 2012 (pre-treatment) and 2014 (post-treatment) to estimate the causal effect
for the three-wave subsample.
The right panel of Figure~\ref{fig:estimated-results-threewaves} shows the result of the analysis.
We can see that none of the estimates are  statistically distinguishable from zero at the 10\% level ($\widehat{\zeta}_{0} = 0.000$, $\text{SE} = 0.0144$; $\widehat{\zeta}_{1} = 0.0140$, $\text{SE} = 0.0172$; and $\widehat{\zeta}_{2} = -0.0142$, $\text{SE} = 0.0132$).
This result somewhat contradicts findings in the previous section,
where I found a negative effect on $\zeta_{1}$ among the ``no prior exposure'' group.
There are many possible reasons why effects could vary over time.
One possibility is simply the size of the dataset.
The subset of the three-wave panel has smaller respondents than the two-wave samples analyzed in the previous section. This difference obviously translates into differences in uncertainty estimates.
Another possibility is due to the contextual differences.
On December 14th, after the 2012 wave of the CCES, the Sandy Hook Elementary School shooting occured.
This was one of the deadliest mass shootings in the US history,
which possibly raised the salience of the issue affecting gun control regulations nationally.

\section{Concluding Remarks}\label{sec:conclusion}
In spite of the recent developments in the literature on the DID design,
less attention has been paid for when the outcome is measured on an ordinal scale.
In this paper, I proposed a method that allows scholars to leverage ordinal outcomes
without making  the linearity assumption as in the the standard DID analysis.
I also proposed a procedure that assesses if the key identification assumption is plausible
when additional pre-treatment periods are available.
This enable scholars to inspect the data and to discuss if the assumption is reasonable
given a particular dataset they analyze,
which is a crucial step for any research that attempts to establish a causal relationship.

Several extensions of the proposed methods are possible.
In Appendix~\ref{appendix:extensions}, I demonstrate that
the proposed method can be useful to estimate other types of causal estimands
such as the proportion of who benefits from the treatment \citep[e.g.][]{lu2018treatment}.
Recent years, it has been argued that such estimands are preferable because
the distributional treatment effects considered in the main text are not necessarily easy to interpret.
Because the proposed method identifies the entire distributions of the potential outcomes,
it is possible to compute any causal estimand that is a function of marginal distributions of the potential outcome.
In the appendix, I also discuss how to incorporate time-varying covariates,
which requires a further modeling assumption.

In Appendix~\ref{sec:simulation},
I present a simulation study where I assess the finite sample property of the proposed estimator
and the testing procedure.
I find that under the correct model specification, the proposed method outperforms
the two competing methods: the standard difference-in-differences on the dichotomized outcome
and the ordered probit model.
I also find that the type-I error is properly controlled for the proposed testing procedure, while the power depends on the choice of the equivalence threshold.

Finally, future research should consider an extension to a complex design.
Specifically, the staggered adoption design where the treatment is assigned over time
is a popular data structure in applied studies.
Although the development of such methods is beyond the scope of this paper,
the same framework should improve the analysis of ordinal outcome beyond the standard DID setting.

%
%

\clearpage
\bibliography{ord_did.bib}

\clearpage
\appendix
\renewcommand\thefigure{\thesection.\arabic{figure}}
\setcounter{figure}{0}
\setstretch{1.15}

\begin{center}
 {\LARGE \textbf{Appendix}}
\end{center}
\thispagestyle{empty}


\section{Proofs of Propositions}\label{appendix:proofs}

\subsection{Lemmas}

Before proving propositions, we present useful lemmas.

\begin{lemma}[Identification of mean and variance of the latent variables.]\label{lemma:latent-identification}
  Suppose that the cutoffs are fixed at $\kappa_{1}$ and $\kappa_{2}$ for $Y_{dt} = j \in \{0, 1, 2\}$.
  Then, $\mu_{dt}$ and $\sigma_{dt}$ in  $Y^{*}_{dt} \sim \mu_{dt} + \sigma_{dt}U$ are uniquely identified
  from the observed probability distribution.
\end{lemma}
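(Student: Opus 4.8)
The plan is to write the three observed cell probabilities as functions of the two unknowns $(\mu_{dt}, \sigma_{dt})$ and then invert. Fixing a group--time cell $(d,t)$ and abbreviating $p_{j} = \hP(Y_{dt} = j)$, Assumption~\ref{assumption:model} together with Assumption~\ref{assumption:distribution} gives
\begin{equation*}
p_{0} = F_{U}\!\left(\frac{\kappa_{1} - \mu_{dt}}{\sigma_{dt}}\right), \qquad p_{0} + p_{1} = F_{U}\!\left(\frac{\kappa_{2} - \mu_{dt}}{\sigma_{dt}}\right),
\end{equation*}
since $\kappa_{0} = -\infty$, $\kappa_{3} = \infty$, and $F_{U}$ is the CDF of $U$ (using continuity of $U$ so the cutoff boundaries carry no mass). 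Because $U$ is continuously distributed and, for the families covered by Assumption~\ref{assumption:distribution}, has a CDF strictly increasing on $\R$, the inverse $F_{U}^{-1}$ is well defined, so I can solve each equation for its argument.

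Next I would introduce $a \equiv F_{U}^{-1}(p_{0})$ and $b \equiv F_{U}^{-1}(p_{0} + p_{1})$, turning the two displays into the linear system
\begin{equation*}
\kappa_{1} - \mu_{dt} = a\,\sigma_{dt}, \qquad \kappa_{2} - \mu_{dt} = b\,\sigma_{dt}.
\end{equation*}
Subtracting eliminates $\mu_{dt}$ and yields $\sigma_{dt} = (\kappa_{2} - \kappa_{1})/(b - a)$; back-substitution gives $\mu_{dt} = \kappa_{1} - a\,(\kappa_{2} - \kappa_{1})/(b - a)$. Both are explicit, single-valued functions of the observed probabilities $(p_{0}, p_{1})$ and the fixed cutoffs $(\kappa_{1}, \kappa_{2})$, which is exactly the uniqueness claimed.

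The one nondegeneracy condition I must check is $b \neq a$, so the denominator does not vanish. Since $F_{U}^{-1}$ is strictly increasing, $b - a > 0$ is equivalent to $p_{1} > 0$: the middle category must be observed with positive probability. I expect this to be the crux of the argument, so I would state it as a maintained support condition (each of the three categories has positive probability in every group--time cell). Under it, $a < b$, the scale $\sigma_{dt} = (\kappa_{2} - \kappa_{1})/(b - a)$ is genuinely positive because $\kappa_{2} > \kappa_{1}$, and the map $(\mu_{dt}, \sigma_{dt}) \mapsto (p_{0}, p_{1})$ is a bijection onto the interior of the probability simplex, delivering identification. Finally I would remark on why fixing \emph{two} cutoffs is the right normalization here: with $J = 3$ the observed distribution has two free probabilities, matching the two free parameters $\mu_{dt}, \sigma_{dt}$ once $\kappa_{1}, \kappa_{2}$ are pinned down, whereas the conventional ordered-probit normalization $\sigma_{dt} = 1$ would leave the scale unidentified and hence preclude recovering $\sigma_{11}$ in Proposition~\ref{proposition:identification}.
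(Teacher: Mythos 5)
Your proposal is correct and takes essentially the same approach as the paper: both set up the nonlinear system equating two observed cell probabilities to $F_{U}$ evaluated at the standardized fixed cutoffs, and identify $(\mu_{dt}, \sigma_{dt})$ by counting two equations against two unknowns (the paper uses $\hP(Y_{dt}=0)$ and $\hP(Y_{dt}=2)$, you use $p_{0}$ and $p_{0}+p_{1}$, which carry identical information). You actually go a step beyond the paper's proof, which simply asserts the system is ``sufficient'': your explicit inversion via $a = F_{U}^{-1}(p_{0})$, $b = F_{U}^{-1}(p_{0}+p_{1})$, yielding $\sigma_{dt} = (\kappa_{2}-\kappa_{1})/(b-a)$ and the corresponding $\mu_{dt}$, together with the support condition that each category has positive probability (so that $b \neq a$ and $F_{U}^{-1}$ is applied in the interior), supplies the uniqueness argument the paper leaves implicit.
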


\begin{proof}[Proof of Lemma~\ref{lemma:latent-identification}]
  Suppose that $U$ has the density $f_{U}(u)$.
  Then, we can form a non-linear system of equations
  \begin{align*}
  \Pr(Y_{dt} = 0) &= \int^{\kappa_{1}}_{-\infty} f_{U}((y^{*} - \mu_{dt}) / \sigma_{dt})dy^{*}\\
  \Pr(Y_{dt} = 2) &= \int^{\infty}_{\kappa_{2}}f_{U}((y^{*} - \mu_{dt}) / \sigma_{dt})dy^{*}
  \end{align*}
  which are sufficient for estimating $\mu$ and $\sigma$.
\end{proof}

\begin{lemma}[Alternative formula for identification]\label{lemma:alternative-identification}
Suppose $Y_{dt} = j \in \{0, 1, 2\}$.
Let $v_{1} = F_{01}(\kappa_{1})$ and $v_{2} = F_{01}(\kappa_{2})$
where $\bm{\kappa}$ is a set of fixed cutoffs.
Under Assumption~\ref{assumption:model}, \ref{assumption:distribution} and \ref{assumption:quantile},
we identify $\mu_{11}$ and $\sigma_{11}$ by the following system of non-linear equations:
\begin{align*}
q_{0}(v_{1}) &= \int^{F^{-1}_{10}(v_{1})}_{-\infty} f_{U}((y^{*} - \mu_{11}) / \sigma_{11})dy^{*}\\
q_{0}(v_{2}) &= \int^{F^{-1}_{10}(v_{2})}_{-\infty} f_{U}((y^{*} - \mu_{11}) / \sigma_{11})dy^{*}.
\end{align*}
\end{lemma}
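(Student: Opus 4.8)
The plan is to convert the distributional restriction of Assumption~\ref{assumption:quantile} into a single functional identity relating the unknown counterfactual CDF $F_{11}\equiv F_{Y^{*}_{11}}$ to the three observed latent CDFs, and then to show that sampling that identity at the two quantile levels $v_{1}$ and $v_{2}$ pins the two parameters $(\mu_{11},\sigma_{11})$ down uniquely. First I would record what is already in hand. With the cutoffs $\kappa_{1},\kappa_{2}$ fixed, Lemma~\ref{lemma:latent-identification} identifies $(\mu_{dt},\sigma_{dt})$ for the three observed cells $(d,t)\in\{(0,0),(0,1),(1,0)\}$ from the observed choice probabilities, so $F_{00}$, $F_{01}$, $F_{10}$ are known functions, as are $v_{1}=F_{01}(\kappa_{1})$, $v_{2}=F_{01}(\kappa_{2})$ and the map $q_{0}(v)=F_{00}\circ F^{-1}_{01}(v)$. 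Because $U$ is continuously distributed (Assumption~\ref{assumption:distribution}), each $F_{dt}$ is continuous and strictly increasing on its support, so every inverse used below is well defined; I would state this monotonicity/continuity fact explicitly before manipulating composed inverses.

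Next I would extract the working identity. Writing $q_{1}(v)=F_{10}\circ F^{-1}_{11}(v)$ and imposing $q_{1}(v)=q_{0}(v)$ for all $v$ gives $F_{10}\big(F^{-1}_{11}(v)\big)=q_{0}(v)$, i.e.
\[
F^{-1}_{11}(v) = F^{-1}_{10}\big(q_{0}(v)\big) \qquad \text{for all } v \in (0,1).
\]
This is the crux: it expresses the counterfactual quantile function purely through the identified objects $F_{10}$ and $q_{0}$. Substituting the location-scale form $F^{-1}_{11}(v)=\mu_{11}+\sigma_{11}F^{-1}_{U}(v)$ from Assumption~\ref{assumption:distribution} turns the functional identity into a scalar relation once it is evaluated at $v_{1}$ and $v_{2}$. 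Performing the change of variables $u=(y^{*}-\mu_{11})/\sigma_{11}$ on each right-hand integral of the statement recasts it as a value of $F_{U}$ at the standardized upper limit, i.e.\ as $F_{11}$ evaluated at the known point $F^{-1}_{10}(v_{k})$; combined with the identity above, this is what produces the two equations tying the observed $q$-values to the counterfactual parameters.

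For uniqueness I would apply $F^{-1}_{U}$ to both equations, linearizing them into $\mu_{11}+\sigma_{11}F^{-1}_{U}(v_{k})=F^{-1}_{10}\big(q_{0}(v_{k})\big)$ for $k=1,2$. This is a $2\times 2$ linear system in $(\mu_{11},\sigma_{11})$ whose coefficient matrix has determinant $F^{-1}_{U}(v_{2})-F^{-1}_{U}(v_{1})\neq 0$, since $v_{1}\neq v_{2}$ and $F^{-1}_{U}$ is strictly increasing; hence $(\mu_{11},\sigma_{11})$ is determined uniquely, which is the claim. As a consistency check I would substitute $F^{-1}_{U}=\Phi^{-1}$ together with $F^{-1}_{10}(w)=\mu_{10}+\sigma_{10}\Phi^{-1}(w)$ and the explicit form of $q_{0}$, verifying that the solution collapses to the closed form $\sigma_{11}=\sigma_{10}\sigma_{01}/\sigma_{00}$ and $\mu_{11}=\mu_{10}+(\mu_{01}-\mu_{00})\sigma_{10}/\sigma_{00}$ of Proposition~\ref{proposition:identification}.

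The main obstacle I anticipate is the bookkeeping of the nested CDF and inverse-CDF compositions: getting the orientation of $q_{0}$ versus $q_{0}^{-1}$ exactly right when translating between the quantile identity $F^{-1}_{11}=F^{-1}_{10}\circ q_{0}$ and the integral form in the statement, since a careless step reverses the direction and produces $F_{11}(F^{-1}_{10}(v_{k}))=q_{0}^{-1}(v_{k})$ rather than $q_{0}(v_{k})$. Establishing that the two evaluation points genuinely yield independent equations reduces, as above, to the strict monotonicity of $F^{-1}_{U}$, so I would make that regularity guaranteed by Assumption~\ref{assumption:distribution} explicit and lean on it throughout.
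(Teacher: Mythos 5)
Your proposal is correct and follows the same basic route as the paper's proof: use Assumption~\ref{assumption:quantile} to express the unknown counterfactual distribution through the identified objects $F_{00}$, $F_{01}$, $F_{10}$, evaluate at the two quantile levels $v_{1}, v_{2}$, and solve for $(\mu_{11}, \sigma_{11})$. Two substantive differences are worth recording. First, the paper's proof is a three-line computation, $q_{0}(v) = F_{11}\circ F^{-1}_{10}(v) = \int^{F^{-1}_{10}(v)}_{-\infty} f_{U}((y^{*}-\mu_{11})/\sigma_{11})\,dy^{*}$, followed by an appeal to ``a similar argument to Lemma~\ref{lemma:latent-identification}'' for solvability; since Lemma~\ref{lemma:latent-identification}'s own proof merely asserts that the two equations ``are sufficient,'' uniqueness is never actually established there. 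Your linearization $\mu_{11} + \sigma_{11}F^{-1}_{U}(v_{k}) = F^{-1}_{10}\bigl(q_{0}(v_{k})\bigr)$, $k = 1,2$, with coefficient determinant proportional to $F^{-1}_{U}(v_{2}) - F^{-1}_{U}(v_{1}) \neq 0$, supplies exactly this missing step, so your write-up is strictly more complete on the uniqueness claim. Second, the orientation issue you flag as a pitfall in fact cuts the other way from how you describe it: under the definition in Assumption~\ref{assumption:quantile}, $q_{d} = F_{Y^{*}_{d0}}\circ F^{-1}_{Y^{*}_{d1}}$, the identity $q_{1} = q_{0}$ yields $F_{11}\circ F^{-1}_{10} = q_{0}^{-1}$, so the relation you call ``careless'' is the correct one, while the lemma's displayed equations --- and the first equality of the paper's proof, which reads the definition as $q_{d} = F_{d1}\circ F^{-1}_{d0}$ --- silently use the inverted convention (as does Lemma~\ref{lemma:invariance-cutoff}). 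Because the assumption is symmetric under inversion ($q_{1} = q_{0}$ iff $q_{1}^{-1} = q_{0}^{-1}$), both systems identify the same pair, and your consistency check confirms this: your system reproduces $\mu_{11} = \mu_{10} + \sigma_{10}(\mu_{01}-\mu_{00})/\sigma_{00}$ and $\sigma_{11} = \sigma_{10}\sigma_{01}/\sigma_{00}$ as in Proposition~\ref{proposition:identification}, whereas plugging the Assumption-\ref{assumption:quantile} convention literally into the lemma's display would give the incompatible $\sigma_{11} = \sigma_{10}\sigma_{00}/\sigma_{01}$. So there is no gap in your argument; what you prove is the equivalent, correctly oriented variant of the displayed system, and your implicit reading of the integrals as CDF values (absorbing the missing $1/\sigma_{11}$ Jacobian, a typo shared with Lemma~\ref{lemma:latent-identification}) is the intended one.
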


\begin{proof}[Proof of Lemma~\ref{lemma:alternative-identification}]

Under the distributional parallel trends assumption, we have $q_{0}(v) = q_{1}(v)$ for all $v \in [0, 1]$.
Then,
\begin{align*}
q_{0}(v)
&= F_{11} \circ F^{-1}_{10}(v)\\
&= \int^{F^{-1}_{10}(v)}_{-\infty} f_{11}(y^{*})dy^{*}\\
&= \int^{F^{-1}_{10}(v)}_{-\infty} f_{U}((y^{*} - \mu_{11})/\sigma_{11})dy^{*}
\end{align*}
where the first equality is due to the definition of $q_{d}(v)$
and the last equality follows by Assumption~\ref{assumption:distribution}.
Pick $v_{1}$ and $v_{2}$ as in the statement.
Drawing on a similar to the argument in Lemma~\ref{lemma:latent-identification}, we obtain the identification.

\end{proof}

\begin{lemma}[Invariance of $\widehat{\zeta}_{j}$ under different cutoffs]\label{lemma:invariance-cutoff}
Suppose we have two sets of cutoffs $\bm{\kappa}$ and $\bm{\kappa}'$ ($\bm{\kappa} \neq \bm{\kappa}'$)
for $Y_{dt} = j \in \{0, 1, 2\}$.
Then, $\widehat{\bm{\zeta}}(\bm{\kappa}) = \widehat{\bm{\zeta}}(\bm{\kappa}')$.
\end{lemma}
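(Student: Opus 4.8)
The plan is to show that replacing one cutoff set by another acts on all fitted parameters as a single orientation-preserving affine reparametrization of the latent scale, and that $\widehat{\bm{\zeta}}$ is invariant under any such reparametrization.

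First I would record that, with $\kappa_1,\kappa_2$ fixed and $(\mu_{dt},\sigma_{dt})$ free, the ordered-probit model for $Y_{dt}\in\{0,1,2\}$ is saturated: since $F_U$ is continuous and strictly increasing (Assumption~\ref{assumption:distribution}), the map $(\mu_{dt},\sigma_{dt})\mapsto(\Pr(Y_{dt}=0),\Pr(Y_{dt}=1))$ is a bijection onto the interior of the simplex. Hence the first-stage MLE reproduces the observed cell frequencies exactly and, by Lemma~\ref{lemma:latent-identification}, is the unique solution of $F_U((\kappa_1-\widehat\mu_{dt})/\widehat\sigma_{dt})=\widehat p_{dt,0}$ and $F_U((\kappa_2-\widehat\mu_{dt})/\widehat\sigma_{dt})=\widehat p_{dt,0}+\widehat p_{dt,1}$. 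Writing $a_{dt}=F_U^{-1}(\widehat p_{dt,0})$ and $b_{dt}=F_U^{-1}(\widehat p_{dt,0}+\widehat p_{dt,1})$, which depend only on the data and not on the cutoffs, solving the two equations yields the closed forms $\widehat\sigma_{dt}=(\kappa_2-\kappa_1)/(b_{dt}-a_{dt})$ and $\widehat\mu_{dt}=\kappa_1-\widehat\sigma_{dt}a_{dt}$.

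Next, given a second cutoff set $\bm\kappa'$, I would introduce the unique orientation-preserving affine map $T(x)=\alpha x+\beta$ with $\alpha=(\kappa_2'-\kappa_1')/(\kappa_2-\kappa_1)>0$ that sends $\kappa_1\mapsto\kappa_1'$ and $\kappa_2\mapsto\kappa_2'$. Because $a_{dt},b_{dt}$ are cutoff-invariant, the closed forms above give at once the equivariance of the first stage, $\widehat\mu_{dt}(\bm\kappa')=\alpha\,\widehat\mu_{dt}(\bm\kappa)+\beta$ and $\widehat\sigma_{dt}(\bm\kappa')=\alpha\,\widehat\sigma_{dt}(\bm\kappa)$. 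Substituting these into the plug-in formulas~\eqref{eq:second-stage-theta}, the additive constant cancels in the difference $\widehat\mu_{01}-\widehat\mu_{00}$ and the factor $\alpha$ cancels in the ratio $\widehat\sigma_{00}/\widehat\sigma_{10}$, so the counterfactual parameters inherit the same law, $\widehat\mu_{11}(\bm\kappa')=\alpha\,\widehat\mu_{11}(\bm\kappa)+\beta$ and $\widehat\sigma_{11}(\bm\kappa')=\alpha\,\widehat\sigma_{11}(\bm\kappa)$.

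Finally I would split $\widehat\zeta_j$ into its two pieces. The empirical frequency $n_1^{-1}\sum_i D_i\mathbf{1}\{Y_{i1}=j\}$ never involves the cutoffs. For the model-based piece, each standardized cutoff is invariant, since for $j=1,2$ we have $(\kappa_j'-\widehat\mu_{11}(\bm\kappa'))/\widehat\sigma_{11}(\bm\kappa')=(T(\kappa_j)-\alpha\widehat\mu_{11}-\beta)/(\alpha\widehat\sigma_{11})=(\kappa_j-\widehat\mu_{11})/\widehat\sigma_{11}$, while the endpoints $\kappa_0=-\infty$ and $\kappa_3=\infty$ are fixed by $T$ because $\alpha>0$; hence $F_U((\kappa_{j+1}'-\widehat\mu_{11}')/\widehat\sigma_{11}')-F_U((\kappa_j'-\widehat\mu_{11}')/\widehat\sigma_{11}')$ equals the corresponding quantity under $\bm\kappa$. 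Therefore $\widehat\zeta_j(\bm\kappa')=\widehat\zeta_j(\bm\kappa)$ for every $j$, i.e.\ $\widehat{\bm\zeta}(\bm\kappa)=\widehat{\bm\zeta}(\bm\kappa')$. The only delicate step is the first one: everything after it is the routine algebra of affine equivariance, so the work is to make sure that saturation together with strict monotonicity of $F_U$ genuinely pins the MLE down by the cutoff-invariant quantities $a_{dt},b_{dt}$; once that is secured, the cancellation is automatic.
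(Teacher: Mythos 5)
Your proof is correct, and it reaches the result by a genuinely different route than the paper. The paper's proof stays on the probability scale: it notes that fitted CDF values at the cutoffs equal observed cell probabilities (e.g.\ $F_{01}(\kappa_{1})=\Pr(Y_{01}=0)=\widetilde{F}_{01}(\kappa_{1}')$), deduces that the inputs to the nonlinear system of Lemma~\ref{lemma:alternative-identification} are cutoff-invariant, and appeals to the uniqueness of the $(\mu_{11},\sigma_{11})$ solving that system. You instead work on the latent scale: using saturation of the $J=3$ model with two fixed cutoffs, you solve the first stage in closed form through the cutoff-free quantities $a_{dt},b_{dt}$, exhibit the explicit affine map $T$ carrying $\bm{\kappa}$ to $\bm{\kappa}'$, and track its equivariance through the plug-in step~\eqref{eq:second-stage-theta} to the invariance of the standardized cutoffs $(\kappa_{j}-\widehat{\mu}_{11})/\widehat{\sigma}_{11}$. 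Both arguments rest on the same saturation fact, but the paper uses it only implicitly, whereas you state it together with the needed regularity (all three cells nonempty, so $a_{dt},b_{dt}$ are finite with $b_{dt}>a_{dt}$). Your formulation also quietly repairs a loose step in the paper: its reduction ``it suffices that $F_{11}(y)=\widetilde{F}_{11}(y)$'' cannot hold pointwise in $y$, since the two latent scales differ exactly by $T$; what is invariant is the implied probability vector $F_{11}(\kappa_{j})=\widetilde{F}_{11}(\kappa_{j}')$, which is precisely what your standardized-cutoff identity delivers. In exchange, the paper's quantile-scale argument makes the conceptual reason transparent (Assumption~\ref{assumption:quantile} restricts quantiles, which carry no scale), while your affine-equivariance argument is constructive, shows exactly how the non-identified location and scale cancel in $\widehat{\bm{\zeta}}$, and applies verbatim to any strictly increasing continuous $F_{U}$, not just the probit case.
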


\begin{proof}[Proof of Lemma~\ref{lemma:invariance-cutoff}]
Let $F_{dt}(y) = \Pr(Y^{*}_{dt} \leq y)$ denote the cumulative distribution function of the latent variable $Y^{*}_{dt}$ under the cutoff $\bm{\kappa}$.
Similarly, let $\widetilde{F}_{dt}(y)$ denote the CDF under $\bm{\kappa}'$.
To show the causal effect estimates are invariant to the choice of cutoffs, it is sufficient to show that $F_{11}(y) = \widetilde{F}_{11}(y)$, that is, the invariance of the identified counterfactual latent distribution.

We first show that $F_{00}(F^{-1}_{01}(u)) = \widetilde{F}_{00}(\widetilde{F}^{-1}_{01}(u'))$
for $u = F_{01}(\kappa_{1})$ and $u' = \widetilde{F}_{01}(\kappa'_{1})$.
Now, note that we have $u = u'$ because
\begin{align*}
F_{01}(\kappa_{1})
  &= \Pr(Y_{01} = 0) \\
  &= \widetilde{F}_{01}(\kappa'_{1})
\end{align*}
where $Y_{01}$ is the observed outcome.
Thus,
\begin{align*}
F_{00}(F^{-1}_{01}(u)) - u
&= F_{00}(\kappa_{1}) - F_{01}(\kappa_{1}) \\
&= \Pr(Y_{00} = 0) - \Pr(Y_{01} = 0)\\
&= \widetilde{F}_{00}(\kappa_{1}) - \widetilde{F}_{01}(\kappa_{1})\\
&= \widetilde{F}_{00}(\widetilde{F}^{-1}_{01}(u')) - u'
\end{align*}
which proves that $F_{00}(F^{-1}_{01}(u)) = \widetilde{F}_{00}(\widetilde{F}^{-1}_{01}(u'))$.

Next, by the similar argument, we have that $F_{10}(\kappa_{1}) = \widetilde{F}_{10}(\kappa'_{1})$, because
\begin{align*}
F_{10}(\kappa_{1})
&= \Pr(Y_{10} = 0)\\
&= \widetilde{F}_{10}(\kappa'_{1}).
\end{align*}

Repeating the above two steps for $\kappa_{2}$ and $\kappa'_{2}$,
we obtain that
\begin{align*}
F_{00}(F^{-1}_{01}(u))
&= \int^{F^{-1}_{10}(u)}_{-\infty} f_{U}((y^{*} - \mu_{11}) / \sigma_{11})dy^{*}\\
&= \widetilde{F}_{00}(\widetilde{F}^{-1}_{01}(u'))
\end{align*}
and
\begin{align*}
F_{00}(F^{-1}_{01}(v))
&= \int^{F^{-1}_{10}(v)}_{-\infty} f_{U}((y^{*} - \mu_{11}) / \sigma_{11})dy^{*}\\
&= \widetilde{F}_{00}(\widetilde{F}^{-1}_{01}(v'))
\end{align*}
where $v = F_{01}(\kappa_{2})$ and $v' = \widetilde{F}_{01}(\kappa'_{2})$.

Applying the result of Lemma~\ref{lemma:alternative-identification},
we can see that $\mu_{11}$ and $\sigma_{11}$ are uniquely identified under different sets
of cutoffs, that is, $\bm{\zeta}(\bm{\kappa}) = \bm{\zeta}(\bm{\kappa}')$.

Finally, replacing all quantities with their sample analog, we conclude that
$\widehat{\bm{\zeta}}(\bm{\kappa}) = \widehat{\bm{\zeta}}(\bm{\kappa}')$.
\end{proof}

\begin{lemma}[Asymptotic Normality of Causal Estimates]\label{lemma:asymptotic-normality}
Under some regularity conditions, as $n \to \infty$ with $n_{1} / n \to k$,
we have that
\begin{equation}
\sqrt{n}(\widehat{\zeta}_{j} - \zeta_{j})  \leadsto \mathcal{N}(0, \sigma^{2}_{j})
\end{equation}
\end{lemma}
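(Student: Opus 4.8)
The plan is to recognize $\widehat{\zeta}_{j}$ as a smooth transformation of two jointly asymptotically normal objects — the first-stage maximum-likelihood estimator of the observed-data parameters and a treated-group sample proportion — and then to apply the delta method. First I would treat the first stage as a standard M-estimation problem. The parameter $\bm{\theta} = (\bm{\theta}^{\top}_{00}, \bm{\theta}^{\top}_{01}, \bm{\theta}^{\top}_{10}, \bm{\kappa}^{\top})^{\top}$ is identified by Lemma~\ref{lemma:latent-identification}, the ordered-probit log-likelihood is smooth in $\bm{\theta}$, and under the usual regularity conditions (correct specification, true parameter in the interior of a compact set, nonsingular Fisher information $\bm{I}(\bm{\theta}_{0})$, and dominated score derivatives) classical maximum-likelihood theory yields the asymptotically linear representation
\begin{equation*}
\sqrt{n}(\widehat{\bm{\theta}} - \bm{\theta}_{0}) = \frac{1}{\sqrt{n}}\sum_{i=1}^{n}\bm{I}(\bm{\theta}_{0})^{-1}\bm{s}_{i}(\bm{\theta}_{0}) + o_{p}(1) \leadsto \mathcal{N}\big(0,\, \bm{I}(\bm{\theta}_{0})^{-1}\big),
\end{equation*}
where $\bm{s}_{i}$ is the per-unit score. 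Writing the estimator as a sum of i.i.d.\ influence functions is the key move, since it makes the joint limit in the next step transparent.

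Second, I would handle the nonparametric first term $\widehat{p}_{1j} = n_{1}^{-1}\sum_{i}D_{i}\mathbf{1}\{Y_{i1} = j\}$, which is a sample average over the treated units. With $n_{1}/n \to k$, it is itself asymptotically linear with summand $k^{-1}D_{i}(\mathbf{1}\{Y_{i1} = j\} - p_{1j})$, where $p_{1j} = \hP(Y_{i1}(1) = j \mid D_{i} = 1)$. Stacking this influence function together with the score-based influence function of $\widehat{\bm{\theta}}$ into a single mean-zero i.i.d.\ vector and invoking a multivariate central limit theorem gives joint asymptotic normality of $(\widehat{\bm{\theta}}^{\top}, \widehat{p}_{1j})^{\top}$ with some covariance $\bm{\Sigma}$.

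Third, I would note that $\widehat{\mu}_{11}$ and $\widehat{\sigma}_{11}$ from Equation~\eqref{eq:second-stage-theta} are continuously differentiable functions of $\widehat{\bm{\theta}}$ away from $\sigma_{00} = 0$, and that $\widehat{\zeta}_{j} = \widehat{p}_{1j} - g_{j}(\widehat{\mu}_{11}, \widehat{\sigma}_{11})$ with $g_{j}(m, s) = \Phi\big((\kappa_{j+1} - m)/s\big) - \Phi\big((\kappa_{j} - m)/s\big)$, which is smooth in $(m, s)$ for $s > 0$ (this is exactly the counterfactual mass identified in Proposition~\ref{proposition:identification}). Hence $\widehat{\zeta}_{j}$ is a smooth map of the jointly normal vector $(\widehat{\bm{\theta}}^{\top}, \widehat{p}_{1j})^{\top}$, and one application of the delta method delivers $\sqrt{n}(\widehat{\zeta}_{j} - \zeta_{j}) \leadsto \mathcal{N}(0, \sigma^{2}_{j})$ with $\sigma^{2}_{j} = \nabla h_{j}^{\top}\bm{\Sigma}\,\nabla h_{j}$ for the appropriate gradient $\nabla h_{j}$ of the composite map.

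The main obstacle is the covariance structure rather than the normality itself. The treated units enter both pieces — their pre-period outcome $Y_{10}$ appears in the first-stage likelihood while their post-period outcome $Y_{i1}$ appears in $\widehat{p}_{1j}$ — so the two influence functions are correlated through the panel. Pinning down $\bm{\Sigma}$, and hence $\sigma^{2}_{j}$, requires carefully tracking this cross-covariance, which is precisely why the explicit delta-method variance is stated \emph{under the independence assumption} and why the block bootstrap is the recommended practical alternative when outcomes are correlated across time or clustered. I would therefore keep the asymptotically linear representation front and center so that the cross-term is visible, and verify that the differentiability of $g_{j}$ and the nonsingularity of the relevant Jacobian both hold at the truth.
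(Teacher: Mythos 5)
Your proposal is correct and takes essentially the same route as the paper's proof: a $\sqrt{n}$ central limit theorem for the treated-group proportion with $n_{1}/n \to k$, standard maximum-likelihood asymptotic normality for the first-stage parameters, and the delta method applied to the smooth composite map through $(\widehat{\mu}_{11}, \widehat{\sigma}_{11})$ from Equation~\eqref{eq:second-stage-theta}. Your influence-function stacking merely makes explicit the joint convergence of $(\widehat{\bm{\theta}}^{\top}, \widehat{p}_{1j})^{\top}$ that the paper handles implicitly under its independence assumption $Y_{it} \indep Y_{i't'}$, and your remark on the pre-/post-period cross-covariance of the treated units is precisely the paper's stated reason for deriving the closed-form variance under independence and recommending the block bootstrap otherwise.
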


\begin{proof}[Proof of Lemma~\ref{lemma:asymptotic-normality}]

We prove the statement by showing the following two statements:
\begin{enumerate}
  \item $\sum^{n}_{i=1}D_{i}\bm{1}\{Y_{i1} = j\} / n_{1}$ is $\sqrt{n}$-consistent estimator for $\Pr(Y_{i1}(1) = j \mid D_{i} = 1)$.
  \item $\widehat{\bm{\theta}}_{11} = (\widehat{\mu}_{11}, \widehat{\sigma}_{11})^{\top}$ is $\sqrt{n}$-consistent estimator for $\bm{\theta}_{11}$.
\end{enumerate}
We then use the continuous mapping theorem for the convergence in distribution
to obtain the final result.

To be clear, we (sometime implicitly) condition on $D_{i} = 1$ throughout the proof,
which assumes that
there is a super population of units with $D_{i} =1$.
Now let $W_{i} = D_{i} \bm{1}\{Y_{i1} = j\}$
and $\pi_{11}(j) = \Pr(Y_{i1}(1) = j \mid D_{i} = 1)$.
Under the assumption that $Y_{it} \indep Y_{i't'}$ for any combination of $i$ and $t$, it follows that
$\sum^{n}_{i=1}W_{i} / n_{1} \to \E[\bm{1}\{Y_{i1}(1) = j \} \mid D_{i} = 1] = \pi_{11}(j)$ as $n \to \infty$ with $n_{1} / n \to k$ by the law of large numbers.
This proves the consistency.
By the central limit theorem, it also follows that
\begin{align*}
\sqrt{n}\bigg(\frac{1}{n_{1}}\sum^{n}_{i=1}W_{i} - \pi_{11}(j)\bigg)
&= \frac{n}{n_{1}}\frac{1}{\sqrt{n}}\sum^{n}_{i=1}(W_{i} - \pi_{11}(j))\\
&\leadsto \mathcal{N}\bigg(0, \frac{\text{Var}(\widehat{\pi}_{11}(j))}{k^{2}}\bigg)
\end{align*}
as $n \to \infty$ with $n_{1}/n \to k$.

Next, recall that $\widehat{\bm{\theta}}_{11}$ is given as a transformation of $\widehat{\bm{\theta}}_{00}$, $\widehat{\bm{\theta}}_{01}$ and $\widehat{\bm{\theta}}_{10}$,
all of which are MLE of the problem described in Section~\ref{subsec:estimation}.
Therefore, under the assumption of correct model specification, we obtain that
$\widehat{\bm{\theta}}_{00}$, $\widehat{\bm{\theta}}_{01}$ and $\widehat{\bm{\theta}}_{10}$ are jontly asymptotically normal centered around $\bm{\theta}_{00}$, $\bm{\theta}_{01}$ and $\bm{\theta}_{10}$.
By the continuous mapping theorem, it follows that $\widehat{\bm{\theta}}_{11}$ is also asymptotically normally distributed.

Finally, using the continuous mapping theorem again, we have that $\widehat{\zeta}_{j}$
is a $\sqrt{n}$ consistent estimator for $\zeta_{j}$
where the variance $\sigma_{j}$ is obtained by the delta method (See Lemma~\ref{lemma:convergence-q1-q0}).

\end{proof}

\begin{lemma}[Asymptotic Normality of $\bm{\theta}$ for Pre-treatment Parameters]\label{lemma:asymptotic-normality-theta}
Let $\bm{\theta} = (\bm{\theta}^{\top}_{00}, \bm{\theta}^{\top}_{01}, \bm{\theta}^{\top}_{10}, \bm{\theta}^{\top}_{11})^{\top}$,
all of which are estimated using  the data from the pre-treatment periods.
Then, under regularity conditions in \cite{newey1994large},
the maximum likelihood estimator $\widehat{\bm{\theta}}$ is asymptotically normal with covariance $\Omega$,
\begin{equation}
\sqrt{n}(\widehat{\bm{\theta}} - \bm{\theta}) \leadsto \mathcal{N}(0, \Omega)
\end{equation}
where $\Omega$ is a block-diagonal matrix under independence assumption.
\end{lemma}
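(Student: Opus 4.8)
The plan is to treat this as a standard maximum-likelihood asymptotics problem and then exploit the panel structure to read off the block-diagonal form. First I would write the pre-treatment log-likelihood. Under Assumption~\ref{assumption:distribution} with $U\sim\mathcal{N}(0,1)$, the outcome in each cell $(d,t)$ follows an ordered probit model, so the joint log-likelihood (with the parameter blocks $\bm{\theta}_{dt}$ treated as distinct across cells) separates into a sum over the four directly observed pre-treatment cells,
\[
\ell_n(\bm{\theta}) = \frac{1}{n}\sum_{i=1}^{n} \sum_{d\in\{0,1\}}\sum_{t\in\{0,1\}} \mathbf{1}\{D_i=d\}\,\log p_{dt}(Y_{it};\bm{\theta}_{dt}),
\]
where $p_{dt}(j;\bm{\theta}_{dt}) = \Phi((\kappa_{j+1}-\mu_{dt})/\sigma_{dt}) - \Phi((\kappa_{j}-\mu_{dt})/\sigma_{dt})$ with the two interior cutoffs held fixed. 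Because the parameter blocks do not overlap across cells, maximizing $\ell_n$ is equivalent to running four separate ordered-probit maximum-likelihood problems, one per cell.

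Second, I would verify the regularity conditions of \cite{newey1994large}. Identification is the one condition that is not automatic: fixing two cutoffs is precisely what makes $(\mu_{dt},\sigma_{dt})$ recoverable from the cell probabilities, which is the content of Lemma~\ref{lemma:latent-identification}. Given identification, the remaining conditions --- a compact parameter space with the truth in its interior, continuity and twice-differentiability of $\log p_{dt}$ in $(\mu_{dt},\sigma_{dt})$, an integrable envelope, and a nonsingular per-cell Fisher information --- hold for the ordered probit because $\Phi$ is smooth and strictly monotone and the finitely many cell probabilities are bounded away from $0$ and $1$ at an interior $\bm{\theta}$. The cited theorems then deliver $\sqrt{n}(\widehat{\bm{\theta}}-\bm{\theta})\leadsto\mathcal{N}(0,\Omega)$ with $\Omega=I(\bm{\theta})^{-1}$, where $I(\bm{\theta})$ is the expected outer product of scores.

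Third, I would argue that $\Omega$ is block-diagonal, which reduces to showing that the score cross-covariances across distinct cells vanish. Writing the per-unit score contribution for cell $(d,t)$ as $s_{i,dt} = \mathbf{1}\{D_i=d\}\,\nabla_{\bm{\theta}_{dt}}\log p_{dt}(Y_{it};\bm{\theta}_{dt})$, two cases arise. For cells with different group labels, the product $s_{i,dt}\,s_{i,d't'}^{\top}$ carries the factor $\mathbf{1}\{D_i=d\}\mathbf{1}\{D_i=d'\}=0$, so that block is identically zero. For cells with the same group $d$ but different times $t\neq t'$, the independence assumption $Y_{it}\indep Y_{it'}$ invoked in Lemma~\ref{lemma:asymptotic-normality} renders the two score factors conditionally independent given $D_i=d$, and since each has conditional mean zero at the truth,
\[
\E[s_{i,dt}\,s_{i,dt'}^{\top}] = \hP(D_i=d)\,\E[\nabla\log p_{dt}(Y_{it})\mid D_i=d]\,\E[\nabla\log p_{dt'}(Y_{it'})\mid D_i=d]^{\top}=0.
\]
Hence $I(\bm{\theta})$, and therefore its inverse $\Omega$, is block-diagonal with one $2\times 2$ block per cell.

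The main obstacle I anticipate is not the block-diagonal bookkeeping, which is mechanical, but pinning down identification together with the nonsingularity of each per-cell Fisher information: with a free variance the ordered probit is underidentified, so the argument must lean explicitly on Lemma~\ref{lemma:latent-identification} to guarantee that each information block is invertible and that $\Omega$ is well defined. I would also flag that the stated block-diagonality is contingent on the cross-time independence assumption; without it, the same-group off-diagonal blocks need not vanish, which is exactly why the main text recommends the block bootstrap when the panel is correlated.
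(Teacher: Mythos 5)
Your proposal is correct and takes exactly the route the paper intends: the paper's own proof is omitted, stating only that the lemma is ``a direct application of the standard result of the maximum likelihood estimation,'' and your write-up supplies precisely those standard details --- the cell-separable ordered-probit likelihood, the Newey--McFadden regularity conditions (with the identification condition correctly traced to the fixed cutoffs via Lemma~\ref{lemma:latent-identification}), and the vanishing score cross-covariances that give the block-diagonal $\Omega$ under independence. Your closing caveat that block-diagonality fails without cross-time independence matches the paper's own recommendation of the block bootstrap for correlated panels.
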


\begin{proof}[Proof of Lemma~\ref{lemma:asymptotic-normality-theta}]
The result is a direct application of the standard result of the maximum likelihood estimation.
Therefore, proof is omitted.
\end{proof}

\begin{lemma}[Asymptotic Distribution of the Test Statistic]\label{lemma:convergence-q1-q0}
Assume that $U \sim \mathcal{N}(0, 1)$.
Let $t(v; \bm{\theta}) = \tilde{q}_{1}(v; \bm{\theta}) - \tilde{q}_{0}(v; \bm{\theta})$
and $\hat{t}(v) \equiv t(v; \widehat{\bm{\theta}})$.
Then, we have that
\begin{equation}
\sqrt{n}(t(v; \widehat{\bm{\theta}}) - t(v; \bm{\theta}))  \leadsto \mathcal{N}(0, \mathrm{Var}(\hat{t}(v)))
\end{equation}
for each $v \in [0, 1]$ with
\begin{equation}
\mathrm{Var}(\hat{t}(v)) =
\bigg(\frac{\partial}{\partial \bm{\theta}}t(v; \bm{\theta})\bigg)^{\top}\Omega
\bigg(\frac{\partial}{\partial \bm{\theta}}t(v; \bm{\theta})\bigg)
\end{equation}
where $\bm{\theta}$ is evaluated at the truth, $\Omega$ is the asymptotic variance covariance matrix of $\widehat{\bm{\theta}}$ given in Lemma~\ref{lemma:asymptotic-normality-theta},
and the gradient takes the form of
\begin{align*}
\frac{\partial}{\partial \bm{\theta}}t(v; \bm{\theta})
=
\left[
\begin{array}{c}
\exp(-z^{2}_{0})/\sqrt{2\pi}\sigma_{00}\\
\exp(-z^{2}_{0})z_{0} / \sqrt{\pi}\sigma_{00}\\
-\exp(-z^{2}_{0})/\sqrt{2\pi}\sigma_{00}\\
-\exp(-z^{2}_{0})\mathrm{erf}^{-1}(2v-1) / \sqrt{\pi}\sigma_{00}\\
-\exp(-z^{2}_{1})/\sqrt{2\pi}\sigma_{10}\\
- \exp(-z^{2}_{1}) z_{1} / \sqrt{\pi}\sigma_{10}\\
\exp(-z^{2}_{1})/\sqrt{2\pi}\sigma_{10}\\
\exp(-z^{2}_{1})\mathrm{erf}^{-1}(2v-1)/ \sqrt{\pi}\sigma_{10}
\end{array}
\right]
\end{align*}
with
\begin{align*}
z_{d} \equiv \frac{\mu_{d1} - \mu_{d0}}{\sigma_{d0}\sqrt{2}} + \frac{\mathrm{erf}^{-1}(2v -1)}{\sigma_{d0}/ \sigma_{d1}}.
\end{align*}

\end{lemma}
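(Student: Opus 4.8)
The plan is to obtain this as a routine application of the delta method, building directly on Lemma~\ref{lemma:asymptotic-normality-theta}. That lemma already supplies the $\sqrt{n}$-asymptotic normality of the full parameter vector $\widehat{\bm{\theta}}$ with limiting covariance $\Omega$. Since the test statistic $\hat{t}(v) = t(v; \widehat{\bm{\theta}})$ is nothing but a fixed smooth function $t(\cdot\,;\bm{\theta})$ evaluated at the estimated parameters, it suffices to (i) verify that $t(v;\cdot)$ is continuously differentiable at the true $\bm{\theta}$ and (ii) compute its gradient explicitly; the claimed asymptotic variance is then the delta-method identity $\mathrm{Var}(\hat t(v)) = (\partial_{\bm\theta}t)^\top \Omega\,(\partial_{\bm\theta}t)$.

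First I would write $\tilde q_d$ in closed form. Because $U\sim\mathcal N(0,1)$ and $Y^*_{dt}\sim\mu_{dt}+\sigma_{dt}U$ under Assumption~\ref{assumption:distribution}, the relevant quantile is $F^{-1}_{Y^*_{d1}}(v)=\mu_{d1}+\sigma_{d1}\Phi^{-1}(v)$, so that
\[
\tilde q_d(v)=\Phi\!\left(\frac{\mu_{d1}+\sigma_{d1}\Phi^{-1}(v)-\mu_{d0}}{\sigma_{d0}}\right)\equiv \Phi(a_d),
\]
a smooth function of $(\mu_{d0},\sigma_{d0},\mu_{d1},\sigma_{d1})$ wherever the scales are bounded away from zero. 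Differentiating $t=\tilde q_1-\tilde q_0$ by the chain rule produces eight partials, four per group, and the only real work is the cosmetic rewriting of these into the inverse-error-function form quoted in the statement.

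The identity that drives this rewriting is $\Phi^{-1}(v)=\sqrt2\,\mathrm{erf}^{-1}(2v-1)$, which gives $a_d=\sqrt2\,z_d$ with $z_d$ exactly as defined in the lemma, and hence $\phi(a_d)=\tfrac1{\sqrt{2\pi}}\exp(-z_d^2)$. Substituting this into each partial reproduces the displayed gradient entry by entry: for instance $\partial t/\partial\mu_{00}=\phi(a_0)/\sigma_{00}=\exp(-z_0^2)/(\sqrt{2\pi}\,\sigma_{00})$ for the control pre-period location, and $\partial t/\partial\sigma_{00}=\phi(a_0)\,a_0/\sigma_{00}=\exp(-z_0^2)z_0/(\sqrt{\pi}\,\sigma_{00})$ after collapsing the stray $\sqrt2$. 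The signs follow from $\tilde q_0$ entering $t$ with a minus and $\tilde q_1$ with a plus, which is precisely why the group-$0$ block and the group-$1$ block agree up to sign under the relabeling $z_0\leftrightarrow z_1$, $\sigma_{00}\leftrightarrow\sigma_{10}$.

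Finally I would invoke the delta method proper: with $\widehat{\bm\theta}$ asymptotically normal and $\bm\theta\mapsto t(v;\bm\theta)$ continuously differentiable with the gradient above, $\sqrt n\,(t(v;\widehat{\bm\theta})-t(v;\bm\theta))\leadsto\mathcal N\!\big(0,(\partial_{\bm\theta}t)^\top\Omega\,\partial_{\bm\theta}t\big)$ for each fixed interior $v$. I do not anticipate any genuine probabilistic obstacle; the bulk of the labor is the bookkeeping of the eight derivatives and the $\mathrm{erf}^{-1}$ substitution. The one point meriting a word of care is the behavior at the endpoints $v\in\{0,1\}$, where $\Phi^{-1}(v)$ diverges: there $\exp(-z_d^2)\to 0$, so the gradient and variance extend continuously by zero, consistent with the remark after Assumption~\ref{assumption:quantile} that $t(v)=0$ at $v=0,1$.
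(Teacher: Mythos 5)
Your proposal is correct and takes essentially the same route as the paper's proof: write $\tilde{q}_{d}(v)=\Phi\bigl((\mu_{d1}+\sigma_{d1}\Phi^{-1}(v)-\mu_{d0})/\sigma_{d0}\bigr)$, compute the eight partial derivatives by the chain rule (using $\Phi^{-1}(v)=\sqrt{2}\,\mathrm{erf}^{-1}(2v-1)$, which is how the $z_{d}$ and $\mathrm{erf}^{-1}$ terms in the stated gradient arise), and apply the delta method to the asymptotic normality of $\widehat{\bm{\theta}}$ from Lemma~\ref{lemma:asymptotic-normality-theta}. Your sign bookkeeping reproduces the paper's gradient entry by entry, and your closing remark about the endpoints $v\in\{0,1\}$ is a harmless refinement that the paper's (terser) proof does not address.
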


\begin{proof}[Proof of Lemma~\ref{lemma:convergence-q1-q0}]
Recall that the derivative of the error function is given by
\begin{equation}
\frac{d}{dz}\mathrm{erf}(z) = \frac{2}{\sqrt{\pi}}e^{-z^{2}}
\end{equation}
which is differentiable with respect to $z$.

Now, I compute the derivative of $t(v; \bm{\theta})$ with respect to $\bm{\theta}$,
\begin{align*}
\frac{\partial}{\partial \bm{\theta}}t(v; \bm{\theta})
=
\left[
\begin{array}{c}
\bigg(\frac{\partial}{\partial \bm{\theta}_{00}}t(v; \bm{\theta})\bigg)\\
\bigg(\frac{\partial}{\partial \bm{\theta}_{01}}t(v; \bm{\theta})\bigg)\\
\bigg(\frac{\partial}{\partial \bm{\theta}_{10}}t(v; \bm{\theta})\bigg)\\
\bigg(\frac{\partial}{\partial \bm{\theta}_{11}}t(v; \bm{\theta})\bigg)\\
\end{array}
\right]
=
\left[
\begin{array}{c}
\exp(-z^{2}_{0})/\sqrt{2\pi}\sigma_{00}\\
\exp(-z^{2}_{0})z_{0} / \sqrt{\pi}\sigma_{00}\\
-\exp(-z^{2}_{0})/\sqrt{2\pi}\sigma_{00}\\
-\exp(-z^{2}_{0})\mathrm{erf}^{-1}(2v-1) / \sqrt{\pi}\sigma_{00}\\
-\exp(-z^{2}_{1})/\sqrt{2\pi}\sigma_{10}\\
- \exp(-z^{2}_{1}) z_{1} / \sqrt{\pi}\sigma_{10}\\
\exp(-z^{2}_{1})/\sqrt{2\pi}\sigma_{10}\\
\exp(-z^{2}_{1})\mathrm{erf}^{-1}(2v-1)/ \sqrt{\pi}\sigma_{10}
\end{array}
\right]
\end{align*}
where
\begin{align*}
z_{d} \equiv \frac{\mu_{d1} - \mu_{d0}}{\sigma_{d0}\sqrt{2}} + \frac{\mathrm{erf}^{-1}(2v -1)}{\sigma_{d0}/ \sigma_{d1}}
\end{align*}

Given that Lemma~\ref{lemma:asymptotic-normality-theta} establishes the asymptotic normality of $\widehat{\bm{\theta}}$,
the result immediately follows by the application of the Delta method.
Then, we get
\begin{equation}
\sqrt{n}(t(v; \widehat{\bm{\theta}}) - t(v; \bm{\theta})) \leadsto
\bigg(\frac{\partial}{\partial \bm{\theta}}t(v; \bm{\theta})\bigg)
\mathcal{N}(0, \Omega)
\end{equation}

From here, we obtain the variance formula as
\begin{align*}
\mathrm{Var}(\hat{t}(v))
&=
\bigg(\frac{\partial}{\partial \bm{\theta}_{00}}t(v)\bigg)^{\top}\Omega_{00}\bigg(\frac{\partial}{\partial \bm{\theta}_{00}}t(v)\bigg) +
\bigg(\frac{\partial}{\partial \bm{\theta}_{01}}t(v)\bigg)^{\top}\Omega_{01}\bigg(\frac{\partial}{\partial \bm{\theta}_{01}}t(v)\bigg) \\
&\quad +
\bigg(\frac{\partial}{\partial \bm{\theta}_{10}}t(v)\bigg)^{\top}\Omega_{10}\bigg(\frac{\partial}{\partial \bm{\theta}_{10}}t(v)\bigg) +
\bigg(\frac{\partial}{\partial \bm{\theta}_{11}}t(v)\bigg)^{\top}\Omega_{11}\bigg(\frac{\partial}{\partial \bm{\theta}_{11}}t(v)\bigg)
\end{align*}
where $\hat{t}(v) \equiv t(v; \widehat{\theta})$.
\end{proof}

\begin{lemma}[Validity of $(1-\alpha$) level sets \citep{liu2009assessing}]\label{lemma:validity-confidence-set}

Let $t(v) = \tilde{q}_{1}(v) - \tilde{q}_{0}(v)$.
Suppose $\widehat{U}_{1-\alpha}(v)$ and $\widehat{L}_{1-\alpha}(v)$
are point-wise upper and lower $(1-\alpha)$ level confidence intervals, respectively
such that $\widehat{U}_{1-\alpha}(v) = \hat{t}(v) + \Phi^{-1}(1-\alpha)\sqrt{\mathrm{Var}(\hat{t}(v))/n}$
and $\widehat{L}_{1-\alpha}(v) = \hat{t}(v) - \Phi^{-1}(1-\alpha)\sqrt{\mathrm{Var}(\hat{t}(v))/n}$.
Then,
\begin{align}
\hP\bigg(\max_{v\in [0,1]}t(v) \leq \max_{v' \in [0,1]} \widehat{U}_{1-\alpha}(v')\bigg)  &\geq 1 - \alpha\\
\hP\bigg(\min_{v\in [0,1]}t(v) \geq \min_{v' \in [0,1]} \widehat{L}_{1-\alpha}(v')\bigg)  &\geq 1 - \alpha
\label{eq:confidence-interval-95-2}
\end{align}
\end{lemma}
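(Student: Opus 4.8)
The plan is to prove each inequality by \emph{anchoring} the comparison at the deterministic extremizer of the true function $t(\cdot)$ and then combining the pointwise coverage of the band with an elementary containment of events. I would treat the upper bound in detail; the lower bound is entirely symmetric.

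First I would record the pointwise coverage. For each fixed $v$, Lemma~\ref{lemma:convergence-q1-q0} gives $\sqrt{n}(\hat{t}(v) - t(v)) \leadsto \mathcal{N}(0, \mathrm{Var}(\hat{t}(v)))$, so writing the standard error as $\mathrm{se}(v) = \sqrt{\mathrm{Var}(\hat{t}(v))/n}$ and $Z_{n}(v) = (\hat{t}(v) - t(v))/\mathrm{se}(v) \leadsto \mathcal{N}(0,1)$, a direct rearrangement of the definition of $\widehat{U}_{1-\alpha}(v)$ yields
\begin{equation*}
\hP\big(t(v) \leq \widehat{U}_{1-\alpha}(v)\big)
= \hP\big(Z_{n}(v) \geq \Phi^{-1}(\alpha)\big) \longrightarrow 1 - \alpha,
\end{equation*}
and analogously $\hP(t(v) \geq \widehat{L}_{1-\alpha}(v)) \to 1-\alpha$. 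This is precisely the claim that $\widehat{U}_{1-\alpha}$ and $\widehat{L}_{1-\alpha}$ are (asymptotic) pointwise $(1-\alpha)$ bounds.

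Next, let $v^{\star} \in \argmax_{v \in [0,1]} t(v)$, which is attained because $t$ is continuous on the compact interval $[0,1]$ (with $t$ vanishing at $v = 0, 1$, as noted for $q_{d}$) and, crucially, is a \emph{deterministic} point depending only on the true distributions rather than on the sample. The key observation is the event containment
\begin{equation*}
\big\{ t(v^{\star}) \leq \widehat{U}_{1-\alpha}(v^{\star}) \big\}
\subseteq
\Big\{ \max_{v} t(v) \leq \max_{v'} \widehat{U}_{1-\alpha}(v') \Big\},
\end{equation*}
which holds because on the left-hand event $\max_{v} t(v) = t(v^{\star}) \leq \widehat{U}_{1-\alpha}(v^{\star}) \leq \max_{v'} \widehat{U}_{1-\alpha}(v')$, using the definition of $v^{\star}$ for the equality and the fact that $\widehat{U}_{1-\alpha}(v^{\star})$ is one of the values in the maximand for the last inequality. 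Taking probabilities and applying the pointwise coverage at the fixed point $v^{\star}$ then gives
\begin{equation*}
\hP\Big( \max_{v} t(v) \leq \max_{v'} \widehat{U}_{1-\alpha}(v') \Big)
\geq \hP\big( t(v^{\star}) \leq \widehat{U}_{1-\alpha}(v^{\star}) \big) \geq 1-\alpha
\end{equation*}
asymptotically, which is the first claim; repeating the argument with $v_{\star} \in \argmin_{v} t(v)$ and the lower band establishes the second.

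The step I expect to be the crux is the insistence that the anchor be the maximizer of the \emph{true} function $t$ rather than of the estimate $\hat{t}$: pointwise coverage is only guaranteed at deterministic arguments, so anchoring at the data-dependent empirical maximizer would void the guarantee and force a uniform (simultaneous) argument. The containment trick sidesteps this at the cost of obtaining only the one-sided bound $\geq 1-\alpha$ rather than equality—which is all the downstream test in Proposition~\ref{proposition:validity-test} requires. The remaining technical care is merely verifying that the extrema are attained, which follows from continuity of $t$ on $[0,1]$.
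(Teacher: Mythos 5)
Your proposal is correct and follows essentially the same route as the paper's own proof: anchoring at the deterministic maximizer $v^{\star} = \argmax_{v} t(v)$, invoking the point-wise coverage $\hP(t(v^{\star}) \leq \widehat{U}_{1-\alpha}(v^{\star})) \geq 1-\alpha$ at that fixed point, and then using $\widehat{U}_{1-\alpha}(v^{\star}) \leq \max_{v'} \widehat{U}_{1-\alpha}(v')$ to obtain the containment, with the lower bound handled symmetrically. Your additions---deriving the point-wise coverage explicitly from Lemma~\ref{lemma:convergence-q1-q0}, stressing that the anchor must be the maximizer of the \emph{true} function rather than the data-dependent empirical maximizer, and verifying attainment of the extrema by continuity on the compact interval---are sound refinements of details the paper leaves implicit.
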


\begin{proof}[Proof of Lemma~\ref{lemma:validity-confidence-set}]

Recall that $\widehat{U}_{1-\alpha}(v)$ is a point-wise $100(1-\alpha)$\% level confidence interval.
This implies that
\begin{equation*}
\hP(t(v) \leq \widehat{U}_{1-\alpha}(v)) = 1 -  \alpha
\end{equation*}
for any $v \in [0, 1]$.
Now, let $v^{*} = \argmax_{v} t(v)$.
Then,  we have that
\begin{align*}
1- \alpha
= \hP(t(v^{*}) \leq \widehat{U}_{1- \alpha}(v^{*}))
\leq \hP(t(v^{*}) \leq \max_{v'}\widehat{U}_{1- \alpha}(v'))
\end{align*}
which proves that $\hP(\max_{v}t(v) \leq \max_{v'}\widehat{U}_{1- \alpha}(v')) \geq 1 - \alpha$.

\end{proof}

\subsection{Proofs}

\begin{proof}[Proof of Proposition~\ref{proposition:identification}]

Let $U$ denote a random variable with mean $0$ and variance $1$
and denote its cumulative distribution function by $F_{U}$.
For $v \sim \mathcal{U}(0, 1)$, we have
\begin{align*}
q_{0}(v)
&\equiv F_{Y^{*}_{00}} \circ F^{-1}_{Y^{*}_{01}}(v)\\
&= F_{U}
\bigg(
\frac{\mu_{01} - \mu_{00}}{\sigma_{00}} + \frac{\sigma_{01}}{\sigma_{00}}F_{U}^{-1}(v)
\bigg)
\end{align*}
The equality in the above expression holds because $Y^{*}_{dt}$ follows the location-scale family, which implies
\begin{align*}
F_{Y^{*}_{dt}}(y^{*})
&= F_{U}\bigg(
\frac{y^{*} - \mu_{dt}}{\sigma_{dt}}
\bigg)\\
F^{-1}_{Y^{*}_{dt}}(v)&=
\mu_{dt} + \sigma_{dt}F^{-1}_{U}(v)
\end{align*}

By Assumption~\ref{assumption:quantile},
\begin{align*}
F^{-1}_{Y^{*}_{11}}(v)
&= F^{-1}_{Y^{*}_{10}}(q_{0}(v))\\
&= \mu_{10} + \sigma_{10} F^{-1}_{U}(q_{0}(v))\\
&= \mu_{10} + \sigma_{10}
\bigg(
\frac{\mu_{01} - \mu_{00}}{\sigma_{00}} + \frac{\sigma_{01}}{\sigma_{00}}F_{U}^{-1}(v)
\bigg)\\
&\equiv \mu_{11} + \sigma_{11}F^{-1}_{U}(v)
\end{align*}
where
\begin{align*}
\mu_{11} \equiv \mu_{10} + \frac{\mu_{01} - \mu_{00}}{\sigma_{00}/ \sigma_{10}}
\quad
\text{and}
\quad
\sigma_{11} \equiv \frac{\sigma_{10}\sigma_{01}}{\sigma_{00}}.
\end{align*}

Combined with the fact that $U$ is a continuous and parametric distribution,
we recovers the distribution of $Y^{*}_{11}$.

\end{proof}

\begin{proof}[Proof of Proposition~\ref{proposition:validity-test}]
Case 1: $t \geq  \epsilon$.
In this case, the test makes a ``mistake'' because the upper bound does not cover $\epsilon$.
Thus, we can focus on an event $\{\max_{v}\widehat{U}_{1-\alpha}(v) < \epsilon\}$.
Since
$\hP(\max_{v}\widehat{U}_{1-\alpha}(v) < \epsilon)
= \hP(\widehat{U}_{1-\alpha}(v) < \epsilon, \forall v)$,
we can bound $\hP(\max_{v}\widehat{U}_{1-\alpha}(v) < \epsilon)$ as
\begin{equation}
\hP(\max_{v}\widehat{U}_{1-\alpha}(v) < \epsilon)
\leq
\min_{v}\hP(\widehat{U}_{1-\alpha}(v) < \epsilon)
\end{equation}
Now, consider a particular value of $v$.
Then, we have that
\begin{align*}
\hP(\widehat{U}_{1-\alpha}(v) < \epsilon)
&\leq
\hP(\widehat{U}_{1-\alpha}(v) < t)\\
&=
1 - \hP(t \leq \widehat{U}_{1-\alpha}(v))\\
& \leq
1 - (1 - \alpha) = \alpha\quad(n\to \infty)
\end{align*}
where the last inequality uses the fact that asymptotically $\widehat{U}_{1-\alpha}$ is a $(1-\alpha)$ level confidence interval (Lemma~\ref{lemma:validity-confidence-set}).

Case 2: $t \leq  - \epsilon$.
In this case, we focus on the other event $\{\min_{v}\widehat{L}_{1-\alpha}(v) \geq - \epsilon \}$.
Since $\hP(\inf_{v}\widehat{L}_{1-\alpha}(v) \geq - \epsilon) \leq \min_{v}
\hP(\widehat{L}_{1-\alpha}(v) \geq - \epsilon)$, we have that
\begin{align*}
\hP(\widehat{L}_{1-\alpha}(v) \geq - \epsilon)
&\leq
\hP(\widehat{L}_{1-\alpha}(v) \geq t)\\
&=
1 - \hP(\widehat{L}_{1-\alpha}(v) \leq t)\\
&\leq
1 - (1- \alpha) = \alpha\quad(n\to \infty)
\end{align*}
where the last inequality is a direct application of Lemma~\ref{lemma:validity-confidence-set}.
\end{proof}

\section{Extensions}\label{appendix:extensions}
\setcounter{figure}{0}

\subsection{Other estimands}\label{sec:extensions}
This section provides an extension of the proposed methodology.
Following the recent developments in the literature on causal inference with ordinal outcome,
where more interpretable estimands have been proposed,
I show how to apply the proposed method to those new estimands.

In the above section, we have focused on particular causal estimands, $\zeta_{j}$ and $\Delta_{j}$, which is a difference in probabilities defined for a  specific reference category $j$.
One issue of this quantity $\Delta_{j}$  is that depending on the choice of  reference category $j$,
the sign of the estimate might flip.
This means that interpretation becomes tricky because it is completely possible to observe $\Delta_{j} > 0$ and $\Delta_{j'} < 0$ for $j \neq j'$  with the same data.
From this, we cannot even conclude that the treatment had ``positive'' effect or not.

To circumvent this problem associated with $\Delta_{j}$,
recent papers turn to different kinds of estimands for ordinal outcome
\citep[e.g.,][]{volfovsky2015causal, chiba2017sharp,lu2018treatment, lu2018partial}.
For example, \cite{lu2018treatment} considers the following estimand:
\begin{equation}
\eta = \hP(Y_{i1}(1) \geq Y_{i1}(0) \mid D_{i} = 1)
\end{equation}
This is a proportion units of who benefit from (or at least not harmed by) the treatment.
In our example, $\eta$ captures the proportion of treated respondents
who change their opinion toward gun control (regardless of their baseline attitudes)
after experiencing the mass shooting in their neighborhood.
This quantity is easy to interpret because it does not depend on the baseline attitude
and smaller value of $\eta$ indicates that there are few  respondents who change their opinion towards gun controls.
However, since $\eta$ depends on the joint distribution of potential outcomes, $(Y_{i1}(0), Y_{i1}(1))$, it cannot be point identified.
\cite{lu2018treatment} provides a closed form bound for this estimand using only the marginal distribution of $Y_{i1}(1)$ and $Y_{i1}(0)$.

The benefit of the proposed method over the dichotomizing-the-outcome approach is that
it identifies the entire distribution of $Y_{i1}(0) \mid D_{i} = 1$,
whereas the information about the entire distribution is lost when we use the coarsened outcome.
This implies that we can estimate the bound based on $\widehat{\theta}_{11}$ given in Equation~\eqref{eq:second-stage-theta}.
Following the result of \cite{lu2018treatment}, we can estimate the bound $[\widehat{\underline{\eta}}, \widehat{\overline{\eta}}]$ as
\begin{align*}
\widehat{\underline{\eta}}
= \max_{0\leq j \leq J-1}
\Big\{
[\Phi(\kappa_{j+1}\mid \widehat{\theta}_{11}) - \Phi(\kappa_{j}\mid \widehat{\theta}_{11})] + \widehat{\Delta}_{j}
\Big\}\quad
\text{and}\quad
\widehat{\overline{\eta}} = 1 +
\min_{0 \leq j \leq J-1} \widehat{\Delta}_{j}
\end{align*}
where $\widehat{\Delta}_{j}$ is the estimate of the distributional effect,
and $\widehat{\Delta}_{0} = 0$ by construction.
We can see from the formula that the upper bound is informative
as long as there is at least one reference category $j$ such that $\widehat{\Delta}_{j} < 0$ for $j = 1, \ldots, J-1$.
Otherwise, $\Delta_{0} = 0$ will be the minimum and thus we get  the non-informative upper bound $\widehat{\overline{\eta}} = 1$.


\subsection{Time-varying covariates}
Researchers might want to incorporate time-varying covariates
into the analysis to further gain efficiency.
I discuss that the parametric specification of the proposed method
allows the use of such covariates for analysis.
Although sometimes appealing, I emphasize that
parametric specification introduces additional assumptions for the analysis.

Let $\mathbf{X}_{it} \in \R^{p}$ denote a $p$
dimensional vector of time varying covariates.
We can model the mean and the variance of the latent utilities as
\begin{equation*}
\mu_{it} = \*Z^{\top}_{it}\bm{\gamma}_{0}
\quad \text{and}\quad
\sigma_{it} = \exp(\*Z^{\top}_{it}\bm{\gamma}_{1})
\end{equation*}
where $\mathbf{Z}_{it} = (1, D_{i}, t, D_{i}\cdot t, \mathbf{X}^{\top}_{it})^{\top}$.
Then, we can express the observed choice probability as
\begin{equation*}
\hP(Y_{it} = j\mid \*Z_{it}) =
  \Phi(\kappa_{j+1} \mid \mu_{it}, \sigma_{it}) -
  \Phi(\kappa_{j} \mid \mu_{it}, \sigma_{it}).
\end{equation*}


We estimate parameters $\bm{\gamma} = (\bm{\gamma}^{\top}_{0}, \bm{\gamma}^{\top}_{1})^{\top}$ by the maximum likelihood.
\begin{equation*}
\widehat{\bm{\gamma}}  = \argmax_{\bm{\gamma}}\sum^{n}_{i=1}\sum^{1}_{t=0}\sum^{J-1}_{j=0}\mathbf{1}\{Y_{it} = j\}\log
\Big\{
\Phi(\kappa_{j+1}\mid \mathbf{Z}_{it}, \bm{\gamma}) - \Phi(\kappa_{j}\mid \mathbf{Z}_{it}, \bm{\gamma})
\Big\}.
\end{equation*}

Finally, the quantities of interest is estimated
by taking the sample average of predicted probabilities.
\begin{equation*}
  \widehat{\Delta}_{j} =
  \frac{1}{n_{1}}\sum^{n}_{i=1}D_{i}\Big\{
    \hP(Y_{i1} \geq j \mid D_{i} = 1, \*X_{i1}, \widehat{\bm{\gamma}}) - \hP(Y_{i1} \geq j \mid D_{i} = 0, \*X_{i1}, \widehat{\bm{\gamma}})
  \Big\}
\end{equation*}
Note that the marginalization of covariates is with respect to the distribution for the treated,
because our estimand $\Delta_{j}$ is defined for the treated units.

\section{Dichotomizing the Outcome: An Example}
\label{appendix:binarize-outcome}
\setcounter{figure}{0}

Coarsening the ordinal outcome into a binary variable is a common practice often employed in applied works.
Although this procedure allows scholars to utilize the standard linear DID,
I will demonstrate in this section that this operation leads to an inconsistent result depending on how the new variable is created.

To see this, let's consider a simple example with three categories, $Y_{it} \in \{0, 1, 2\}$.
There are two possible ways to transform this variable into a binary outcome, $\widetilde{Y}_{it} = \mathbf{1}\{Y_{it} = 2\}$ or $\check{Y}_{it} = \mathbf{1}\{Y_{it} \geq 1\}$.
Under this setup, we require two separate parallel trends assumptions for identification,
\begin{align*}
&\text{PT1}\colon \E[\widetilde{Y}_{i1}(0) - \widetilde{Y}_{i0}(0) \mid D_{i} = 1] = \E[\widetilde{Y}_{i1}(0) - \widetilde{Y}_{i0}(0) \mid D_{i} = 0]\\
&\text{PT2}\colon \E[\check{Y}_{i1}(0) - \check{Y}_{i0}(0) \mid D_{i} = 1] = \E[\check{Y}_{i1}(0) - \check{Y}_{i0}(0) \mid D_{i} = 0]
\end{align*}

PT1 identifies $\Delta_{2} = \Pr(Y_{i1}(1)  = 2 \mid D_{i} = 1) - \Pr(Y_{i1}(0)  = 2 \mid D_{i} = 1)$
and PT2 identifies $\Delta_{1} = \Pr(Y_{i1}(1)  \geq 1 \mid D_{i} = 1) - \Pr(Y_{i1}(0)  \geq 1 \mid D_{i} = 1)$.
Also let $\pi^{(t)}_{j|d} = \Pr(Y_{it}(0) = j \mid D_{i} = d)$ be the conditional probability for the potential outcome under the control.

Now consider the following data generating process which specifies the marginal distributions
for the potential outcome:
\begin{align*}
\left(\pi^{(0)}_{j=0|d = 1}, \pi^{(0)}_{j=1|d= 1}, \pi^{(0)}_{j=2|d=1}\right) &= (0.3, 0.5, 0.2)\\
\left(\pi^{(1)}_{j=0|d = 1}, \pi^{(1)}_{j=1|d= 1}, \pi^{(1)}_{j=2|d=1}\right) &= (0.2, 0.5, 0.3)\\
\left(\pi^{(0)}_{j=0|d = 0}, \pi^{(0)}_{j=1|d= 0}, \pi^{(0)}_{j=2|d=0}\right) &= (0.2, 0.5, 0.3)\\
\left(\pi^{(1)}_{j=0|d = 0}, \pi^{(1)}_{j=1|d= 0}, \pi^{(1)}_{j=2|d=0}\right) &= (0.2, 0.4, 0.4)
\end{align*}

Under this DGP, PT1 holds since
\begin{align*}
&\E[\widetilde{Y}_{i1}(0) - \widetilde{Y}_{i0}(0) \mid D_{i} = 1] - \E[\widetilde{Y}_{i1}(0) - \widetilde{Y}_{i0}(0) \mid D_{i} = 0]\\
&= [\pi^{(1)}_{2|1} - \pi^{(0)}_{2|1}] - [\pi^{(1)}_{2|0} - \pi^{(0)}_{2|0}]\\
&= 0.1 - 0.1 = 0.
\end{align*}
However, PT2 does not hold because
\begin{align*}
&\E[\check{Y}_{i1}(0) - \check{Y}_{i0}(0) \mid D_{i} = 1] - \E[\check{Y}_{i1}(0) - \check{Y}_{i0}(0) \mid D_{i} = 0]\\
&= \Big\{[\pi^{(1)}_{2|1} + \pi^{(1)}_{1|1}] - [\pi^{(0)}_{2|1} + \pi^{(0)}_{1|1}] \Big\} -
\Big\{[\pi^{(1)}_{2|0} + \pi^{(1)}_{1|0}] - [\pi^{(0)}_{2|0} + \pi^{(0)}_{1|0}]\Big\}\\
&= \{(0.3 + 0.5) - (0.2 + 0.5)\} - \{(0.4 + 0.4) - (0.3 + 0.5)\} = 0.1.
\end{align*}

Thus, this example demonstrate that
with the same data, $\Delta_{2}$ can be consistently estimated with $\widetilde{Y}_{it}$
but $\Delta_{1}$ cannot be estimated without bias,
even though we have the same data generating process behind the two transformations.
Obviously, it is also trivial to construct an example where PT2 holds but PT1 does not.

\section{Additional Empirical Results}
\label{appendix:subset}
\setcounter{figure}{0}

\subsection{Different coding of partisanship}

Given the partisan nature of the gun control policies,
it is important to understand if the effect of mass shootings could differ by respondents' party identification.
The coding of partisanship, however, slightly different across studies.
In the main text, I relied on the 3-point scale party identification question (\texttt{pid3}),
which asks respondents the following question,
\begin{quote}
\texttt{Generally speaking, do you think of yourself as a ...?}

\texttt{
(1) Democrat;
(2) Republican;
(3) Independent;
}

\texttt{
(4) Other;
(5) Not sure;
(8) Skipped.
}
\end{quote}
In the analysis, I exclude respondents who do not select option \texttt{(1)}, \texttt{(2)} or \texttt{(3)}.

\begin{figure}[htb]
  \centerline{\includegraphics[scale=0.8]{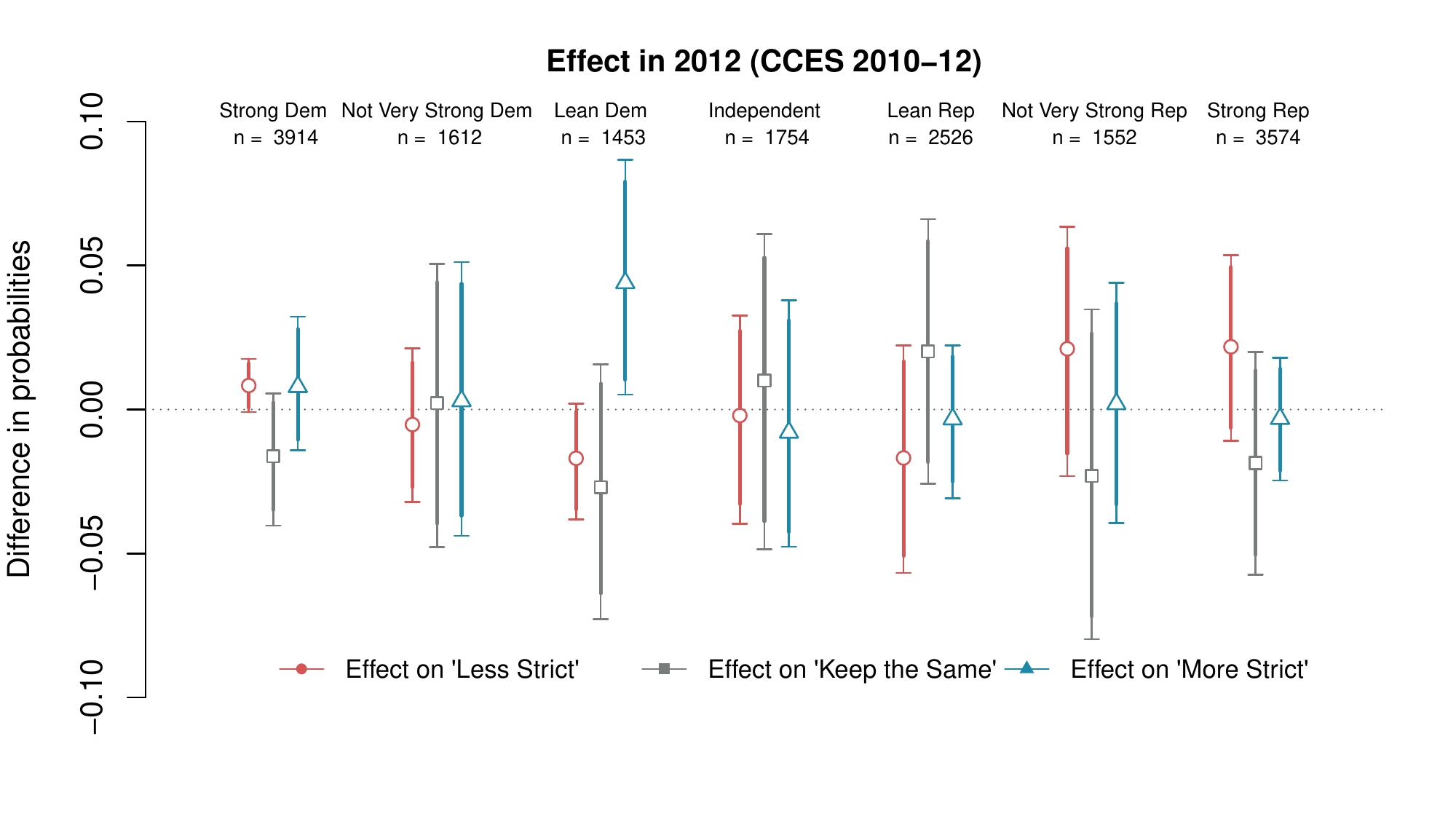}}
  \caption{Estimated effect based on a 7-point scale partisanship (\texttt{pid7}).}
  \label{fig:twowaves-full-pid7}
\end{figure}

In the survey, respondents are also asked to place themselves
on a granular scale of partisanship (\texttt{pid7}):
\begin{quote}
\texttt{
  Would you call yourself a strong Democrat or a not very strong Democrat? Would you call yourself a strong Republican or a not very strong Republican? Do you think of yourself as
  closer to the Democratic or the Republican Party?
}

\texttt{
(1) Strong Democrat;
(2) Not very strong Democrat;
(3) Lean Democrat;
}

\texttt{
(4) Independent;
(5) Lean Republican;
(6) Not very strong Republican;
}

\texttt{
(7) Strong Republican;
(8) Not sure;
(98) Skipped.
}
\end{quote}
Figure~\ref{fig:twowaves-full-pid7} shows the result of the analyses that use \texttt{pid7} to construct partisan subgroups.
We can see that the effect is concentrated among ``Lean Democrats.''

On the other hand, \cite{barney2019reexamining} constructs the partisanship variable based on \texttt{pid7} but collapses it into a 3-point scale, ``Democrat'', ``Independent'' and ``Republican''.
The major difference from the self-reported \texttt{pid3}
is that ``leaners'' are classified as partisans (i.e., not independents).
\begin{figure}[htb]
  \centerline{\includegraphics[scale=0.8]{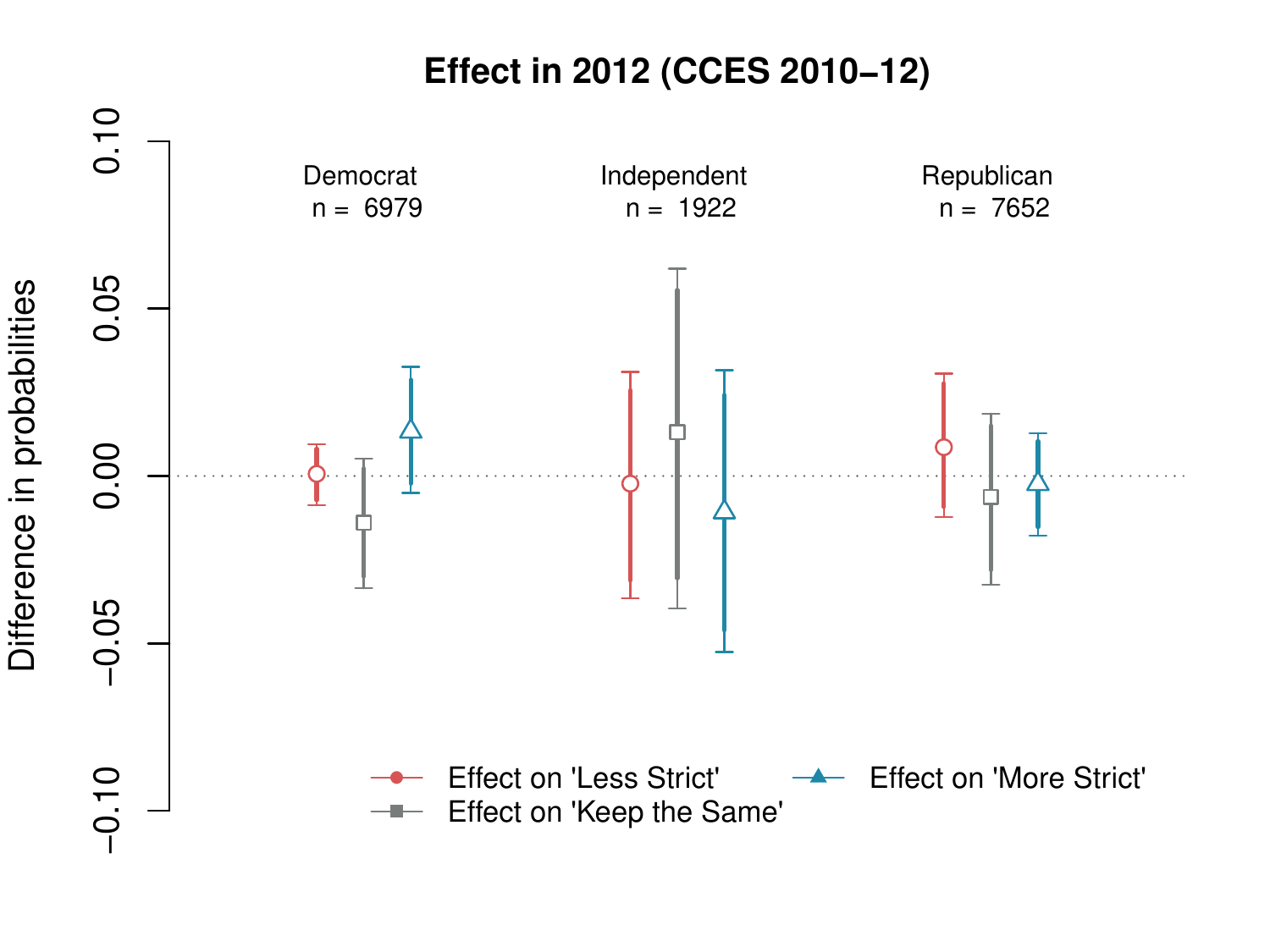}}
  \centerline{\includegraphics[scale=0.8]{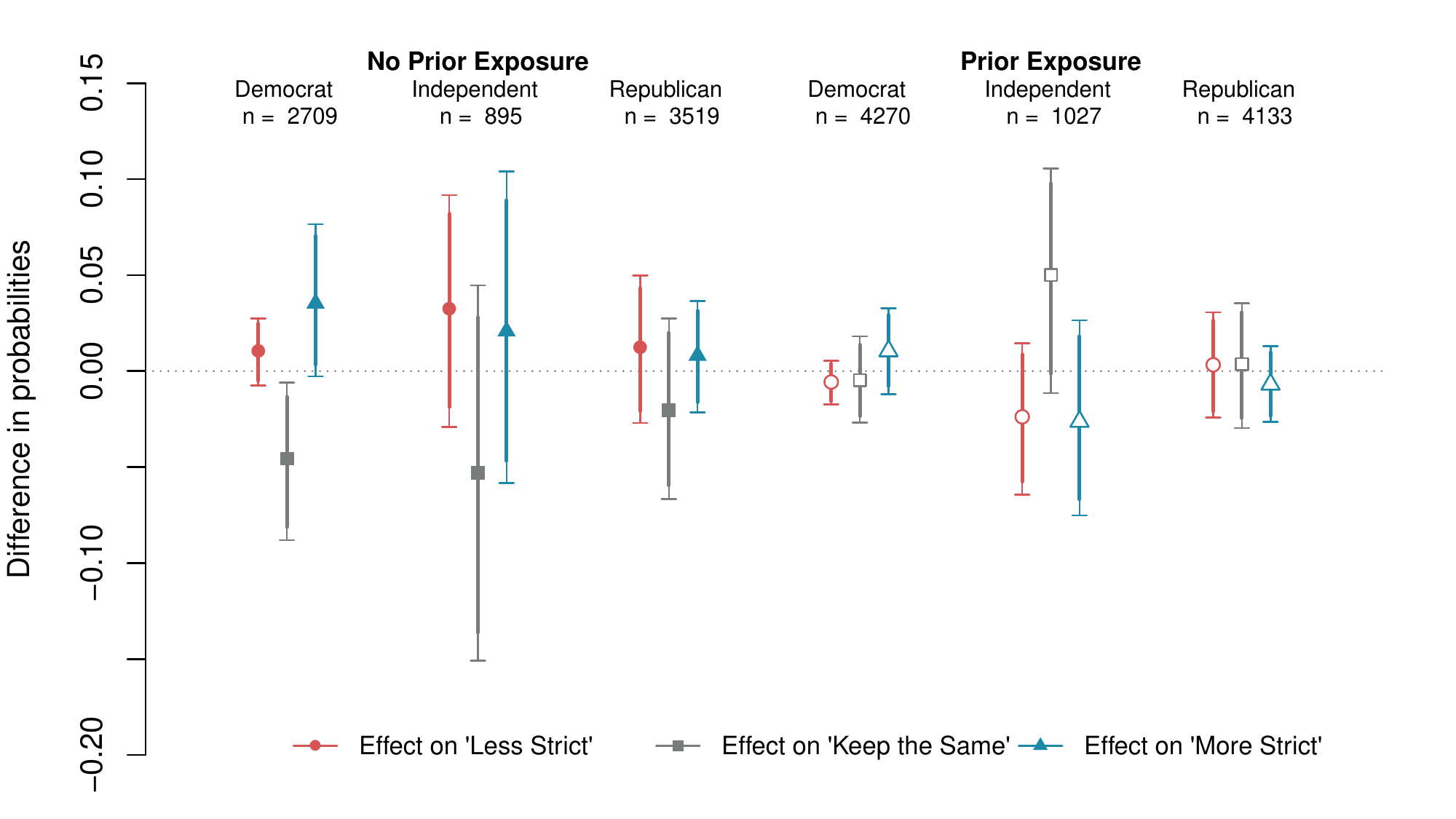}}
  \caption{Estimated effects based on partisanship based on a coding used in \cite{barney2019reexamining}.}
  \label{fig:twowaves-full-party2010}
\end{figure}
Figure~\ref{fig:twowaves-full-party2010} reports the estimate
based on the partisan coding of \cite{barney2019reexamining}.

%
%
%

\subsection{Different estimands}

Figure~\ref{fig:other-estimands} shows estimated bounds on $\tau = \Pr(Y_{i1}(1) \geq Y_{i1}(0) \mid D_{i} = 1)$ (gray lines) and $\eta = \Pr(Y_{i1}(1) > Y_{i1}(0) \mid D_{i} = 1)$ (red lines).
The explicit formula of the bound for each estimand is given in Section~\ref{sec:extensions}.
\begin{figure}[ht]
    \centerline{\includegraphics[scale=0.8]{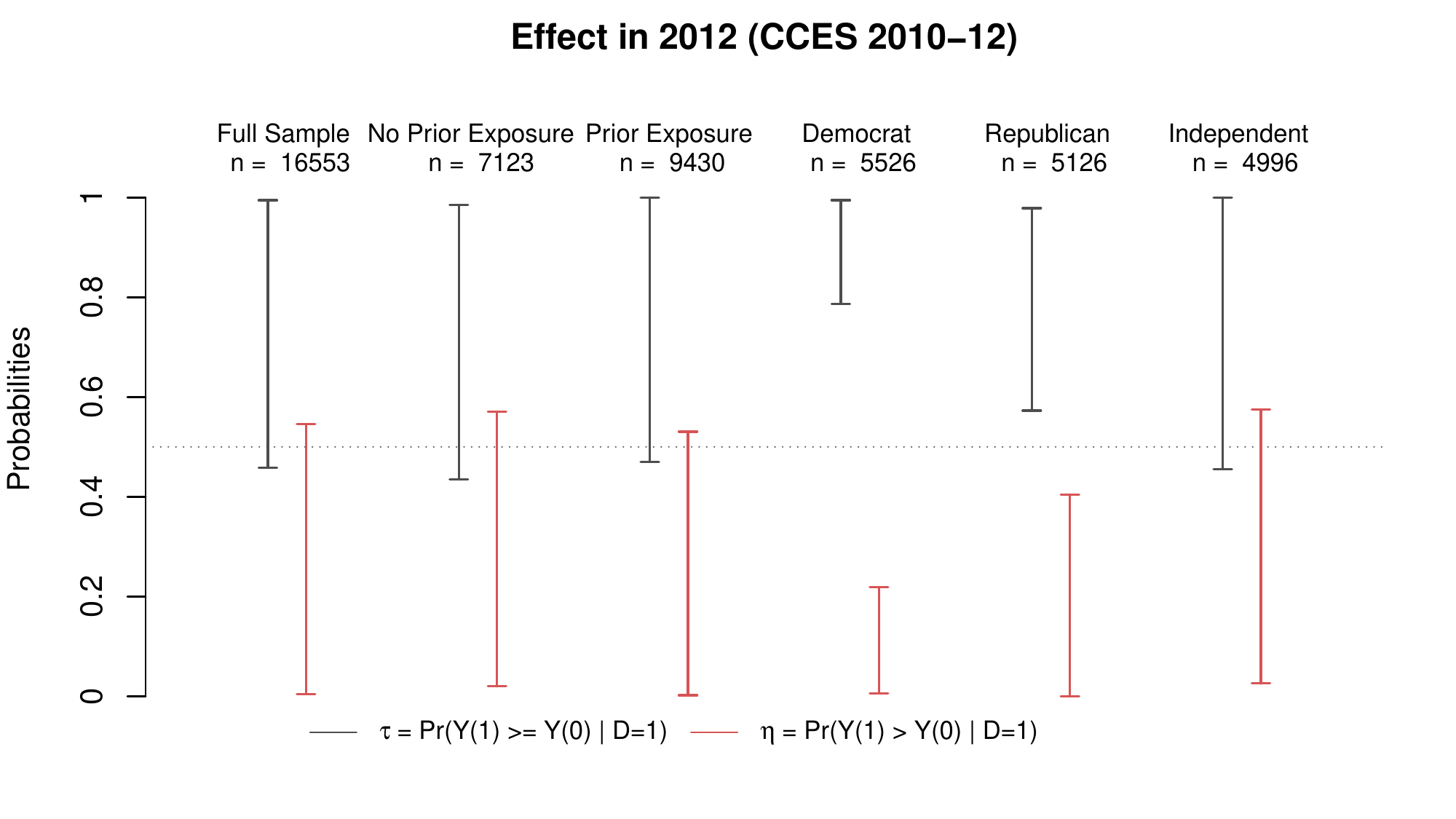}}
  \caption{Estimated bound for $\tau$ (gray lines) and $\eta$ (red lines).}
  \label{fig:other-estimands}
\end{figure}
I find that bounds for the two estimands diverge suggesting there are many observations that have $Y_{i1}(1) = Y_{i1}(0)$ in the population.
For example, among Democrats, the bound for $\tau$ is between around 0.8 and 1.0
which might suggest that proportion of Democrats who supports gun control when treated is extremely high.
However, the bound for $\eta$ is between around 0.2 and 0 which might suggest that proportion of Democrats who
have \textit{strictly} prefer a more strict gun control is very small.
This is a problem discussed in \cite{lu2018treatment} where $\tau$ and $\eta$ cannot be informative
when there are many units who does not change attitudes by the treatment.
Therefore, it appears that we need to turn to other estimators that avoid this problem \citep[e.g.,][]{chiba2017sharp,lu2018partial}.

\subsection{Different treatment cutoff}

Although \cite{newman2019mass} define the exposure by the 100-mile cutoff, \cite{barney2019reexamining} consider different threshold to assess the robustness of the results.
Following their analysis,  I consider an alternative threshold
of 25 miles.
Figure~\ref{fig:twowave-main-25mi} shows the result.
Note that the ``prior exposure'' is also defined by the 25-mile cutoff.
\begin{figure}[ht]
  \centerline{
    \includegraphics[scale=0.8]{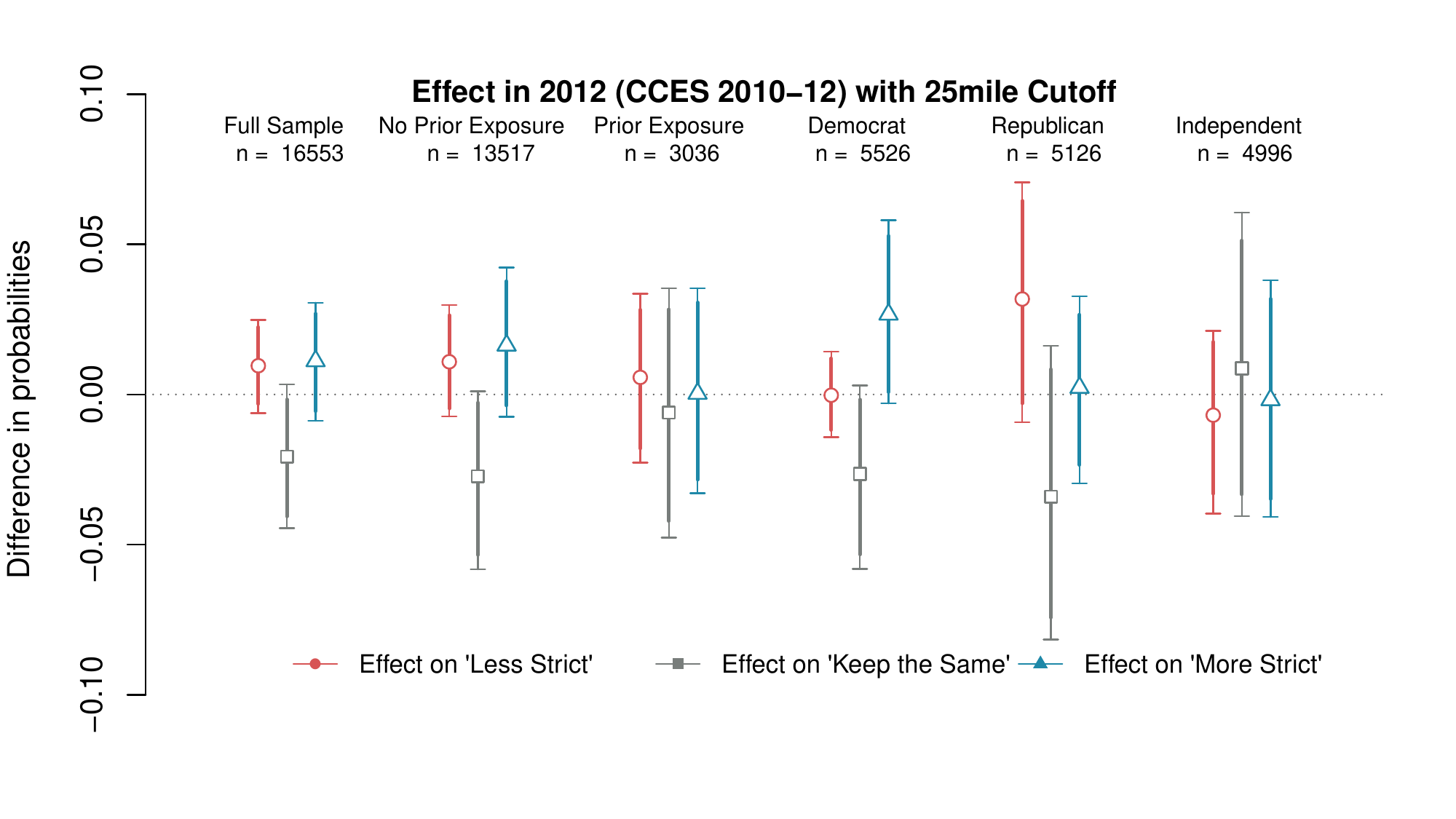}
  }
  \centerline{
    \includegraphics[scale=0.8]{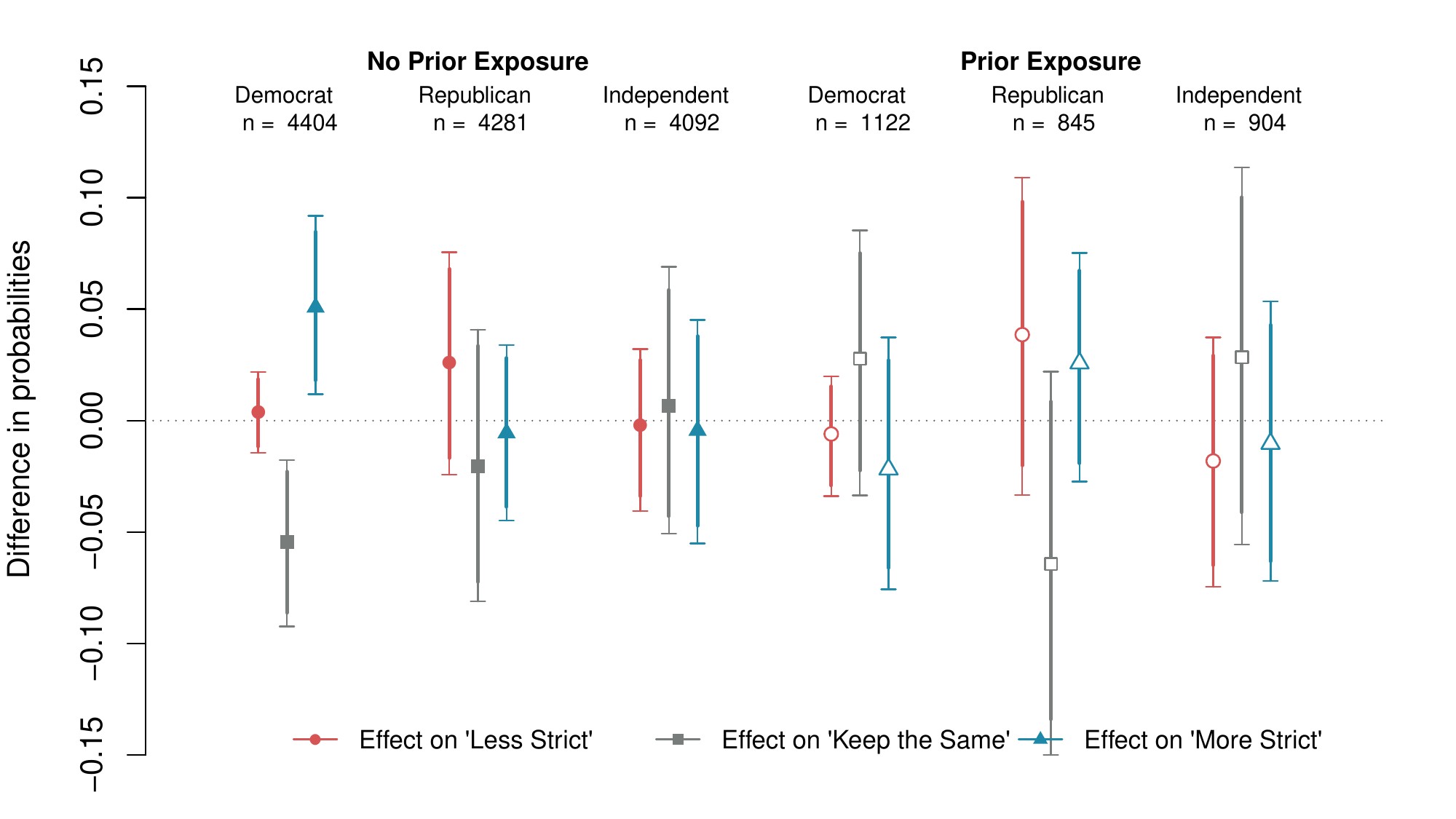}
  }
  \caption{Estimated causal effect with \textbf{25-mile} cutoff as the threshold for the exposure. Circles are the estimate of $\zeta_{0}$, square are the estimate of $\zeta_{1}$ and triangles are the estimate of $\zeta_{2}$. Thin (thick) lines indicate 90\% (95\%) confidence intervals.}
  \label{fig:twowave-main-25mi}
\end{figure}

%

\subsection{Two year subset of three-wave panel}

In this section, I present two additional results based on the two-wave panel from CCES.
Figure~\ref{fig:twowaves-subset} shows estimated effects based on the sub-group analysis taking interactions between the prior exposure variable and partisan identification.
In the figure, circles (triangles) show estimates of $\Delta_{2}$ ($\Delta_{1}$) and thin (thick) line indicate 90\% (95\%) confidence intervals.
Confidence intervals are computed based on block bootstraps (blocked at the zip code level).
\begin{figure}[h]
  \centerline{\includegraphics[scale=0.8]{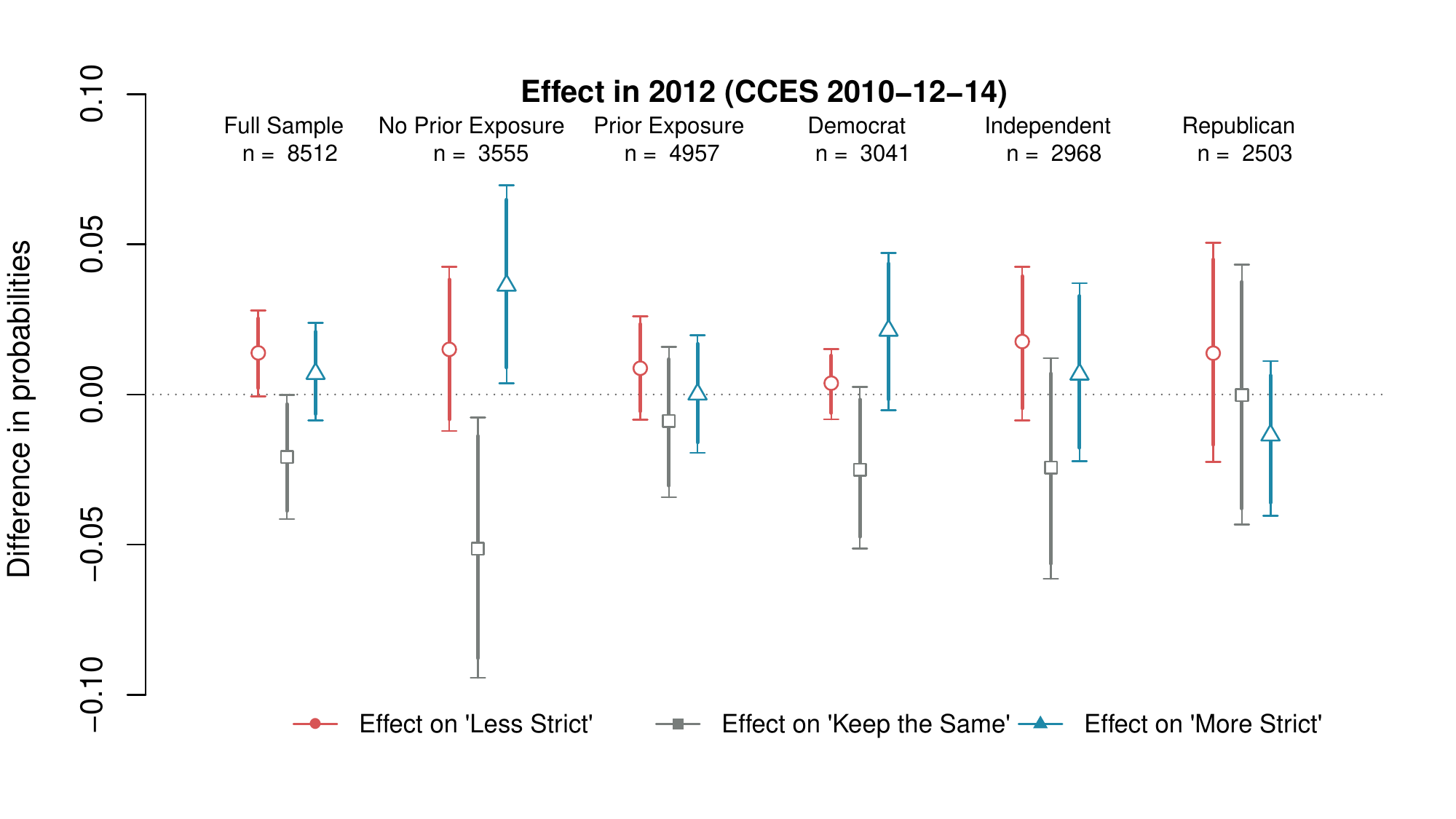}}
  \centerline{\includegraphics[scale=0.8]{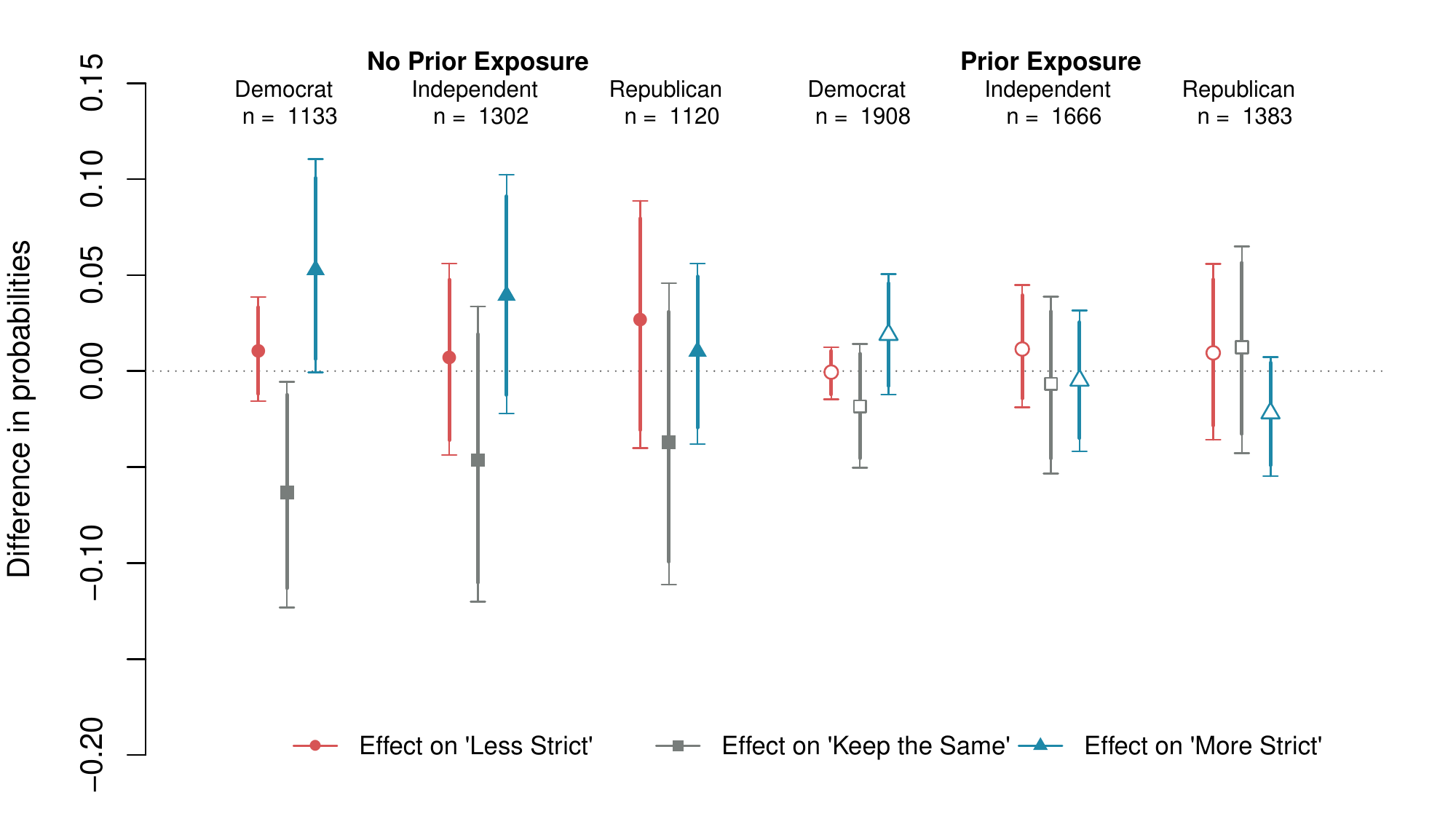}}
  \caption{Estimated effects using the two-year subset of the three-year panel. Circles are the estimate of $\zeta_{0}$, square are the estimate of $\zeta_{1}$ and triangles are the estimate of $\zeta_{2}$. Thin (thick) lines indicate 90\% (95\%) confidence intervals.}
  \label{fig:twowaves-subset}
\end{figure}
I find that among no-prior-exposure group, the effect is concentrated among Democrats ($\Delta_{2}$ for Democrats is estimated positive and statistically different from zero at the 10\% level, while $\Delta_{1}$ is not statistically significant).
On the other hand, effects for Independents and Republicans are both indistinguishable from zero at the 10\% level (neither $\Delta_{2}$ nor $\Delta_{1}$).
I also find that effects are almost zero in the prior-exposure groups regardless of partisanship.

\section{Details of the Application}
\label{sec:detail-application}
\setcounter{figure}{0}

\paragraph{A list of method used in the original studies}

Table~\ref{tab:summary-method} summarizes the methods used in the original papers.

\begin{table}[ht]
	\centering
		\caption{Methodologies used in the original studies.
    Abbreviation: \cite{newman2019mass} as NH19, \cite{barney2019reexamining} as BS19
    and  \cite{hartman2019accounting} as HN19.}
		\label{tab:summary-method}
	\begin{tabular}{cccc}
		\toprule
						& NH19  & BS19 &  HN19\\
		\midrule
		\texttt{ordered logit (RE)} & \checkmark (with Lag DV) & \checkmark & \checkmark \\
		\texttt{ordered logit (FE)} & & & \checkmark \\
		\texttt{linear two-way FE}         & & \checkmark & \\
		\bottomrule
	\end{tabular}
\end{table}

\paragraph{Coding of mass shootings}
\cite{newman2019mass} uses the following criteria to determine if an incident constitutes a mass public shooting: ``(1) firearms as the primary weapon used, (2) attacks on non-family members of the general public and (3) attacks in which at least three or more individuals were injured or killed.'' \citep[][p.8]{newman2019mass}. See the original studies for the detail of why these criteria are selected.
Note that the definition of the ``treatment'' is slightly different between \cite{newman2019mass} and \cite{barney2019reexamining}. I follow the definition used by \cite{barney2019reexamining}; Please see \cite{barney2019reexamining} for the discussion on this point.

\paragraph{Survey outcome}
The ordering of the response categories is not exactly the same as the original question in CCES 2010--2012 panel. Originally in the survey, the choices are given as (1) More Strict; (2) Less Strict; (3) Kept As They Are (please see \texttt{CC10\_320} and \texttt{CC12\_320} in ``Guide to the 2010-12 CCES Panel Study'' available at \url{https://doi.org/10.7910/DVN/24416/79YKV2}).
In the main text, I follow the coding of \cite{newman2019mass} and \cite{barney2019reexamining} and treat ``Kept As They Are'' as the middle category.

Figure~\ref{fig:outcome-distribution} shows the distribution of the outcome in 2010 (top) and 2012 (bottom) where the blue bars correspond to the treatment group
and the gray bars correspond to the control group.
\begin{figure}[h]
  \centerline{\includegraphics[scale=0.8]{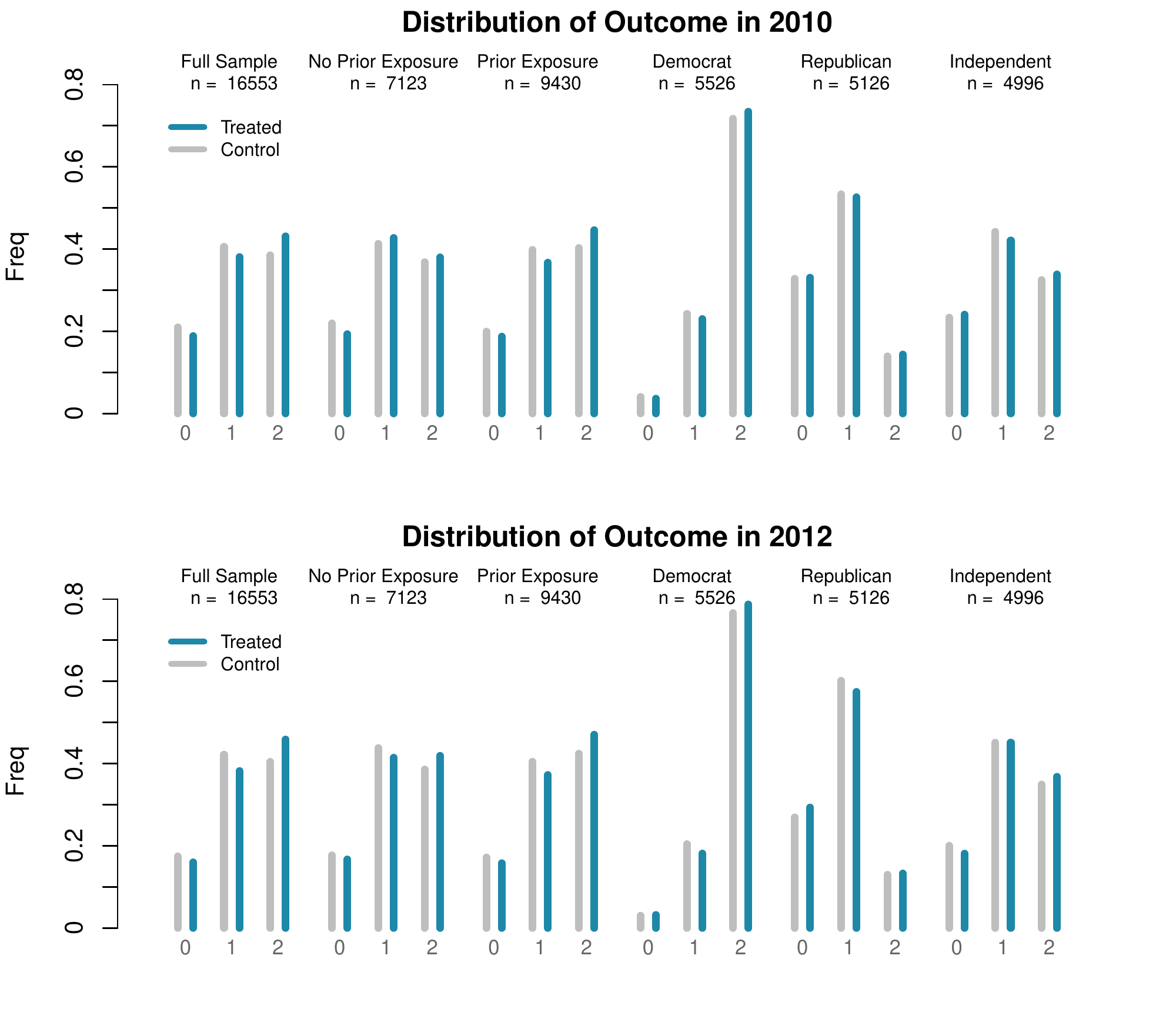}}
  \caption{Distribution of outcomes: \texttt{(0): less-strict}, \texttt{(1): kept-as-they-are} and \texttt{(2): more-strict}. The top panel shows the distribution of 2010 and the bottom panel is for 2012. Bars in blue (gray) shows distributions for the treated (control) group.}
  \label{fig:outcome-distribution}
\end{figure}

\section{Simulation Studies}\label{sec:simulation}
\setcounter{figure}{0}


In this section, I present two Monte Carlo studies to investigate finite sample performances of the proposed method.
The first simulation assesses performance of the proposed estimator for the causal effect
where  I compare the proposed estimator against the standard difference-in-differences with dichotomized outcome and the ordered probit regression.
The result shows that the proposed estimator is unbiased to the causal effects and the confidence interval has nominal coverage,
while the other two method are biased and thus the confidence intervals fail to maintain the coverage.
The second simulation studies the finite sample performance of the proposed procedure for the diagnostic in Section~\ref{sec:diagnostics}.
I demonstrate that the type I error is controlled under the range of equivalence threshold  that is compatible with the null
and that the power converges to one when the equivalence threshold is chosen reasonably.

\subsection{Estimating causal effects}

In this first simulation study, I investigate the finite sample performance of the proposed estimator
under the correct model specification.
The potential outcome is generated by first drawing the latent utilities from the normal distribution.
For the potential outcome under the control, the following set of parameters are used to generate the data:
$\theta_{00} = (-0.5, 1.5)^{\top}$, $\theta_{01} = (1, 1)^{\top}$ and $\theta_{10} = (-1.5, 2)^{\top}$.
The parameters for the counterfactual outcome $\theta_{11}$ is set according to the identification formula in Proposition~\ref{proposition:identification}.
The parameters for generating the potential outcome under the treatment, $Y_{i1}(1)$, are set as $\mu = 1.5$ and $\sigma = 1.5$.

After generating the latent utilizes,
they are transformed into categorical outcome
with $J$ categories
based on the set of cutoffs.
In this simulation, I consider $J \in \{3, 5, 7\}$
and also I vary the number of units $n \in \{1000, 2500, 5000\}$.

Since there are $J - 1$ possible treatment effects to consider, that is, $\{\Delta_{j}\}^{J-1}_{j=1}$,
estimators are evaluated on averaging the loss over $J - 1$ treatment effect estimates.
Specifically, I consider the following metrics:
\begin{align*}
\overline{\mathsf{Abs.Bias}}
&= \frac{1}{(J-1)}\sum^{J-1}_{j=1}\bigg|\frac{1}{S}\sum^{S}_{s=1}(\widehat{\Delta}^{(s)}_{j}  - \Delta_{j})\bigg|\\
\overline{\mathsf{RMSE}}
&= \frac{1}{(J-1)}\sum^{J-1}_{j=1}\bigg\{\frac{1}{S}\sum^{S}_{s=1}(\widehat{\Delta}^{(s)}_{j}  - \Delta_{j})^{2}\bigg\}^{1/2}\\
\overline{\mathsf{Coverage}}
&= \frac{1}{(J-1)S} \sum^{J-1}_{j=1}\sum^{S}_{s=1}\mathbf{1}\Big\{\Delta_{j} \in \widehat{C}^{(s)}_{j,1-\alpha/2}\Big\}
\end{align*}
where $\widehat{\Delta}^{(s)}_{j}$ is the estimate of $\Delta_{j}$ under $s$th Monte Carlo iteration
and $\widehat{C}^{(s)}_{j, 1 - \alpha/2}$ is the $100\times (1 - \alpha/2)$\% confidence interval for $\Delta_{j}$.

\begin{figure}[h]
  \centerline{\includegraphics[scale=1.2]{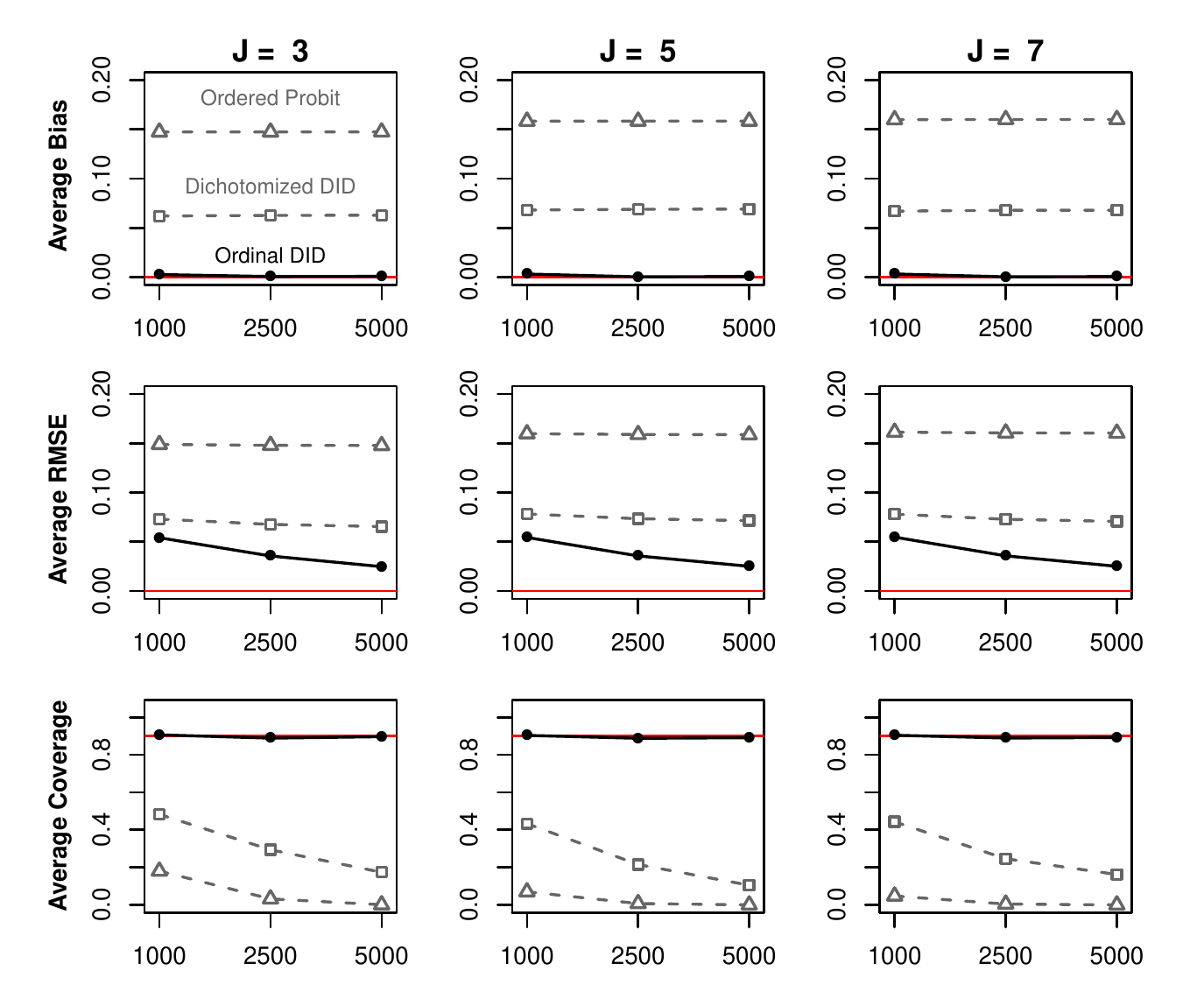}}
  \caption{Simulation Results. Top row: Absolute bias ($\overline{\textsf{Abs. Bias}}$). Middle: RMSE ($\overline{\textsf{RMSE}}$). Bottom: Coverage based on the 90\% confidence interval ($\overline{\textsf{Coverage}}$).
          As expected from the general result of Maximum Likelihood, the estimate is unbiased and the confidence interval maintains nominal coverage under the correct specification.}
  \label{fig:sim-point-estimate}
\end{figure}

Figure~\ref{fig:sim-point-estimate} shows the result.
Left panel shows the absolute bias of the estimate.
We see that the bias is larger when the sample is relatively small for $J = 5$ as the number of observations
in each category tend to be smaller.
However, in general, estimates are unbiased.
Middle panel shows the RMSE.
It shows that RMSE decreases as the sample size increases and the variance is smaller when the number of categories are smaller.
Finally, the right panel shows the coverage of 90\% confidence intervals.
We can see that for both cases, confidence intervals have nominal coverage regardless of sample size.

\subsection{Testing procedure}
In this section, I investigate a finite sample performance of the proposed testing procedure.
Specifically, I conduct a Monte Carlo simulation with a scenario that Assumption~\ref{assumption:quantile}
is violated in the pre-treatment periods.
Outcomes are generated first by simulating the latent utilities.
The latent utilities are simulated according to the normal distribution with mean $\mu_{dt}$
and variance $\sigma^{2}_{dt}$,
\begin{equation}
Y^{*}_{dt} \sim \mathcal{N}(\mu_{dt}, \sigma^{2}_{dt})
\end{equation}
I set $\theta_{00} = (-0.5, 1.5)^{\top}$, $\theta_{01} = (1, 1)^{\top}$ , $\theta_{10} = (-1.5, 2)^{\top}$  and $\theta_{11} = (1.5, 1.5)^{\top}$.
This parameter specification leads to the true maximum deviation $t_{\max} \approx 1.4$.
Clearly, this does not satisfy Assumption~\ref{assumption:quantile} which requires $t_{\max} = 0$.

After simulating $Y^{*}_{dt}$, categorial outcomes are generated based on cutoffs $\kappa$.
In this simulation, I consider three cases: $J \in \{3, 5, 7\}$.
For $J = 3$ and $J= 5$, the same cutoffs as in the previous simulation are used.
For $J = 7$, I use $\kappa = (-0.5, -0.2,  0.1,  0.4,  0.7,  1.0)^{\top}$.

\begin{figure}[htb]
  \centerline{\includegraphics[scale=0.9]{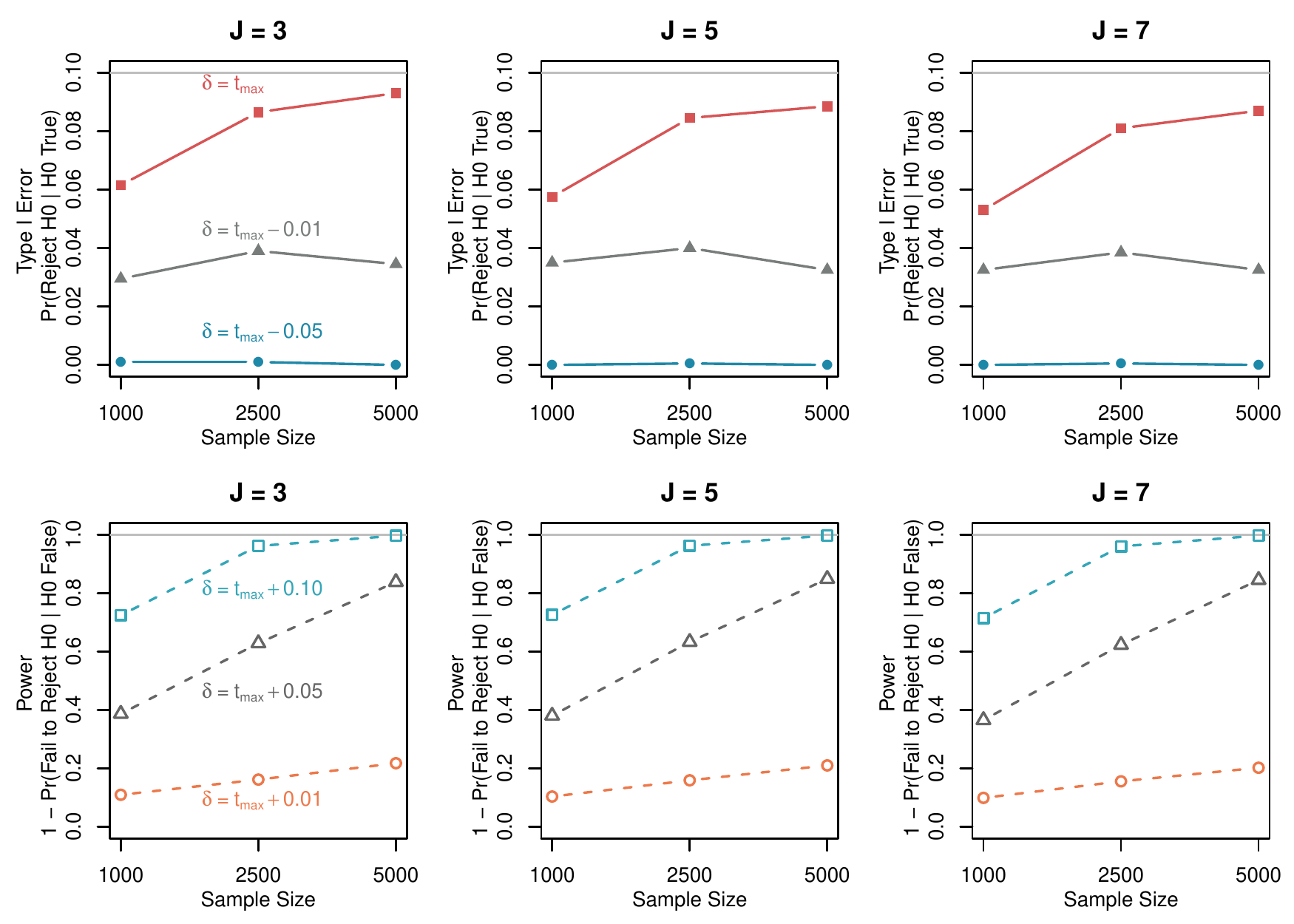}}
  \caption{Finite sample performance of the proposed testing procedure: Type I error (upper panel) and power (lower panel).
  The upper panel shows rejection probabilities of $H_{0}$ under thresholds that are compatible with $H_{0}$ (i.e., $H_{0}$ is true: $t_{\max} \geq \delta$ holds).
  The lower panel shows the power curve in a range of $\delta$ that is not compatible with $H_{0}$ (i.e., $H_{1}$ is true: $t_{\max} < \delta$).}
  \label{fig:simulation-equivalence}
\end{figure}

In order to assess how the test performs depending on a choice of equivalence thresholds,
I vary $\delta$.
The value of $\delta$ is chosen  such that  in some range of $\delta$,
the null of $t_{\max} \geq \delta$ is true
and in other range of $\delta$ the null is false (i.e., $t_{\max} < \delta$).
For the range of $\delta$ that satisfies $t_{\max} \geq \delta$,
I set $\delta \in \{t_{\max} - 0.05, t_{\max} - 0.01, t_{\max}\}$.
We would expect that the test is more likely to reject the alternative when $\delta = t_{\max} - 0.05$.
For the range of $\delta$ that does not satisfy $t_{\max} \geq \delta$,
I set $\delta \in \{t_{\max}+0.05, t_{\max}+0.01, t_{\max}+0.10\}$.
Among them, we expect that the test can reject the null most likely when $\delta = t_{\max}+0.10$.

Figure~\ref{fig:simulation-equivalence} shows the results for this simulation.
The upper panels show type I errors
when the choice of $\delta$ is consistent with the data (i.e., $t_{\max} \geq \delta$).
Recall that the data is simulated such that the equivalence does not hold.
Thus, we would expect that the null, $H_{0}\colon t_{\max} \geq \delta$, is not rejected
and the probability of  falsely rejecting the null (type I error) should be less than $\alpha$.
In fact, we can see that the proposed testing procedure controls the type I error.
In addition, the smaller value of $\delta$ (i.e., a smaller rejection region for $H_{0}$) leads to lower type I error.
The lower panels show results for type II errors
when the choice of $\delta$ is not consistent with the data (i.e., $H_{0}$ is false).
We can see that the test struggles to reject the null when $\delta$ is set to close to $t_{\max}$.
When $\delta$ is set to a value far away from $t_{\max}$, type II error converges to zero as sample size increases.

\end{document}